\documentclass[runningheads]{llncs}

\usepackage[T1]{fontenc}
\usepackage[utf8]{inputenc}
\usepackage{times}
\usepackage{url}

\usepackage{amssymb}
\usepackage{amsmath,amsfonts,amscd,upref,amstext,mathtools}
\usepackage{bbm}

\allowdisplaybreaks

\usepackage{graphicx}
\usepackage{tikz}
\usetikzlibrary{trees,positioning,quotes}

\usepackage{caption}
\usepackage{subcaption}

\usepackage{cancel}
\usepackage{comment}
\usepackage{paralist}
\usepackage{newunicodechar}
\usepackage{xspace}
\usepackage[inline]{enumitem}
\usepackage{algpseudocodex}
\usepackage{algorithm}
\usepackage{orcidlink}
\usepackage[inkscapelatex=false]{svg}
\usepackage{xcolor}

\newunicodechar{ȩ}{\c{e}}

\usepackage[hyperpageref]{backref}

\PassOptionsToPackage{
    colorlinks=true,
    urlcolor=blue,
    citecolor=red,
    linkcolor=blue,
    linktocpage,
    pdfpagelabels,
    bookmarksnumbered,
    bookmarksopen
}{hyperref}
\usepackage{hyperref}

\usepackage[noabbrev]{cleveref}

\makeatletter
\def\@citex[#1]#2{\leavevmode
  \let\@citea\@empty
  \@cite{\@for\@citeb:=#2\do
    {\@citea\def\@citea{,\penalty\@m\ }%
\edef\magic##1{\let##1\expandafter\noexpand\csname bibalias@\@citeb\endcsname}%
\magic\tmp \ifx\tmp\relax\else \let\@citeb\tmp\fi
     \edef\@citeb{\expandafter\@firstofone\@citeb\@empty}%
     \if@filesw\immediate\write\@auxout{\string\citation{\@citeb}}\fi
     \@ifundefined{b@\@citeb}{\hbox{\reset@font\bfseries ?}%
       \G@refundefinedtrue
       \@latex@warning
         {Citation `\@citeb' on page \thepage \space undefined}}%
       {\@cite@ofmt{\csname b@\@citeb\endcsname}}}}{#1}}
\def\bibalias#1#2{\expandafter\def\csname bibalias@#1\endcsname{#2}}
\makeatother

\usepackage[colorinlistoftodos,prependcaption,textsize=tiny]{todonotes}

\newcommand{\R}{\mathbb{R}}
\newcommand{\Z}{\mathbb{Z}}

\newcommand{\eps}{\varepsilon}
\newcommand{\ve}{\varepsilon}

\newcommand{\mint}{\mathsf{int}}
\newcommand{\Mint}{M_{\mint}}
\newcommand{\chimax}{\chi_{\max}}

\newcommand{\lpmat}{\text{LP}_{\text{mat}}}

\newcommand{\I}{\mathcal{I}}
\newcommand{\M}{\mathcal{M}}
\newcommand{\Pc}{\mathcal P}
\newcommand{\dt}{\delta}
\newcommand{\Dt}{\Delta}
\newcommand{\poly}{\operatorname{poly}}
\newcommand{\w}{\omega}

\newcommand{\OPT}{\mathit{OPT}}

\newcommand{\sm}{\setminus}
\newcommand{\es}{\emptyset}
\newcommand{\sse}{\subseteq}
\newcommand{\assign}{\leftarrow}

\newcommand{\np}{{\em NP}\xspace}
\newcommand{\nphard}{\np-hard\xspace}

\newcommand{\mat}{M}
\newcommand{\gset}{U}
\newcommand{\inds}{\I}
\newcommand{\rk}{r}

\newcommand{\bmat}{\overline\mat}

\newcommand{\matref}{\mat'}
\newcommand{\gsetref}{\gset'}
\newcommand{\indsref}{\inds'}

\newcommand{\mbase}{\widetilde\mat}
\newcommand{\gbase}{\widetilde\gset}
\newcommand{\indbase}{\widetilde\inds}

\newcommand{\losz}{\textsc{LOSZ}\xspace}
\newcommand{\nbr}{N}
\newcommand{\bon}{\mathbbm{1}}
\newcommand{\baseP}{\Pc_{B}}

\newcommand{\ceil}[1]{\bigl\lceil#1\bigr\rceil}

\newcommand{\IfThen}[1]{\algorithmicif\ #1 \algorithmicthen}

\makeatletter
\let\c@corollary\c@theorem

\let\p@corollary\p@theorem

\let\c@lemma\c@theorem

\let\p@lemma\p@theorem
\makeatother

\newtheorem{fact}[theorem]{Fact}
\newtheorem{openquestion}{Question}

\newcommand{\mnotein}[1]{\todo[linecolor=teal,backgroundcolor=yellow!25,bordercolor=teal,inline]{\textbf{MZ:~}#1}}
\newcommand{\Stephen}[1]{\todo[color=red!30]{\textbf{Stephen:} #1}}

\newcommand{\kirk}[1]{\todo[color=green!30]{\textbf{Kirk:} #1}}
\newcommand{\swamy}[1]{\todo[color=blue!20,inline]{\textbf{Swamy:} #1}}
\newcommand{\Swamy}[1]{\todo[color=blue!20]{\textbf{Swamy:} #1}}

\title{Approximation Algorithms for Matroid-Intersection Coloring with Applications to Rota's Basis Conjecture}

\author{Stephen Arndt\inst{1}\orcidlink{0009-0008-2847-0721} 
\and Benjamin Moseley\inst{1}\orcidlink{0000-0001-8162-017X} 
\and Kirk Pruhs\inst{2}\orcidlink{0000-0001-5680-1753} 
\and Chaitanya Swamy\inst{3}\orcidlink{0000-0003-1108-7941}
\and Michael Zlatin\inst{4}\orcidlink{0000-0003-1773-1152}
}

\institute{Carnegie Mellon University, Pittsburgh, PA, USA \email{\{sarndt,moseleyb\}@andrew.cmu.edu}
\and University of Pittsburgh, Pittsburgh, PA, USA \email{kirk@cs.pitt.edu}
\and University of Waterloo, Waterloo, Ontario, CA \email{cswamy@uwaterloo.ca}
\and Pomona College, Claremont, CA, USA \email{michael.zlatin@pomona.edu}
}

\begin{document}
\date{}
\maketitle

\bibalias{LinharesOSZ20}{swamy_algorithm}
\bibalias{AharoniBGK25}{AharoniBergerGuoKotlar2025}
\bibalias{partredn1}{part_decomp_1}
\bibalias{partredn2}{part_decomp_2}
\bibalias{partredn3}{part_decomp_gammoid}
\bibalias{BercziS21}{BercziSchwarcz2021}
\bibalias{MontgomeryS25}{montgomery2025}

\begin{abstract}
We study algorithmic matroid intersection coloring. Given $k$ matroids on a common ground set $U$ of $n$ elements, the goal is to partition $U$ into the fewest number of color classes, where each color class is independent in all matroids. It is known that $2\chimax$ colors suffice to color the intersection of two matroids, $(2k-1)\chimax$ colors suffice for general $k$, where $\chimax$ is the maximum chromatic number of the individual matroids. However, these results are non-constructive, leveraging techniques such as topological Hall's theorem and Sperner's Lemma.

We provide the first polynomial-time algorithms to color two or more general matroids where the approximation ratio depends only on $k$ and, in particular, is independent of $n$. For two matroids, we constructively match the $2\chimax$ existential bound, yielding a 2-approximation for the Matroid Intersection Coloring problem. For $k$ matroids we achieve a $(k^2-k)\chimax$ coloring, which is the first $O(1)$-approximation for constant $k$. Our approach introduces a novel matroidal structure we call a \emph{flexible decomposition}. We use this to formally reduce general matroid intersection coloring to graph coloring while avoiding the limitations of partition reduction techniques, and without relying on non-constructive  topological machinery.

Furthermore, we give a {\em fully polynomial randomized approximation scheme} (FPRAS) for coloring the intersection of two matroids when $\chimax$ is large. This yields the first polynomial-time constructive algorithm for an asymptotic variant of Rota's Basis Conjecture. This constructivizes Montgomery and Sauermann's recent asymptotic breakthrough and generalizes it to arbitrary matroids.
\end{abstract}


\section{Introduction} \label{intro}


Matroids are fundamental discrete objects that arise in a variety of settings in
combinatorics and optimization.  Consequently,  matroid-optimization problems have
 been extensively studied in the combinatorics, combinatorial-optimization,
and Theoretical Computer Science (TCS) literature.

We investigate a natural and well-studied matroid-optimization problem called
matroid-intersection coloring.
In an instance of {\em $k$-matroid-intersection coloring}, we are given  
$k$ matroids%
\footnote{A matroid is a tuple $\mat=(\gset,\inds)$, where $\gset$ is a
finite set and $\inds\sse 2^{\gset}$ satisfies: (i) $\es\in\inds$; (ii) 
$B\in\inds, A\sse B \Rightarrow A\in\inds$; and (iii) if $A,B\in\inds$, $|A|<|B|$, then
$\exists e\in B\setminus A$ such that $A\cup\{e\}\in\inds$. If $\mat$ satisfies (i), (ii), we call
$\mat$ an {\em independence system}.}
$M_1=(U,\mathcal{I}_1),\ldots,M_k=(U,\mathcal{I}_k)$ on a common ground set, and the goal
is to cover $\gset$ using the minimum number of {\em common independent sets}. That is,
a feasible solution consists of sets $S_1,\ldots,S_q$, where $\bigcup_{c=1}^q S_c=\gset$,
and each $S_c$ is independent in the intersection of the $k$ matroids, 
which is defined as the independence system
$\Mint= \bigcap_{i=1}^k\mat_i=\bigl(\gset,\,\inds_{\mint}:=\bigcap_{i=1}^k\inds_i\bigr)$. 
Such a solution is called a {\em $q$-coloring} of $\mat_1,\ldots,\mat_k$, 
where we interpret elements in each $S_c$
as being ``colored'' with color $c$; we seek a solution using the fewest number of colors.
The optimal value is called the {\em chromatic number} of $\Mint$ and denoted $\chi(\Mint)$.
While we have phrased matroid-intersection coloring as a covering problem, observe that
one can always transform a solution to have pairwise-disjoint color-sets, so equivalently
the goal is to partition $\gset$ into the smallest number of common independent sets.
As is standard in matroid optimization, we assume that each input
matroid is specified via an independence oracle or a rank oracle. 

The chromatic number of a single matroid $\mat=(\gset,\inds)$ with rank function $\rk_{\mat}$ can be
computed efficiently via a reduction to matroid
intersection~\cite{edmonds1968matroid,schrijver_book}, which also yields 
the formula $\chi(\mat)=\max_{S\sse\gset}\ceil{\frac{|S|}{\rk_{\mat}(S)}}$, but 
matroid-intersection coloring quickly becomes intractable for $k\geq 2$.
It is known that even with two matroids, the problem 
is \nphard~\cite{BercziS21,gmpm_hard,horsch2024rainbow}, and  
computing $\chi(\Mint)$ exactly requires an exponential number of independence-oracle 
calls~\cite{BercziS21}. It is therefore natural to consider approximation algorithms, 
and approximation results in the literature are typically stated relative to the natural
lower bound $\chimax:=\max_{i=1,\ldots,k}\chi(\mat_i)$.

Matroid-intersection coloring has been studied both in the combinatorics
literature~\cite{ab06,AharoniBGK25,MontgomeryS25} and the combinatorial-optimization and
TCS literature~\cite{partredn1,partredn2,partredn3,arndt2025}. The current state of
affairs can be summarized as follows. 
The combinatorics literature has obtained various strong 
existential results 
via {\em nonconstructive}%
\footnote{Since we are considering problems with a finite solution space, by
``nonconstructive'' we mean not readily amenable to conversion to a polynomial-time
algorithm.}  
means. 
The optimization and TCS literature obtains constructive results (i.e.,  
polytime algorithms), 
but these apply only to certain ``combinatorial'' matroids (not general matroids). 
For general matroids, 
the only known constructive result prior to our work ({\em even for $k=2$}) is an 
$O(k\log n)$-approximation algorithm (i.e., an $O(k\log n)\cdot\chi(\Mint)$-coloring)
using the greedy algorithm for set cover.%
\footnote{The greedy step translates to solving a $k$-matroid-intersection problem,
which admits an $O(k)$-approximation.}

As is evident from this discussion, there is a {\em significant gap} between 
the nonconstructive guarantees and constructive guarantees, which persists
even for the case of two matroids. 
{\em Bridging this gap is the chief motivating goal of our work}, and we approach this 
research goal by considering two well-motivated research directions. 

\begin{enumerate}[label=$\bullet$, leftmargin=0pt, widest=$\bullet, itemindent=*]
\item {\bf Constructive Guarantees for Matroid-Intersection Coloring.}
The state-of-the-art guarantees for matroid-intersection coloring are given by the following
two nonconstructive results.%
\footnote{We state here the guarantees in terms of $\chimax$; with two matroids, 
a more refined, still nonconstructive, bound
$\chi(\mat_1\cap\mat_2)\leq\chi(\mat_1)+\chi(\mat_2)$ was recently obtained~\cite{BergerGuo2025}.}  


\begin{theorem}[\cite{ab06}] \label{thm:nonconstructive2} \label{twochimax}
For any two matroids $M_1$ and $M_2$, we have
$\chi(M_1\cap M_2)\leq 2\chimax$.
\end{theorem}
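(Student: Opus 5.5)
The plan is to follow the Aharoni--Berger topological route: reduce the coloring statement to a single ``large common independent set'' statement and iterate, or better, prove a list-coloring-type statement via the topological Hall theorem. Set $t:=\chimax$. By the single-matroid formula $\chi(\mat_i)=\max_S\ceil{|S|/\rk_{\mat_i}(S)}$, each $\mat_i$ partitions $\gset$ into $t$ independent sets, and every subset $S$ satisfies $|S|\le t\,\rk_{\mat_i}(S)$. Iterating ``find a common independent set of size $\ge |S|/(2t)$'' only yields $O(t\log n)$ colors. So instead I would prove a fractional bound and then a rounding argument that loses nothing beyond a factor~$2$.

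The first step is fractional. The polytope of $\mat_1\cap\mat_2$ is $P(\mat_1)\cap P(\mat_2)$ (Edmonds). The vector $x=\frac1t\bon$ lies in both matroid polytopes, since $x(S)=|S|/t\le\rk_i(S)$. Hence $x$ is a convex combination of common independent sets, and the fractional chromatic number is at most $t$. The second step is to pass from fractional to integral coloring with factor~$2$. For this I would use the Aharoni--Berger theorem that $\mat_1\cap\mat_2$ satisfies a Hall-type condition relative to the independence complex $\mathcal C=\inds_1\cap\inds_2$: the topological connectivity $\eta(\mathcal C|_S)$ is at least roughly $\rk_{\mat_1\cap\mat_2}$-like quantities, specifically $\eta(\mathcal C)\ge \min_{A}\bigl(\rk_1(A)+\rk_2(\gset\setminus A)\bigr)/2$.

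Coloring with $2t$ colors then becomes a rainbow (colorful) independent set problem. Take $2t$ disjoint copies of $\gset$, one per color, and consider the complex on $\gset\times[2t]$ whose faces are sets using each element at most once with each color class common independent. Equivalently, I would apply the topological Hall theorem (Aharoni--Haxell / Meshulam) to the hypergraph/complex setup: the family $(V_e)_{e\in\gset}$ with $V_e=\{e\}\times[2t]$ has a colorful face if, for every $F\sse\gset$, $\eta\bigl(\mathcal D|_{\bigcup_{e\in F}V_e}\bigr)\ge|F|$. Here $\mathcal D$ is the join of $2t$ copies of $\mathcal C$, so $\eta$ is additive over joins, and $\eta(\mathcal D|_F)\ge 2t\cdot\eta(\mathcal C|_F)$.

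The main obstacle is verifying the connectivity bound $\eta(\mathcal C|_F)\ge |F|/(2t)$, i.e.\ showing $\min_A(\rk_1(A)+\rk_2(F\setminus A))\ge|F|/t$. This follows from $\rk_1(A)\ge|A|/t$ and $\rk_2(F\setminus A)\ge|F\setminus A|/t$. The real work is the Aharoni--Berger lower bound on $\eta$ for matroid intersections, which I would prove by the standard link/deletion induction on $|F|$ using $\eta(\mathcal C)\ge\min\{\eta(\mathcal C\setminus v),\eta(\operatorname{lk} v)+1\}$ together with matroid contraction for links. The constant $1/2$ there is precisely what produces the factor $2$ in $2\chimax$.
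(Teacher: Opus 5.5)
Your argument is correct. It is essentially the original topological proof of Aharoni and Berger, and the pieces fit together.

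Write $\mathcal C=\inds_1\cap\inds_2$ and let $\rho(\mathcal C[F])$ be the largest common independent subset of $F$. The link--deletion induction does give $\eta(\mathcal C[F])\ge\rho(\mathcal C[F])/2$, as long as you delete a vertex $v$ missed by some maximum common independent set. Such a $v$ exists unless $F$ is itself common independent, in which case $\mathcal C[F]$ is a simplex. With this choice, deletion keeps $\rho$, while the link $(\mat_1/v)\cap(\mat_2/v)$ loses at most $2$. Edmonds' min--max formula together with $|A|\le\chimax\,\rk_i(A)$ then gives $\rho(\mathcal C[F])\ge|F|/\chimax$. Superadditivity of $\eta$ under joins gives $\eta\ge|F|$ on $F\times[2\chimax]$, and topological Hall finishes the proof.

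Two small points:
\begin{itemize}
\item The complex fed to topological Hall must be the plain join of $2\chimax$ copies of $\mathcal C$. Rainbowness already supplies the ``each element at most once'' condition; building it into the complex would break the join bound.
\item Your fractional first step is never used.
\end{itemize}

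The paper does not reprove this cited theorem topologically. Instead it recovers the same bound constructively as Theorem~\ref{twomatcol}, by a completely different mechanism:
\begin{itemize}
\item It encodes a $\chimax$-coloring as a basis of the partition matroid $\matref_0$ on $\chimax$ copies of $\gset$ that is independent in $\matref_1$ and $\matref_2$.
\item It notes that $\bon/\chimax$ is LP-feasible and runs the \losz iterative-refinement rounding with $p_1=p_2=2$.
\item Tracking the refinement tree shows that every color class gets a $2$-flexible decomposition in both matroids.
\item The per-class conflict graph is then a union of two matchings, hence bipartite, so each class splits into two common independent sets (Lemmas~\ref{gcoltomatcol} and~\ref{confcol}).
\end{itemize}
Thus the paper's factor $2$ comes from $\sum_i 1/p_i\le 1$ in the rounding, whereas yours comes from the $1/2$ in the connectivity bound.

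Your route is shorter given the topological toolkit, and it gives more for free. For example, with arbitrary lists $L(e)$ of size $2\chimax$, the same computation $\sum_c\eta(\mathcal C[F_c])\ge\sum_c|F_c|/(2\chimax)=|F|$ yields a list-coloring version. However, it rests on topological Hall (Sperner/nerve-type arguments) and is inherently nonconstructive. The paper's route yields a polynomial-time algorithm, avoids topology entirely, and extends to $k$ matroids with a $k(k-1)\chimax$ guarantee.
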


\begin{theorem}[\cite{AharoniBergerGuoKotlar2025}]
\label{thm:nonconstructive1}
For $k$ (general) matroids $M_1,\ldots,M_k$, we have
$\chi\!\left(\Mint\right)\le(2k-1)\chimax$.
\end{theorem}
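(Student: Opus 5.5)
The plan follows the topological approach of Aharoni--Berger: a topological Hall theorem applied to the independence complex of one matroid, with connectivity bounded through the other matroids. The argument is existential. It bounds the \emph{fractional} and then the integral chromatic number by $(2k-1)\chimax$ using the "independent system of representatives" (ISR) framework.

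\textbf{Reduction to a rainbow-set statement.} Fix $q=(2k-1)\chimax$. Matroid-intersection coloring with $q$ colors is equivalent to the following: in the $k+1$ matroids $M_1,\dots,M_k$ together with a partition matroid $P$, find a common independent set of size $|U|$ in a suitable product construction. Concretely, make $q$ copies $U\times[q]$ of the ground set, and require the following.
\begin{itemize}
\item Each copy $c$ must be independent in every $M_i$. Take the direct sum over $c$ of $M_i$, call it $M_i^{(q)}$, and intersect these over $i$ into a single complex $\mathcal{C}=\bigcap_{i}\mathcal{I}(M_i^{(q)})$.
\item Each element $u$ is chosen exactly once. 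This is an ISR with respect to the partition $V_u=\{u\}\times[q]$.
\end{itemize}

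\textbf{Topological Hall.} By the topological Hall theorem, an ISR in $\mathcal{C}$ exists provided that for every $S\subseteq U$ the complex $\mathcal{C}[\bigcup_{u\in S}V_u]$ has topological connectivity $\eta\ge|S|$. The connectivity of an intersection of $k$ matroid complexes is controlled by the Aharoni--Berger bound: $\eta(\bigcap_i \mathcal{I}_i[W])\ge \min_{\text{partitions } W=W_1\cup\dots\cup W_k}\sum_i r_i(W_i)/k$, up to the refinement that the correct form is $\ge \frac{1}{k}\min\sum_i \rk_i(W_i)$. Direct-summing over the $q$ copies multiplies the available rank.
\begin{itemize}
\item For $W=S\times[q]$ and any cover $W=\bigcup_i W_i$, one uses $\rk_{M_i}(T)\ge |T|/\chimax$ for every $T\subseteq U$, which is the matroid covering formula $\chi(M)=\max\lceil|S|/\rk(S)\rceil$.
\item Summing per copy then gives $\sum_i \rk_i^{(q)}(W_i)\ge |W|/\chimax = q|S|/\chimax$.
\end{itemize}

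\textbf{Main obstacle.} The step I expect to be hardest is getting the constant $2k-1$ rather than the naive $k$ or $2k$. A naive connectivity bound for an intersection of $k$ matroids loses a factor of roughly $k$, and the topological Hall condition needs $\eta\ge|S|$. The naive estimate yields $q\ge k\chimax$ only up to the additive losses in the $\eta$ bound; these losses are the $-1$ terms per matroid in Aharoni--Berger's connectivity lemma. For that reason I would instead follow ABGK's route. That route applies a Sperner-type or KKM-type lemma on a $(k-1)$-fold product of simplices to the fractional relaxation. This produces a fractional coloring in which each element has weight $1$ using $(2k-1)\chimax$ colors, and an integrality argument then rounds it. Balancing the fractional cover against the $k-1$ "extra" matroids is what produces the term $(k-1)\chimax$ added to $k\chimax$. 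I would carefully verify this accounting, since it is where the precise constant arises.
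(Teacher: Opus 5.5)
The paper does not prove this theorem. It is quoted from Aharoni--Berger--Guo--Kotlar and described only as a topological argument via homotopical connectivity and Sperner's lemma. Your proposal sets up a sensible framework but does not supply the argument either.

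Your reformulation is correct. A $q$-coloring is exactly an ISR of the partition $\{\{u\}\times[q]\}_{u\in U}$ in $\mathcal{C}=\bigcap_i\mathcal{I}(M_i^{(q)})$, and topological Hall reduces everything to showing $\eta(\mathcal{C}[S\times[q]])\ge|S|$ for all $S\subseteq U$. The proof then rests entirely on a lower bound for the connectivity of an intersection of $k$ matroid complexes, and you never commit to a valid one. The inequality you write, $\eta\ge\frac1k\min_{W=W_1\cup\cdots\cup W_k}\sum_i r_i(W_i)$, combined with your own (correct) estimate $\sum_i r_i^{(q)}(W_i)\ge q|S|/\chimax$, already verifies the Hall condition at $q=k\chimax$. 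For $k=2$ this is consistent with the Aharoni--Berger $2\chimax$ bound. For $k\ge 3$, however, it would give $\chi(\Mint)\le k\chimax$, which is stronger than the statement you are proving and than the state-of-the-art bound the paper quotes (e.g.\ $5\chimax$ for $k=3$). So this inequality is not something you can invoke for $k\ge 3$. The ``$-1$ losses per matroid'' you then allude to are never stated, and nothing in the proposal shows how a corrected bound would produce exactly $2k-1$.

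The fallback route is named rather than carried out. You state no Sperner/KKM lemma, define no labeling or map on the product of simplices, and do not describe the ``integrality argument.'' Moreover, the fractional step carries essentially no information. Already for $k=2$, $\bon/\chimax\in\Pc(M_1)\cap\Pc(M_2)$, which by Edmonds' theorem is the convex hull of common independent sets, so the fractional chromatic number is at most $\chimax$. Yet the integral chromatic number can exceed $\chimax$ and is NP-hard to compute. Converting a fractional cover into an integral coloring is therefore the entire difficulty, not a routine rounding step. Likewise, the remark that balancing against the $k-1$ extra matroids ``produces'' $k\chimax+(k-1)\chimax$ is numerology rather than a derivation.

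The missing ingredient is an actual estimate, either a connectivity bound or a Sperner-type argument, strong enough to verify the topological Hall condition for $\bigcap_i\mathcal{I}(M_i^{(q)})$ at $q=(2k-1)\chimax$. That estimate is the substance of the cited result, and your proposal identifies it as the obstacle without providing it.
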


The proofs of Theorem~\ref{thm:nonconstructive2}
and Theorem~\ref{thm:nonconstructive1} are topological, using the notion of homotopical
connectivity and relying on 
Sperner's lemma to deduce the existence of a desired coloring. As Sperner's lemma is
essentially complete for the complexity class PPAD, making this approach constructive
would involve surmounting well-known complexity-theoretic hurdles. 

Prior algorithmic results focus on the setting where 
all matroids~\cite{partredn1,partredn2,partredn3}, or all but one of the
matroids~\cite{arndt2025}, are partition matroids. 
These approaches extend to standard ``combinatorial'' matroids (i.e., transversal matroids,
gammoids, laminar matroids, graphic matroids), as one can reduce to partition matroids
at the loss of a small factor in the coloring number~\cite{partredn1,partredn2,partredn3}. 
But this partition-reduction approach cannot be extended to general matroids because there
are matroids where one cannot ``simplify the independence-structure'' and move to a
partition matroid without incurring a {\em significant} blow-up in the chromatic
number~\cite{imposs_1,imposs_2}. In particular, the chromatic number of the partition matroid must 
{\em necessarily} increase as a function of $n=|\gset|$. 
 In a sense, the work of \cite{arndt2025}, who handle the case where one of the matroids is
a general matroid
can be seen as a limit of this partition-reduction approach, and as the authors of~\cite{arndt2025} indicate, this approach is not amenable to handle the setting of even {\em two} general matroids. 


This preface leads to the first question that we explore in this work.

\medskip

\qquad 
\begin{minipage}{0.9\textwidth}
\begin{openquestion} \label{openquestion1} \label{qn1}
Can we devise polynomial-time algorithms for $k$-matroid intersection coloring (with general
matroids) whose guarantees match, or nearly match, those in
Theorems~\ref{thm:nonconstructive2} and~\ref{thm:nonconstructive1}?  
\end{openquestion}
\end{minipage}

\smallskip

\item {\bf Asymptotic Rota's Basis Conjecture.}
%
%
The second research direction we consider involves one of the most-prominent problems in
matroid-intersection coloring, namely {\em Rota's Basis Conjecture}, which is the
following famous conjecture regarding rearranging the bases of a matroid. 


\begin{conjecture}[{\bf Rota's Basis Conjecture}] \label{rota}
Let $\mat_1=(\gset,\inds_1)$ be a rank-$r$ matroid where $\gset$ can be partitioned into $r$ 
disjoint bases $B^1, B^2, \dots, B^r$. 
Then, there exist $r$ disjoint bases $T_1,\ldots,T_r$ such that $|T_i\cap B^j|=1$ for all
$i,j=1,\ldots,r$. The $T_i$ bases are sometimes called rainbow bases, or transversal bases.
\end{conjecture}

In the language of matroid-intersection coloring, 
taking $\mat_2=(\gset,\inds_2)$ to be the partition matroid with parts $B^1,\ldots,B^r$
and capacity $1$ on each part, Rota's Basis Conjecture asserts that 
$\chi(\mat_1\cap\mat_2)=r$. 
Note that in this setting $\chimax=\chi(M_1)=\chi(M_2)=r=\sqrt{n}$, where $n=|\gset|$.



Since Rota made this conjecture in 1989, it has received significant attention (see, e.g.,
\cite{HUANG1994225,Drisko1997OnTN,Glynn,MontgomeryS25} and the references therein).
It was the subject of the 12th Polymath project~\cite{rota_polymath},
which improved a $2\chimax$ upper bound on $\chi(\mat_1\cap\mat_2)$ 
to $2\chimax-2$.
The conjecture 
holds for some special cases~\cite{montgomery2025}, 
perhaps most notably for real-representable matroids when the rank is within one of an odd
prime~\cite{HUANG1994225,Drisko1997OnTN,Glynn}. 
Very recently,  Montgomery and Sauermann~\cite{MontgomeryS25} proved that Rota's basis
conjecture holds asymptotically, giving the current-best bounds on $\chi(M_1\cap M_2)$.
They prove the following.  

\begin{theorem}[\cite{montgomery2025}]
\label{thm:montgomery2025}
For any $\varepsilon>0$, and any instance
of Rota's Basis Conjecture where $\chi(\mat_1)$ (or equivalently the rank $r$) 
is sufficiently large relative to $\varepsilon$, we have 
$\chi(M_1\cap M_2)\le(1+\varepsilon)\chimax$.
\end{theorem}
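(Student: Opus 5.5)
The plan follows the spirit of the Montgomery--Sauermann argument, organized as ``random greedy bulk phase plus absorption''. The parameter regime is $r=\chimax\to\infty$ with $\varepsilon$ fixed. We may assume every $B^j$ is a basis of $\mat_1$, so $n=r^2$. The target is $(1+\varepsilon)r$ common independent sets of $\mat_1\cap\mat_2$. Each such set is a partial rainbow independent set: it takes at most one element from each $B^j$, and the elements it takes are jointly independent in $\mat_1$.

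\emph{Step 1 (reservoir).} Randomly set aside a small reservoir. Split each $B^j$ into a random part $R^j$ of size about $\delta r$ and the rest, with $\delta\ll\varepsilon$. Using the rank formula $\chi(\mat)=\max_S\lceil |S|/r_{\mat}(S)\rceil$, and concentration over random subsets of bases, show that with high probability every set $S$ meets the reservoir in proportion to its rank. The point is that the reservoir is ``spread''.

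\emph{Step 2 (bulk phase).} Build $r$ disjoint rainbow independent sets $T_1,\dots,T_r$ in rounds. In each round, every $T_i$ tries to add one element from a not-yet-used base $B^j$, chosen uniformly at random among the elements that keep $T_i$ independent. The key lemma is the following. By basis exchange, whenever $|T_i|<r$ and $B^j$ has not been touched by $T_i$, the set $B^j$ still contains at least $r-|T_i|$ elements extendable into $T_i$. A nibble / differential-equation style analysis then shows that all the $T_i$ reach size $(1-\delta')r$ while using each element at most once. The remaining uncovered elements number at most $\delta' r^2$, and they are spread evenly across the bases.

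\emph{Step 3 (cleanup).} Cover the leftover elements, together with unused reservoir elements, using fresh colors. The leftover instance $\mat_1|_L\cap\mat_2|_L$ has partition-side chromatic number $\max_j|L\cap B^j|\le \delta'' r$. Its $\mat_1$-chromatic number is also $O(\delta'' r)$, because of the spreadness from Steps 1--2: every $S\subseteq L$ satisfies $|S|\le O(\delta'')\, r_{\mat_1}(S)\, r$. Now apply Theorem~\ref{twochimax} to the leftover instance. It is colored with $2\cdot O(\delta'' r)\le \varepsilon r$ colors, for a total of $(1+\varepsilon)r$.

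The main obstacle is Step 2's spreadness guarantee. One must control not just the sizes of the leftover parts but every set $S$ of the leftover: we need $|S\cap L|\lesssim \delta'' r\cdot r_{\mat_1}(S)$ for all $S$, exponentially many sets in a general matroid. My first attempt is to track only flats and to union bound over them. This is too weak, so instead I would use the random greedy's symmetry. By exchangeability, each element is left uncovered with probability about $\delta'$. Then I would apply a martingale bound for each flat $F$, with failure probability $\exp(-\Omega(\delta' r\cdot r(F)))$. This beats the count of flats of rank $t$, at most $\binom{r^2}{t}\le r^{2t}$, once $r$ is large relative to $\varepsilon$, which is exactly where the hypothesis of the statement is used.
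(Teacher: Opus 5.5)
Your Step~3 is sound and matches the paper's Phase~2, which uses the constructive $2$-approximation in place of Theorem~\ref{twochimax}. The gap is in Step~2, where you place all of the difficulty, and it occurs on both counts.

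\textbf{Completeness of the $T_i$.} Your extendability count ignores availability. It is true that $B^j$ has at least $r-|T_i|$ elements outside $\mathrm{span}(T_i)$. But after $s$ rounds about $s$ elements of $B^j$ have already been taken by other sets. So the guaranteed number of legal choices is only $r-2s$, which is vacuous beyond $s\approx r/2$. Pushing every $T_i$ to size $(1-\delta')r$ is a hard theorem in its own right; it is essentially Pokrovskiy's earlier asymptotic result. No nibble or differential-equations analysis is available in a general matroid, because the quantities you would need to track (unused elements of $B^j$ outside $\mathrm{span}(T_i)$, for all $i,j$) depend on the whole matroid, not on a few statistics.

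\textbf{Spreadness.} There is no exchangeability among elements of a general matroid, so the claim that ``each element is left uncovered with probability about $\delta'$'' has no basis. Take a flat $F$ of rank $t$ with $|F|\approx rt$. Leaving only $O(\delta'' rt)$ of $F$ uncovered forces almost every $T_i$ to contain almost $t$ elements of $F$. However, $|T_i\cap F|\le t-r_{\mat_1}\bigl(F\cap\mathrm{span}(T_i\setminus F)\bigr)$, and a rule that is uniform over extendable elements does nothing to stop elements outside $F$ from using up the rank of $F$.

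The per-flat tail $\exp(-\Omega(\delta' r t))$ is also unsupported. Choices in a disjointness-constrained greedy process cascade. Even granting $O(1)$ bounded differences, Azuma over the $\sim r^2$ steps only gives a tail $\exp(-\Theta(\delta'^2t^2))$, with no dependence on $r$. That loses to the up to $r^{2t}$ flats of rank $t$ once $t\lesssim \log r/\delta'^2$. Union bounding over flats is not the weak point, since the paper does exactly that; what you lack is per-flat concentration of this strength.

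\textbf{What the paper does instead.} The missing idea is a sampling primitive that gives uniform marginals and strong concentration by design, used in a scheme that never asks any single set to be nearly a basis. The paper obtains the statement from Theorem~\ref{thm:largec_clean}, using $\chimax=\sqrt n$. It applies swap rounding~\cite{ChekuriVZ11} to the uniform point $\bon/\chimax\in\Pc(\mat_1)\cap\Pc(\mat_2)$. As a result:
\begin{itemize}
\item every sample is automatically a common independent set;
\item every element has marginal exactly $(1-\gamma)/\chimax$;
\item for every $S$, $|R\cap S|$ has lower tail $\exp(-\gamma t^2/(20\mu))$.
\end{itemize}
It peels off $\lceil\varepsilon\chimax^{(i)}\rceil$ samples per round. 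Lemma~\ref{lemma:asymp_subset} shows that each set $S$ loses an $(\varepsilon-O(\varepsilon^2))$ fraction, except with probability $\exp(-\Omega(\varepsilon^4|S|))$. A high-density flat of rank $m$ has $|S|>\chi m/2$, and there are at most $n^m$ such flats. Hence the maximum density drops by a factor $1-\varepsilon+100\varepsilon^2$ per round. After $O(\log(1/\varepsilon)/\varepsilon)$ rounds the residual density is at most $\varepsilon\chimax$, and the $2$-approximation finishes.

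Finally, your reservoir plays no absorbing role in Step~3; it is simply added to the leftover, so it could be dropped.
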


The proof of Theorem~\ref{thm:montgomery2025} is quite involved and also nonconstructive,%
\footnote{One key step in the proof involves finding roughly $\sqrt{n}$ rainbow
independent sets having certain lexicographic-optimality properties. This seems quite
challenging to obtain via an efficient algorithm.} 
which leads to the second question that we explore.


\medskip

\qquad 
\begin{minipage}{0.9\textwidth}
\begin{openquestion} \label{openquestion2} \label{qn2}
Is there a polynomial-time algorithm that obtains similar asymptotic guarantees for Rota's
Basis Conjecture as Theorem~\ref{thm:montgomery2025}?
\end{openquestion}
\end{minipage}

\end{enumerate}

\subsection{Our Contributions} \label{contrib}

We make progress on the above questions, substantially advancing the constructive frontier in the area of matroid-intersection
coloring. We fully resolve Question~\ref{qn2}, and resolve Question~\ref{qn1} for the case
of two matroids, thus essentially eliminating the constructive-nonconstructive gap
in the setting of two matroids. 



\medskip
Theorem~\ref{matcolthm} states our results for $k$-matroid-intersection
coloring, which addresses Question~\ref{qn1}.

\begin{theorem}[{\bf Approximation results for matroid-intersection coloring}] 
\label{matcolthm}
\
\begin{enumerate}[label=(\alph*), topsep=0.2ex, ref={\thetheorem\,(\alph*)}, itemsep=0.1ex, leftmargin=*]
\item There is a polytime algorithm that returns a $2\chimax$-coloring for
$2$-matroid-intersection coloring.
\label{thm:constructive2}\label{twomatcol}

\item There is a polytime algorithm that returns a $k(k-1)\chimax$-coloring for
$k$-matroid-intersection coloring.
\label{thm:constructive1}\label{kmatcol}
\end{enumerate}
\end{theorem}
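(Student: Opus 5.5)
The plan is to reduce matroid-intersection coloring to a graph (edge-)coloring problem. The reduction rests on a structure I will call a \emph{flexible decomposition} of a single matroid. For a matroid $\mat_i$ with $\chi(\mat_i)\le\chimax$, such a decomposition partitions $\gset$ into ``slots'' $P^i_1,\dots,P^i_{m_i}$, each of bounded size, with the guarantee that every set $S$ meeting each slot in at most one element is independent in $\mat_i$. Equivalently, the partition matroid with parts $P^i_j$ and unit capacities is a restriction of $\mat_i$'s independence structure.

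I cannot hope for a \emph{fixed} partition with slot size $O(\chimax)$, by the impossibility results~\cite{imposs_1,imposs_2}. So the decomposition must be \emph{flexible}: the slot structure is allowed to be updated as the coloring is built, and the partition-matroid property only needs to hold for the sets we actually create.

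The construction I would try starts from a partition of $\gset$ into $\chimax$ independent sets $I_1,\dots,I_{\chimax}$ of $\mat_i$, computed via matroid partition / matroid intersection~\cite{edmonds1968matroid}. I would extend these greedily to bases of the span. Exchange arguments (symmetric basis exchange between the $I_c$'s) then let me attach to each element $e$ a small set of ``competitors'': the elements that can replace $e$ in a fixed reference independent set. The slots are defined from these competitors, and the goal is to show that each element conflicts with fewer than $2\chimax$ others in a way that is consistent along circuits.

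For two matroids I would then form the bipartite multigraph $H$ whose left vertices are the slots of $\mat_1$ and whose right vertices are the slots of $\mat_2$, with one edge per element $e$ joining its two slots. A proper edge coloring of $H$ is exactly a coloring in which every color class meets each slot at most once, and hence is common independent. By K\"onig's edge-coloring theorem, computable in polynomial time, the number of colors equals the maximum degree of $H$, i.e., the maximum slot size. So Theorem~\ref{twomatcol} follows once flexible decompositions with slot size at most $2\chimax$ exist and can be found in polynomial time.

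The flexibility enters as follows: whenever an augmenting/alternating step of the edge coloring would violate independence, I would re-run an exchange step in the decomposition. A potential argument, for instance on the number of elements correctly placed in the reference bases, should bound the number of such repairs polynomially.

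For $k$ matroids the same framework yields a $k$-partite $k$-uniform hypergraph, where an element is the hyperedge formed by its $k$ slots, and K\"onig's theorem is no longer available. Instead I would form the conflict graph on $\gset$ in which $e\sim f$ if they share a slot in some matroid, and color it greedily. To reach $k(k-1)\chimax$ rather than something like $2k\chimax$, I would use the flexibility asymmetrically. I would process the matroids in pairs, or treat one matroid as the ``reference'' against which the other $k-1$ decompositions are built. Each element then has at most $(k-1)\chimax$ relevant conflicts per matroid, over at most $k$ matroids, which bounds the degeneracy of the conflict graph by about $k(k-1)\chimax-1$. Greedy coloring in a degeneracy order then gives the bound of Theorem~\ref{kmatcol}.

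The main obstacle is the existence and efficient construction of the flexible decomposition with slot size $2\chimax$ (resp.\ degeneracy $(k-1)\chimax$ per matroid). Specifically, the difficulty is ensuring that ``one element per slot'' really implies independence in a general matroid, since this must hold despite the known lower bounds for static partition reductions. My first attempt would define slots via fundamental circuits with respect to the bases $I_c$. I would then prove, by induction on the number of colors already used, that any partial transversal of slots is independent, repairing violations with exchanges that strictly increase a lexicographic potential.
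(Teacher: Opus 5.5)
There is a genuine gap, and it is exactly the step you label the main obstacle: the proof rests on a slot structure that you never define, let alone construct. The static version is a partition of $\gset$ into slots of size at most $2\chimax$ such that every one-per-slot set is independent in $\mat_i$. That is a partition matroid whose independent sets are all independent in $\mat_i$ and whose chromatic number is $2\chimax$, which is what~\cite{imposs_1,imposs_2} rule out for general matroids (the blow-up must grow with $n$). Your ``flexible'' remedy comes with no invariant. You do not say what the slots guarantee between repairs. You do not say why re-slotting $\mat_1$ midway keeps already-built color classes independent in $\mat_1$ and keeps the maximum degree of $H$, hence K\"onig's bound, at $2\chimax$. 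Nor do you say what potential makes the repairs terminate. The $k$-matroid plan (``use the flexibility asymmetrically'', ``process in pairs'') is unspecified and inherits the same gap. Even granting $(k-1)\chimax$ conflicts per matroid, the degree can reach $k(k-1)\chimax$, so greedy gives $k(k-1)\chimax+1$ colors. The paper needs Brooks' theorem plus a no-$K_{\Dt+1}$ argument to remove that $+1$.

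The missing idea is to fix the $\chimax$ color classes first, using all matroids simultaneously, and then ask for near-independence structure only inside each class, where it has \emph{constant} size. The paper sets up a $(k+1)$-matroid intersection:
\begin{itemize}
\item $\chimax$ disjoint copies $\gset_1,\dots,\gset_{\chimax}$ of $\gset$;
\item the partition matroid $\mat'_0$ allowing one copy of each element;
\item disjoint unions $\mat'_i$ of copies of $\mat_i$.
\end{itemize}
The point $\bon/\chimax$ is feasible for its LP relaxation by Edmonds' condition. The iterative-refinement rounding of~\cite{LinharesOSZ20} then returns a basis $R$ of $\mat'_0$ such that each class $R\cap\gset_c$ has a $k$-flexible decomposition in every $\mat_i$. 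This means parts $T_j$ with $|T_j|-\rk_i(T_j)\le k-1$, such that any union of independent subsets of the parts is independent. This is proved by induction over the tree of restrictions and contractions the algorithm performs.

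The conflict graph on a single class then has maximum degree $k(k-1)$. For $k=2$ it is a union of two matchings, hence bipartite. For $k\ge 3$ it cannot contain $K_{k(k-1)+1}$, since each $H_i$ would have to be a disjoint union of $k$-cliques and $k\nmid k(k-1)+1$. So Brooks' theorem gives $k(k-1)$ colors per class and $k(k-1)\chimax$ overall. Your K\"onig argument is valid at this stage for $k=2$, applied per class with slots of size at most $2$ (the exchange pairs $\{e,e'\}$ and singletons), but not globally on $\gset$.
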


Theorem~\ref{thm:constructive2}, which is an immediate corollary of
Theorem~\ref{thm:constructive1}, resolves Question~\ref{qn1} for 
the case of two matroids. 
For $k>2$, while we do not match the guarantee in Theorem~\ref{thm:nonconstructive1},
the $k(k-1)\chimax$ bound in Theorem~\ref{kmatcol} significantly improves upon the
previously-known approximation ratio of $O(k\log n)$ 
(via a reduction to set cover) for $k$-matroid-intersection coloring. In particular, this
yields the first $O(1)$-approximation algorithm 
for any fixed $k$.

In contrast to prior combinatorial approaches, which (as noted earlier) do not seem
amenable to handling general matroids, our results are obtained by moving to an
LP-relaxation of a matroid-intersection view of the problem, and leveraging an LP-rounding
algorithm for matroid intersection (with multiple matroids). The rounding algorithm does
not quite yield a coloring, 
but we identify suitable structure in the output of the rounding algorithm that enables us
to convert the ``approximate'' coloring returned by the algorithm into a valid
coloring. We elaborate further on these ideas 
in Section~\ref{techoverview-kmat}.

\medskip


Our second main result is a 
constructive version of Theorem~\ref{thm:montgomery2025}, thus
fully resolving Question~\ref{openquestion2}. 
Recall that in Rota's Basis Conjecture (Conjecture~\ref{rota}), we have a matroid
$\mat_1=(\gset,\inds_1)$ and $r$ disjoint bases $B^1,\ldots,B^r$, where
$r=\rk_{\mat_1}(\gset)$, and to cast this in the setup of matroid-intersection coloring,
we take $\mat_2$ to be the partition matroid encoding that at most one element is selected
from each $B^j$ basis. Let $n=|\gset|$.

\begin{theorem} \label{thm:rotaresult} 
For any $\varepsilon>0$ and any instance of Rota's Basis Conjecture where $\chi(\mat_1)$
(or equivalently the rank $r$) is sufficiently large relative to $\varepsilon$, 
there is a randomized algorithm with running time 
$\poly\bigl(n,\frac{1}{\varepsilon}\bigr)$ that returns a
$(1+\ve)\chimax$-coloring of $\mat_1\cap\mat_2$ with high probability.%
\footnote{It suffices to ensure $\frac{1}{\poly(n)}$ success probability, 
since we can then boost the success probability to $1-\frac{1}{n^\alpha}$ for
any $\alpha>0$ by running the algorithm $\poly(n)\cdot O(\log n)$ times.}
\end{theorem}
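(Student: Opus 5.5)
The plan is to constructivize Montgomery--Sauermann through a ``random greedy plus absorbing cleanup'' strategy, using Theorem~\ref{twomatcol} as the cleanup tool.

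\textbf{Bulk phase.} Fix a small $\delta=\delta(\ve)$. We build $T_1,\dots,T_t$ with $t=(1-\delta)r$ one at a time. Each $T_i$ will be a rainbow independent set of size at least $(1-\delta)r$, chosen disjoint from the earlier sets. To build one, we do not take an arbitrary maximum common independent set (computable by matroid intersection), since that could badly damage the residual structure. Instead we choose it \emph{randomly} from a well-spread distribution, in the following way.

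First solve the LP relaxation of the matroid-intersection colouring problem on the residual ground set. In the initial instance the uniform fractional point $x_e=1/r$ lies in the matroid-intersection polytope of $\mat_1\cap\mat_2$ scaled by $r$ copies. We then round a single copy by randomized swap rounding (or pipage rounding) in $\mat_1$, with negative correlation. We also impose the partition constraint on $\mat_2$, so we obtain a set that is independent in $\mat_1$ and has at most one element per $B^j$ up to a small loss. This is done via a concentration-preserving rounding for matroid intersection with one partition matroid, e.g.\ dependent rounding on the bipartite structure combined with $\mat_1$-swap rounding.

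The key property to maintain is a residual regularity invariant: after removing $T_1,\dots,T_i$, each $B^j$ has about $r-i$ elements remaining, and for every set $S$ we have $|S\cap R_i|\le (1+\delta)\frac{r-i}{r}\,r_{\mat_1}(S)\cdot\frac{|S|}{r_{\mat_1}(S)r}\cdots$. Concretely, the residual stays fractionally colourable with about $(1+\delta)(r-i)$ colours. We prove this by Chernoff/negative-correlation bounds applied to the rank-function constraints. The exponentially many constraints are handled by noting that tight sets can be tracked only through the rank inequalities on which the current LP solution is supported, via the uncrossing/chain structure. This keeps the analysis polynomial and needs $r$ large relative to $\ve$ for the concentration to hold.

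\textbf{Cleanup phase.} After the bulk phase, the leftover set $R$ has $\chi(\mat_1|_R)$ and the partition degree both at most $O(\delta r)$ with high probability, the latter by concentration on each $B^j$. Apply Theorem~\ref{twomatcol} to $\mat_1|_R\cap\mat_2|_R$ to get a $2\cdot O(\delta r)$-colouring. Together with the $t$ bulk colours (and any elements dropped by rounding losses, which are also placed into $R$), the total is $(1-\delta)r+O(\delta r)\le(1+\ve)r$ once $\delta$ is chosen appropriately. The whole procedure is polynomial in $n$ and $1/\ve$.

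\textbf{Main obstacle.} The hard step is the invariant in the bulk phase. We must show that removing a random near-rainbow near-basis keeps the \emph{matroid} side regular, meaning $\chi(\mat_1|_{R_i})$ decreases by about one per round, not just that degrees in the partition matroid do. Controlling $\max_S |S\cap R_i|/r_{\mat_1}(S)$ over all $S$ is a union bound over exponentially many sets.

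My first attempt is to avoid this by tracking only the LP. At each round, resolve the fractional colouring LP on $R_i$ and use its optimal value as the potential, showing it drops by at least $1-O(\delta)$ in expectation via the marginals of swap rounding. If concentration fails, we fall back on the partial-rounding guarantee: each element of the current fractional point is included with probability $x_e$, and the residual fractional point rescaled by $\frac{1}{1-1/\text{val}}$ remains feasible except on a small ``overflow'' set. That overflow set is moved to $R$, and its size is bounded in expectation by $O(\delta n)$, spread evenly enough that its chromatic contribution is $O(\delta r)$.
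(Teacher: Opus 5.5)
Your proposal has a genuine gap, and it sits exactly at the step you flag as the main obstacle. With one rounded set per round, that step cannot be closed by concentration plus a union bound, however large $r$ is. Fix a round with current density $\chi$ and a flat $S$ of rank $m$ with $|S|\approx\chi m$. One common independent set $R$ swap-rounded from $\bon/\chi$ meets $S$ in about $(1-\gamma)m$ elements in expectation. For $\chi$ to drop by $1-\theta$, with $\theta\lesssim\delta$ because losses accumulate over about $r$ rounds, you need $|R\cap S|\ge(1-\theta)m$ for all such flats simultaneously. Corollary~\ref{cor:swap_round} bounds each failure only by $\exp(-O(\gamma\theta^2 m))$. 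That exponent is $O(m)$ with no factor of $\chi$, while there can be $n^{\Omega(m)}$ dense flats of rank $m$. Already for $m=1$, a parallel class of size $\chi$ is missed with probability exactly $\gamma$, so a round fails with probability at least $\gamma$. Pushing this below $1/n$ makes the exponent $\gamma\theta^2m/20$ less than $1$ for every flat, since $m\le r=\sqrt n$.

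Your workarounds do not repair this.
\begin{itemize}
\item Restricting to sets tight for the current LP point ignores near-tight sets that $R$ happens to avoid; those then become the densest.
\item Showing that the LP value (for two matroids just $\max_i\max_S|S|/r_i(S)$) drops in expectation via the marginals goes the wrong way, since $\mathbb{E}[\max_S f_S]\ge\max_S\mathbb{E}[f_S]$.
\item Claiming an overflow set of size $O(\delta n)$ is spread evenly enough to have chromatic number $O(\delta r)$ restates the unproved density bound, since a set of that size can contain a whole dense flat.
\end{itemize}
A smaller point: swap rounding for two matroids does not give negative correlation. Theorem~\ref{thm:swap_round} gives lower-tail concentration with marginals scaled by $1-\gamma$, which suffices and already handles the partition matroid, so your ad hoc combination of roundings is unnecessary.

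The missing idea is batching. The paper gets Theorem~\ref{thm:rotaresult} directly from Theorem~\ref{thm:largec_clean}: here $\chimax=r=\sqrt n$, so $r\ge C\log n/\ve^5$ once $r$ is large relative to $\ve$. In each round the algorithm draws $\lceil\ve\chi\rceil$ sets by SwapRound at $\bon/\chi$ with $\gamma=\ve$ and removes their union. A dense flat then receives about $\ve|S|$ hits, and the per-set subgaussian increments sum to variance proxy $O(|S|)$. Hence fewer than $(\ve-O(\ve^2))|S|$ hits has probability $\exp(-\Omega(\ve^4|S|))\le\exp(-\Omega(\ve^4\chi m))$ (Lemma~\ref{lemma:asymp_subset}). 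This beats the at most $n^m$ flats of rank $m$ once $\chi\ge\log n/\ve^4$; batching is exactly what puts the factor $\chi$ into the exponent. Restricting the union bound to flats is valid because any dense set in the residual spans a dense flat of the same rank that was poorly hit.

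No invariant needs to be tracked. The algorithm recomputes the residual chromatic numbers exactly each round and gets a multiplicative drop by $1-\ve+O(\ve^2)$ per round with high probability. It stops after $O(\ve^{-1}\log(1/\ve))$ rounds, once the density is at most $\ve\chimax$, and then finishes with Theorem~\ref{twomatcol} as in your cleanup, for $(1+O(\ve))\chimax$ colors in total.
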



Theorem~\ref{thm:rotaresult} actually follows from a much-more general result that we
obtain. We develop a {\em fully polynomial randomized approximation scheme} (FPRAS) for
two-matroid-intersection coloring with {\em any} two matroids (as opposed to $\mat_2$
being a partition matroid), when $\chimax$ is sufficiently large. 


\begin{theorem}\label{thm:largec_clean} 
For any $\ve>0$ and any instance $\mat_1,\mat_2$ of
two-matroid-intersection coloring with $\chimax\geq\frac{C\log n}{\ve^5}$, where $C$ is
an absolute constant, 
there is a randomized algorithm with running time $\poly\bigl(n,\frac{1}{\ve}\bigr)$ 
that returns a $(1+\ve)\chimax$-coloring with high probability. 
\end{theorem}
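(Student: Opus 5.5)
We plan a random-partition approach: split $\gset$ into pieces with small chromatic number, color each piece with the constructive $2\chimax$ bound, and handle leftover elements separately. Set $c:=\chimax$ and pick a parameter $t\approx \ve^{2} c/\log n$. The hypothesis $c\ge C\log n/\ve^5$ guarantees $c/t$ is large, of order $\log n/\ve^{3}$.

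\textbf{Step 1 (fractional decomposition).} First compute, for each $i\in\{1,2\}$, a partition of $\gset$ into $c$ independent sets of $\mat_i$. This is possible in polytime by matroid partitioning, since $\chi(\mat_i)\le c$. The key object is then a fractional point: $x=\frac1c\bon_{\gset}$ lies in both matroid base/independence polytopes scaled by $c$. Each $\mat_i$ restricted to $\gset$ decomposes into $c$ independent sets, i.e.\ $|S|\le c\,\rk_i(S)$ for all $S$.

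\textbf{Step 2 (random splitting).} Assign each element independently and uniformly to one of $c/t$ groups $G_1,\dots,G_{c/t}$. We want each $\mat_i|_{G_j}$ to have chromatic number at most $(1+\ve)t$ for the bulk of elements. Directly proving this for all $2^n$ sets $S$ via Chernoff fails. Instead, we exploit the decomposition into $c$ independent sets $I^i_1,\dots,I^i_c$ of $\mat_i$.

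We first sparsify by randomly grouping the $c$ independent sets of $\mat_1$ into blocks of size $t$. That route, however, controls only one matroid. So instead we take the union of $t$ random color classes, which has $\chi(\mat_1)\le t$ trivially. For $\mat_2$ on a random subset of density $t/c$ we need concentration of $\max_S |S|/\rk_2(S)$. Here we use the standard fact, from matroid base packing / random-sampling results à la Karger, that a $p$-sample of a matroid with $c$ disjoint bases-covering has chromatic number at most $(1+\ve)pc$ whp when $pc\ge \Omega(\log n/\ve^2)$. Concretely, we would apply Karger's random-sampling theorem for matroids, which gives concentration for all $S$ via a counting argument over quasi-bases, to the dual/covering formulation.

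\textbf{Step 3 (coloring pieces).} On each group we now have two matroids with $\chimax\le(1+\ve)t$. We need a $(1+O(\ve))t$-coloring of each group, but Theorem~\ref{thm:constructive2} only gives $2t$. This is the main obstacle: losing factor 2 per piece is fatal. The fix we would try is to iterate the splitting and, instead of coloring pieces, find inside each group a large common independent set via matroid intersection. We repeatedly extract a maximum common independent set in $\mat_1\cap\mat_2$ restricted to a carefully chosen random set $R$ of size about $\rk$. Matroid intersection min-max plus the random-sampling concentration from Step 2 show this set covers all but an $\ve$-fraction of $R$. Carried out over $(1+\ve)c$ rounds, this leaves a residual set $L$ of elements with $\chi(\mat_i|_L)\le \ve c$ whp.

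\textbf{Step 4 (cleanup).} We color $L$ using Theorem~\ref{thm:constructive2} with $2\ve c$ colors, for a total of $(1+O(\ve))c$. Rescaling $\ve$ completes the proof. The failure probabilities are $n^{-\Omega(1)}$ per event, and the $\ve^{-5}$ requirement absorbs the union bounds over $\poly(n)$ events and over $1/\ve$ nested concentration losses.
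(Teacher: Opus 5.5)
Your argument breaks at Step~3, which is where all the work has to happen. Steps~1--2 end up unused: splitting $\gset$ into $c/t$ random groups only produces smaller instances of the same problem, and, as you note, coloring them with Theorem~\ref{thm:constructive2} costs a factor $2$. The sampling bound in Step~2 is true (Chernoff plus a union bound over the at most $n^m$ flats of rank $m$). But it needs $pc\ge\Omega(\log n/\ve^2)$, so it says nothing in the regime $pc\approx 1$ of a set of size about the rank, which is what Step~3 invokes it for.

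The key claim of Step~3 is that a maximum common independent set inside a random set $R$ of size about the rank covers all but an $\ve$-fraction of $R$. As stated, this is false. Take a single partition matroid with $r$ parts of size $c$ and capacity $1$. A density-$1/c$ sample meets only about $(1-1/e)r$ parts, so $\rk(R)\approx 0.63\,|R|$. In that example, making $R$ nearly independent requires density $O(\ve/c)$. Then each color class removes only an $O(\ve/c)$ fraction of each set, so you spend a whole color to lower densities by only $O(\ve)$.

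More fundamentally, even a common independent set $I\sse R$ with $|I|\ge(1-\ve)|R|$ is only a \emph{global} guarantee. The $\ve|R|$ missed elements may all lie in one dense flat, whose density then does not drop. A maximum common independent set returned by a matroid-intersection algorithm has no reason to hit every flat proportionally. So nothing supports the conclusion that after $(1+\ve)c$ rounds the residual $L$ has $\chi(\mat_i\vert_L)\le\ve c$. Also, the union bound needed is over exponentially many sets, not $\poly(n)$ events.

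What is missing is a way to sample \emph{common independent sets} whose marginals are all close to $1/\chi$ and whose intersection with \emph{every} set $S$ satisfies a Chernoff-type lower tail. You wrote down the right fractional point $\bon/c\in\Pc(\mat_1)\cap\Pc(\mat_2)$ in Step~1 but never rounded it. The paper applies swap rounding \cite{ChekuriVZ11} to it. This gives a common independent set $R$ with $\Pr[e\in R]=(1-\ve)/\chi$ and $\Pr[|R\cap S|\le\mu-t]\le\exp(-\ve t^2/(20\mu))$. Peeling $\lceil\ve\chi\rceil$ such sets in one round removes at least an $(\ve-O(\ve^2))$ fraction of any fixed $S$, except with probability $\exp(-\Omega(\ve^4|S|))$.

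A high-density flat of rank $m$ has $|S|>\chi m/2$, and there are at most $n^m$ flats of rank $m$. A union bound grouped by rank then shows that each matroid's chromatic number drops by a factor $1-\ve+100\ve^2$ per round, as long as $\chi\ge\ve\chimax=\Omega(\log n/\ve^4)$. This is where the $\ve^{-5}$ comes from. After $O(\ve^{-1}\log(1/\ve))$ rounds, $\chi\le\ve\chimax$. The colors spent form a geometric sum bounded by $(1+O(\ve))\chimax$, and Theorem~\ref{thm:constructive2} finishes with $2\ve\chimax$ more colors.
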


\begin{corollary}
Taking $\ve=\bigl(\frac{C\log n}{\chimax}\bigr)^{1/5}$ in~\Cref{thm:largec_clean}, we obtain a coloring using $\chimax+O\bigl(\log^{1/5} n\bigr)\cdot\chimax^{4/5}$ colors with high
probability; this translates to $\bigl(1+o(1)\bigr)\chimax$ colors when $\chimax=\w(\log n)$.
\end{corollary}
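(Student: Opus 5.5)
The corollary follows by substituting $\ve=\bigl(\frac{C\log n}{\chimax}\bigr)^{1/5}$ into \Cref{thm:largec_clean}. The main task is to check that this choice satisfies the hypothesis of that theorem and that the running time stays polynomial.

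\textbf{Hypothesis.} With this $\ve$ we have $\ve^5=\frac{C\log n}{\chimax}$, so $\frac{C\log n}{\ve^5}=\chimax$. The condition $\chimax\ge\frac{C\log n}{\ve^5}$ therefore holds with equality.

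\textbf{Range of $\ve$.} We may assume $\ve\le 1$, i.e., $\chimax\ge C\log n$. Otherwise $\chimax=O(\log n)$, and the claimed bound $\chimax+O(\log^{1/5}n)\chimax^{4/5}$ is at least a constant times $\chimax$. In that regime the bound can be met by another constructive result. For two matroids, Theorem~\ref{twomatcol} gives a $2\chimax$-coloring, and $\chimax\le (\log n)^{1/5}\chimax^{4/5}\cdot C^{1/5}$ whenever $\chimax\le C\log n$. So in this case the statement holds with an appropriate constant in the $O(\cdot)$.

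\textbf{Running time.} The algorithm runs in $\poly(n,1/\ve)$ time. Here $1/\ve=(\chimax/(C\log n))^{1/5}\le n^{1/5}$, since $\chimax\le n$. Hence the running time is $\poly(n)$.

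\textbf{Color count.} With high probability the algorithm returns a coloring with at most
\[
(1+\ve)\chimax=\chimax+(C\log n)^{1/5}\chimax^{4/5}=\chimax+O\bigl(\log^{1/5}n\bigr)\chimax^{4/5}
\]
colors.

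\textbf{Asymptotics.} When $\chimax=\w(\log n)$, we have $\ve=(C\log n/\chimax)^{1/5}=o(1)$, so the number of colors is $(1+o(1))\chimax$.

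\textbf{Integrality.} The number of colors is an integer, so one takes the floor of $(1+\ve)\chimax$. This only helps, since the floor does not exceed the stated bound.

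Nothing here is a real obstacle beyond the small-$\chimax$ case split, which is handled by Theorem~\ref{twomatcol}.
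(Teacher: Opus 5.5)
Your proposal is correct and takes the same approach as the paper, which obtains the corollary by directly substituting $\ve=\bigl(\frac{C\log n}{\chimax}\bigr)^{1/5}$ into Theorem~\ref{thm:largec_clean}; your explicit check that $1/\ve\le n^{1/5}$ keeps the running time polynomial is a useful addition. Your case split for $\chimax<C\log n$ is harmless but not needed, because Theorem~\ref{thm:largec_clean} is stated for every $\ve>0$, and the paper's appendix already covers $\ve\ge 1$ using Theorem~\ref{twomatcol}.
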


Theorem \ref{thm:rotaresult} is an immediate corollary of Theorem \ref{thm:largec_clean} since
in Rota's Basis Conjecture, we have $\chimax=\sqrt{n}$. 
Theorem \ref{thm:largec_clean} improves upon Theorem~\ref{thm:montgomery2025} in various ways:
\textbf{(i)} first, it yields an {\em efficient} algorithm; \textbf{(ii)} second, it generalizes
Theorem~\ref{thm:montgomery2025} by virtue of handling {\em arbitrary} instances of
two-matroid-intersection coloring;  
\textbf{(iii)} third, the asymptotic dependence on $\chimax$, i.e., the $o(1)$ term in
$\bigl(1+o(1)\bigr)\chimax$, is explicitly spelled out;  
\textbf{(iv)} finally, our proof is also substantially simpler than the proof of
Theorem~\ref{thm:montgomery2025}. 
As we discuss in Section~\ref{techoverview-rota}, our algorithm and its analysis are based 
on a surprisingly clean approach that exploits polyhedral results on matroid intersection,
in particular, the powerful {\em swap-rounding} technique in~\cite{ChekuriVZ11}, to
iteratively ``peel'' out common independent sets. 
In essence, we gain a striking amount of leverage from the use of this polyhedral rounding 
technique, and this is directly responsible for 
the simplicity of our proof. 

One interesting insight to emerge from our work (specifically
Theorem~\ref{thm:largec_clean}) is that asymptotically-optimal coloring is possible in
the setting of Rota's Basis Conjecture {\em not because one of the matroids is a partition
matroid, but because $\chimax$ is sufficiently large}. 
This is again a direct consequence of the polyhedral results that we utilize, which are 
essentially opaque to the structure of the underlying matroids. 

\medskip
It is worth pointing out that while the techniques used to obtain our two main
results (Theorem~\ref{matcolthm}, and Theorem~\ref{thm:largec_clean})
are quite different, as noted above, a common thread underlying our results and
techniques,   
which distinguishes our work from prior approaches, is that 
we leverage polyhedral techniques, and this polyhedral viewpoint paves the way for 
obtaining our results. 

\section{Technical Overview} \label{techoverview}

At a high level, the main challenge in proving 
the existence of
good colorings for matroid-intersection coloring instances is that whether two elements of the ground set can be colored
the same color is quite context-dependent, that is, it depends on which other elements share that color. Taking the covering perspective, it is not clear how to compute common independent sets which reduce the chromatic number on the remaining uncovered elements. 
 The proof of \Cref{matcolthm} surmounts the coloring challenge, and the proof of \Cref{thm:largec_clean} surmounts the covering challenge. 
We proceed to highlight the main insights and ideas in these proofs. 

\subsection{Coloring \boldmath $k$ Matroids: Theorem~\ref{matcolthm} (see Section~\ref{iterrnd})} \label{techoverview-kmat}


Recall that prior algorithmic results focus on
the setting where all, or almost all,  matroids~\cite{partredn1,partredn2,partredn3,arndt2025} are partition matroids or are reducible to partition matroids, heavily exploiting the
fact that with partition matroids, it is substantially easier (compared to general matroids)
to understand (and hence control) when a set is independent. 
This approach can handle combinatorial matroids, but there are known example matroids where reducing to a partition matroid can drastically increase the chromatic number~\cite{imposs_1,imposs_2} (which again testifies to the difficulty of ensuring independence in a general matroid
versus independence in a partition matroid).

Our approach is quite different, yet a fairly natural one. Let $q=\chimax$. Our algorithm consists of three main steps. First, we utilize a standard observation to cast the problem as an instance of $(k+1)$-matroid intersection. Second, we utilize an LP-rounding algorithm for this matroid-intersection problem. This algorithm does not, however, produce a feasible solution. While its output can be viewed as a kind of ``approximate coloring'', this by itself is too weak a guarantee for our purposes. The third key step, which is our main conceptual contribution, is that we identify suitable structure in the output of the LP-rounding algorithm, and show that  one can exploit this structure to set up a {\em graph-coloring} problem to resolve conflicts in the approximate coloring and obtain a valid coloring.

We now elaborate on these steps.
The reduction to matroid intersection involves moving to a new ground set $\gset'$ consisting of $q$ disjoint copies $\gset_1,\ldots,\gset_q$ of the original ground set $\gset$, and considering the disjoint union of $q$ copies of each input matroid $\mat_i$, which we denote $\mat'_i$. We introduce an additional matroid $\mat'_0$ that encodes that for each element in $\gset$, at most one copy of each element $e\in U$ is picked.
It is easy to see now that a valid $q$-coloring maps to a subset $R\sse\gset'$ that is a basis of $\mat'_0$ and independent in $\mat'_i$, and vice versa.
We consider the natural LP-relaxation of the problem 
and utilize the LP-rounding algorithm of~\cite{LinharesOSZ20}.
Their algorithm, which we refer to as the \losz-algorithm, 
outputs a basis $R$ of $\mat'_0$ 
such that (under our terminology) each ``color class'' $R\cap\gset_c$ is not necessarily independent in all matroids, but is $k$-colorable in
$\mat_i$ for all $i\in[k]$, $c\in[q]$. 
But this guarantee is {\em too coarse} for our purposes: the issue is that being $k$-colorable in each individual matroid does not mean that we can efficiently obtain a good coloring in their intersection. In general, all
one can glean from this guarantee, by way of efficiently producing a coloring, is that we can obtain
a $(k^k\cdot q)$-coloring 
by partitioning
$R\cap\gset_c$ into $k$ independent sets of $\mat_i$, for all $i\in[k]$ and taking all the resulting intersections.%
\footnote{As noted earlier, while it is true here that $R\cap\gset_c$ 
can be covered using $O\left(k^2\right)$ common independent sets (Theorem~\ref{thm:nonconstructive1}),
the proof of this is quite nonconstructive and does not lead to an efficient algorithm.}
Thus, this would only yield a $k^k$-approximation algorithm; while this is a constant for fixed $k$, it is quite far from the $k(k-1)\chimax$-coloring that we are aiming for. 
Moreover, for $k=2$, there are instances where 
$\chi(\mat_1)=\chi(\mat_2)=2$, but $\chi(\mat_1\cap\mat_2)>2$,
so $2$-colorability in each individual matroid would not by itself yield the $2\chimax$-coloring for $2$-matroid-intersection coloring (Theorem~\ref{twomatcol}). 
The upshot is that the guarantee of~\cite{LinharesOSZ20} by itself is insufficient for
obtaining our approximation results. 

We make progress via two chief insights.
Our first insight is that 
one can identify much more structure in the set output by the \losz algorithm,
encapsulated by a property that we call {\em flexible decomposition} (see
Definition~\ref{flexdecomp}). 
We say that a set $S$ admits a $p$-flexible decomposition in a matroid $\mat$ if it has
a partition $T_1,\ldots,T_\ell$ such that every $T_j$ part contains an $\mat$-independent 
set of size at least $|T_j|-(p-1)$, and \textbf{(*)} the union of {\em any} combination of
$\mat$-independent sets from these $T_j$ parts yields an $\mat$-independent set.
Observe that taking $p=1$ corresponds precisely to saying that $S$
is independent in $\mat$; 
on the other hand, one can also infer that if $S$ has a $p$-flexible decomposition in
$\mat$, then $S$ is $p$-colorable in $\mat$. Thus, $p$-flexible decomposition relaxes
independence, but is a {\em strengthening} of $p$-colorability.

We prove that the set $R$ output by the \losz algorithm is such that $R\cap\gset_c$ admits a 
{\em $k$-flexible decomposition in $\mat_i$} for all $i\in[k]$, $c\in[q]$, thereby
strengthening the guarantee of~\cite{LinharesOSZ20}. 
We refer to this compactly by saying that $R$ is a {\em $(k,q)$-pseudocoloring}\,%
\footnote{Note that a $(1,q)$-pseudocoloring is a $q$-coloring, so a
$(p,q)$-pseudocoloring is a relaxation of $q$-coloring (as the name suggests).} 
(see Definition~\ref{pseudocol} and Theorem~\ref{flexdecompthm}). 

Our second key insight is that the stronger guarantee provided by a
$(k,q)$-pseudocoloring can be leveraged to convert this pseudocoloring into a
$k(k-1)q$-coloring of $\mat_1,\ldots,\mat_k$ (Theorem~\ref{convthm}). We argue that, for
each color class $c\in[q]$, one can create a suitable {\em conflict graph} $G_c$ with
vertex set $R\cap\gset_c$ such that: (a) $G_c$ admits an efficiently-computable $k(k-1)$
(vertex) coloring; and (b) such a coloring translates to a cover 
of $R\cap\gset_c$ by
$k(k-1)$ common independent sets. Thus, we obtain a $k(k-1)q$-coloring. 

Roughly speaking, for a color class $c\in[q]$ and matroid $\mat_i$ where $i\in[k]$, in
order to ensure independence in $\mat_i$, due to property (*) 
it suffices to focus on a single part $T_j$ of the $k$-flexible decomposition of
$R\cap\gset_c$ in $\mat_i$ and cover $T_j$ by independent sets of $\mat_i$. 
Using matroid-exchange properties, this can be captured by 
defining suitable conflict pairs, encoded by edges involving elements in $T_j$. 
The union of these (vertex-disjoint) graphs for all parts $T_j$ yields a graph $H_i$ 
with maximum degree at most $k-1$, 
and since we need to ensure independence in all matroids, we take the union of $H_i$ over all
$i\in[k]$ to obtain the conflict graph $G_c$. 
Thus, $G_c$ has maximum degree at most $k(k-1)$, and a simple greedy algorithm 
yields a $k(k-1)+1$ coloring of $G_c$. We improve this using (an algorithmic proof of)
Brooks' theorem (Theorem~\ref{thm:brooks-thm}) to a $k(k-1)$-coloring by proving that $G_c$
does not contain a clique on $k(k-1)+1$ nodes. (For $k=2$, there is a simpler argument
showing that $G_c$ is bipartite, and hence $2$-colorable.)  

\smallskip
It is 
instructive to compare, 
at a high level, our approach, with the
partition-reduction approaches in prior work that achieve good approximations for combinatorial matroids.
These approaches preprocess the input combinatorial matroids
and reduce to partition matroids at the very outset. This allows a simple
sufficient condition ensuring independence in each combinatorial matroid, and so the
resulting instance can be handled using combinatorial means.
In contrast, our approach for general matroids leverages a 
reduction to matroid intersection and the LP-relaxation for the latter problem. 
The flexible-decomposition structure that we identify from the output of a rounding algorithm for this LP is not at all apparent at 
the beginning, 
and only reveals itself
via the workings of the LP-rounding algorithm. 
Thus, we rely heavily on, and are guided by, the LP and the LP-rounding algorithm to isolate the desired
structure, which we finally capitalize on to ``clean up'' the pseudocoloring and obtain a genuine
coloring.
We note that the use of graph coloring in the final clean-up step, that is, encoding independence via conflict-pair avoidance, is also a component of the partition-reduction approach.

\subsection{Coloring Two Matroids when \boldmath$\chimax$ is Large: \Cref{thm:largec_clean} (see Section~\ref{sec:FPRAS})} \label{techoverview-rota}
Here we give a technical overview of the proofs underlying the constructive result in \Cref{thm:largec_clean}.

From Edmonds' Condition (see \Cref{onematcol}), we know that $\chimax$ is the
maximum over all subsets $S$ of $U$ and over all matroids $M_i$, of the density of $S$ in $M_i$,
which is the ratio $|S|/r_i(S)$ of its cardinality to its rank in $M_i$. 
So a natural  approach to show the existence of a good coloring 
is to iteratively decide on some subsets of elements to be color classes, in such a way as to reduce the
densities of the high density subsets of the remaining elements. 
This is the approach taken in the nonconstructive proof of Theorem \ref{thm:montgomery2025} in \cite{montgomery2025}. 
However, the authors note that keeping track of these densities is very difficult, and this is one of the main
reasons that their proof is so complex. 
We avoid this combinatorial difficulty using a polyhedral approach combined with randomization.

It is helpful to first consider a  simple, but flawed, randomized strategy. Suppose we sample each element independently and uniformly at random with probability $p $. It is not too difficult to show, via standard concentration bounds and a union bound, that if we delete these sampled elements, the density of \emph{every} high density set (those with density close to $\chimax$) decreases to $(1-p+o(1))\chimax  $ with high probability if $\chimax = \omega(\log n)$.  In particular, if we choose $p= N/\chimax$ for some sufficiently large $N$, the maximum density drops by approximately $N$. The core insight here is that we do not need to identify or specifically target high density sets $S$. Because this ``blind'' uniform deletion effectively hits all sets with high concentration, it universally reduces the maximum density. This would be a perfect strategy if the randomly sampled set using $p = N / \chimax$ were a collection of $N$ common independent sets in the matroid intersection. Indeed, we could simply color the sampled elements with $N$ colors, and recurse on a subproblem with a chromatic number that is $N$ smaller, implying that only $\chimax + o\left(\chimax\right)$ iterations suffice. Of course, this method of independent sampling makes no structural progress because the resulting sampled set need not be a collection of independent sets in either matroid, let alone their intersection. However, this uniform hitting property suggests the utility of random sampling. So, we need a random sampling method to generate common independent sets that has (nearly) uniform marginals and the
 strong concentration of independent sampling. 

Fortunately, the swap rounding technique of Chekuri, Vondr\'ak, and Zenklusen~\cite{ChekuriVZ11} gives us exactly this type of guarantee. 
Swap rounding rounds a fractional point $x$ in the matroid intersection polytope $P$ to an integer point $x_R$ corresponding to a common independent set $R$. Further, $\mathbb{E}[x_R] \approx x$, and \textit{every} linear function on the coordinates of $x_R$ strongly concentrates around its expectation. Our algorithm will take $x$ to be the uniform fractional point in which each component is $1/\chimax$. It is easy to see that this point is in the matroid intersection polytope $P$ (see \Cref{matpoly}). It might appear that at first glance this point $x$ is combinatorially ``featureless'', revealing no information about the structure of a feasible integer coloring. 
But when swap rounding is applied $N$ times to the uniform fractional point $x$, we obtain our desired properties, 
namely that every element is included with marginal probability approximately $N/\chimax$, \emph{and}  concentration bounds on every subset of elements comparable to independent random sampling. 

Given the complexity of the nonconstructive proof of Theorem \ref{thm:montgomery2025}, our   $(1+\eps)$-approximation algorithm is surprisingly clean. For $N = \lceil \eps \chimax \rceil$, we just apply swap rounding $N$ times to the uniform fractional point $x$, in which each component is $1/\chimax$,
to obtain common independent sets $R_1, R_2, \dots, R_N$. The elements in each $R_i$ are then colored the same color. We then ``peel'' these colored elements away, re-evaluate the maximum density of the remaining uncolored elements, and recurse on the still uncolored elements. Because swap rounding guarantees tight concentration, we are able to argue that the maximum density drops reliably by roughly a $1-\eps$ factor in each step. It is striking how taking a polyhedral approach leads to such a simple and powerful algorithm.

\section{Preliminaries and notation} \label{prelim}
For an integer $k\geq 1$, we use $[k]$ to denote the set $\{1,2,\ldots,k\}$.
We use $\R_+$ to denote the nonnegative reals; similarly $\Z_+$ denotes the nonnegative
integers. 
Recall that in the {\em $k$-matroid-intersection coloring} problem, we
are given $k$ matroids $\mat_1=(\gset,\inds_1),\ldots,\mat_k=(\gset,\inds_k)$ over a
common ground set $\gset$, and we seek to cover $\gset$ using the fewest number of common
independent sets of these $k$ matroids; a covering using $q$ common independent sets
is called a {\em $q$-coloring} of $\mat_1,\ldots,\mat_k$. 

More generally, given an {\em independence system} $N=(\gset,\inds)$ (i.e., $\inds$ is a 
downwards-closed collection of sets), and a set $S\sse\gset$, we say that $S$ is
$q$-colorable in $(\gset,\inds)$, or admits a $q$-coloring in $(\gset,\inds)$, if
$S$ can be covered by $q$ sets in $\inds$. Thus, in the $k$-matroid-intersection coloring
problem, we seek a $q$-coloring of $\gset$ in the independence system
$\Mint=\bigl(\gset,\bigcap_{i\in[k]}\inds_i\bigr)$ for the smallest value of $q$.
The chromatic number, or coloring number, of $N$, denoted
$\chi(N)$, is the smallest $q$ for which there is a $q$-coloring of $\gset$ in $N$.

We assume throughout that every matroid $\mat$ we work with is loopless, that is, $\{e\}$
is independent in the matroid, for every element $e$ in its ground set. 
(Note that otherwise, no coloring of $\mat$ exists.) 

\vspace*{-1ex}
\paragraph{Matroid preliminaries.} We collect a few relevant facts about matroids. 
Let $\mat=(\gset,\inds)$ be a matroid with rank function $\rk:2^{\gset}\mapsto\Z_+$. Two
polytopes commonly associated with $\mat$ are:
\begin{enumerate}[label=(\alph*), topsep=0.1ex, noitemsep, leftmargin=*]
\item its {\em matroid polytope}, $\Pc({\mat})$, which is the convex hull of indicator
vectors of independent sets, which admits the inequality description
$\Pc({\mat})=\bigl\{x\in\R_+^\gset:\ x(S)\leq\rk(S)\ \ \forall S\sse\gset\}$; and 
\item its {\em matroid-base polytope}, $\baseP({\mat})$, which is the convex hull of
indicator vectors of bases of $\mat$, and is given by
$\bigl\{x\in\Pc({\mat}):\ x(\gset)=\rk(\gset)\bigr\}$. 
\end{enumerate}

\begin{fact}[Edmonds' Condition ~\cite{edmonds1968matroid,schrijver_book}] \label{onematcol}
    Let $M=(\gset,\inds)$ be a matroid with rank function $r: 2^U \to \mathbb{Z}_{+}$. Then $\chi(M) = \max_{S\sse\gset}\ceil{\frac{|S|}{r(S)}}$.
\end{fact}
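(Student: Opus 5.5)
The lower bound is immediate and I will do it first. If $S_1,\ldots,S_q$ is any $q$-coloring of $M$, then for any $S\sse\gset$ the sets $S\cap S_c$ are independent subsets of $S$. Hence $|S\cap S_c|\le r(S)$, and so $|S|\le q\,r(S)$. Since $q$ is an integer, $q\ge\ceil{|S|/r(S)}$; note that $r(S)\ge 1$ for nonempty $S$ because $M$ is loopless.

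For the upper bound, let $q:=\max_{S}\ceil{|S|/r(S)}$. The plan is to reduce to matroid intersection, as in the standard proof. Take $q$ disjoint copies $\gset_1,\ldots,\gset_q$ of $\gset$, and on $\gset':=\gset_1\cup\dots\cup\gset_q$ form the direct sum $M'$ of $q$ copies of $M$. Also form the partition matroid $P$ that allows at most one copy of each $e\in\gset$. A common independent set of $M'$ and $P$ of size $|\gset|$ is exactly a $q$-coloring: color $e$ by $c$ if its copy in $\gset_c$ is chosen. By Edmonds' matroid intersection theorem, the maximum common independent set has size $\min_{A\sse\gset'}\bigl(r_{M'}(A)+r_P(\gset'\setminus A)\bigr)$, so it suffices to show this minimum is at least $|\gset|$.

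Verifying that min–max bound is the key step and the only real obstacle. Fix $A\sse\gset'$. Let $T\sse\gset$ be the set of elements all of whose copies lie in $A$, so that $r_P(\gset'\setminus A)=|\gset\setminus T|$. Since $A\cap\gset_c$ contains a copy of $T$ for every $c$, and rank is monotone, $r_{M'}(A)=\sum_c r(A\cap\gset_c)\ge q\,r(T)\ge|T|$ by the choice of $q$. Adding the two bounds gives at least $|\gset|$, as needed. Equivalently, the point $x=\frac1q\bon$ lies in $\Pc(M)$ because $x(S)=|S|/q\le r(S)$, and one could instead invoke integer decomposition of the polymatroid $q\Pc(M)$. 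I would use the intersection argument, since it is elementary and also gives a polytime algorithm via matroid intersection. Combining the two directions yields $\chi(M)=\max_S\ceil{|S|/r(S)}$.
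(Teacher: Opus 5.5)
Your proof is correct, and it follows essentially the route the paper points to. The paper does not prove this fact itself; it cites Edmonds and Schrijver and remarks in the introduction that the formula falls out of the reduction to matroid intersection with $q$ disjoint copies of $M$ plus a one-copy-per-element partition matroid (the same construction it reuses for the $k$-matroid algorithm). Your counting lower bound and your min--max check via the set $T$ of elements whose copies all lie in $A$ supply exactly the details behind that remark.
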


\begin{claim} \label{matpoly}
For any $q\geq\chi(\mat)$, we have $x=\bon/q\in\Pc({\mat})$, where $\bon$ is the all $1$'s
vector in $\R^{\gset}$.  
\end{claim}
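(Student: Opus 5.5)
We verify the inequality description of $\Pc(\mat)$ recalled above, namely nonnegativity together with $x(S)\leq\rk(S)$ for every $S\sse\gset$. Nonnegativity of $x=\bon/q$ is immediate since $q\geq 1$; the matroid is loopless, so $\chi(\mat)\geq 1$.

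For the rank constraints, fix any $S\sse\gset$. Then $x(S)=|S|/q$, so we need $|S|\leq q\,\rk(S)$. If $S=\es$ both sides are $0$. Otherwise, looplessness gives $\rk(S)\geq 1$, so the ratio $|S|/\rk(S)$ is well defined. By Edmonds' Condition (Fact~\ref{onematcol}),
\[
\frac{|S|}{\rk(S)}\leq\ceil{\frac{|S|}{\rk(S)}}\leq\chi(\mat)\leq q,
\]
which rearranges to $|S|/q\leq\rk(S)$, i.e., $x(S)\leq\rk(S)$.

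As a sanity check that avoids the polyhedral description, one can argue directly from a coloring. Let $I_1,\ldots,I_q$ be a $q$-coloring of $\gset$, made disjoint, padding with empty sets if $q>\chi(\mat)$. Each $I_j$ is independent, and $\sum_j \bon_{I_j}=\bon$. Hence $\bon/q=\frac1q\sum_j\bon_{I_j}$ is a convex combination of indicator vectors of independent sets, and so lies in $\Pc(\mat)$.

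There is no real obstacle here. The only points needing care are the empty set and the use of looplessness to ensure $\rk(S)>0$ for nonempty $S$.
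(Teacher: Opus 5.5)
Your proof is correct and matches the paper's argument: Edmonds' formula gives $|S|\le q\,\rk(S)$ for every $S\sse\gset$, which is exactly the rank-inequality description of $\Pc(\mat)$. Your second argument, writing $\bon/q$ as the uniform average of the indicator vectors of a disjoint $q$-coloring padded with empty sets, is also valid. It needs only the convex-hull definition of $\Pc(\mat)$, not Edmonds' formula or the inequality description.
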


\begin{proof}
By Fact~\ref{onematcol}, $q\geq\chi(\mat)$ implies that $|S|\leq q\cdot\rk(S)$ for all $S\sse\gset$,
i.e., $\bon/q\in\Pc(\mat)$.
\end{proof}

Let $S\sse\gset$. The {\em refinement of $\mat$ along $S$} yields the matroids
$\mat_1=\mat\vert_S=(S, \inds\cap 2^S)$, which is the restriction of $\mat$ to $S$, and 
$\mat_2=\mat/S$, which is the matroid obtained from $\mat$ by contracting $S$. We have 
$\mat_2=\bigl(\gset-S, \{A\sse\gset-S: A\cup J\in\inds\}\bigr)$, where $J$ is a maximal
independent set contained in $S$; it is well known that the definition of $\mat/S$ does
not depend on the choice of which maximal independent subset of $S$ is chosen as $J$. Further, the rank function of the matroid contraction $\mat/S$ is given by $r_{\mat/S}(A) = r(A \cup S) - r(S)$ for all $A \subseteq U-S$.

\section{\boldmath $k$-Matroid-Intersection Coloring: Proof of Theorem~\ref{matcolthm}} 
\label{iterrnd}

We now present the $k(k-1)$-approximation algorithm for $k$-matroid-intersection coloring, establishing Theorem~\ref{matcolthm}. 
Recall that $\mat_i=(\gset,\inds_i)$, $i\in[k]$, are the $k$ input matroids, and $\chimax:=\max_{i\in[k]}\chi(\mat_i)$.
Our approximation guarantees are relative to $\chimax$, and as we discuss in Section~\ref{intgaps}, also yield improved integrality-gap bounds for the natural set-covering LP-formulation of the problem.

As outlined in Section~\ref{techoverview-kmat}, we leverage the reduction from $q$-coloring to a $(k+1)$-matroid-intersection problem. Specifically, we consider a ground set $\gset'$ consisting of $q$ disjoint copies of $\gset$. We define $\mat'_0$ as the partition matroid on $\gset'$ that ensures at most one copy of each original element $e \in \gset$ is selected, and for $i \in [k]$, we let $\mat'_i$ be the disjoint union of $q$ copies of the original matroid $\mat_i$. As established previously, finding a $q$-coloring of $\mat_1, \dots, \mat_k$ is equivalent to finding a basis $R$ of $\mat'_0$ that is independent in $\mat'_1, \dots, \mat'_k$.
We take $q=\chimax$, 
which ensures that the LP-relaxation of this $(k+1)$-matroid-intersection problem is
feasible (by Claim~\ref{matpoly}).

We apply the iterative refinement algorithm of~\cite{LinharesOSZ20} (the \losz-algorithm) to this instance. The set $R$ output by the \losz algorithm does not yield a coloring. While the analysis in~\cite{LinharesOSZ20} ensures that each resulting ``color class'' $R \cap \gset_c$ is $k$-colorable in each $\mat_i$ individually, as discussed earlier, this is insufficient for obtaining even a $\poly(k)$-approximation factor for $k$-matroid-intersection coloring. We instead prove that the \losz-algorithm has much more structure than previously known. We call this structure \emph{flexible decomposition}, and exploit this structure to resolve conflicts in each color class via graph coloring.

\begin{definition}[{\bf Flexible decomposition}] \label{flexdecomp}
Let $\mat = (\gset,\inds)$ be a matroid and $S \subseteq \gset$. 
Let $p\geq 1$ be an integer. 
We say that $S$ admits a {\em $p$-flexible decomposition} 
$T_1,T_2,\ldots,T_\ell$ in $\mat$, or that $T_1,\ldots,T_\ell$ is a $p$-flexible decomposition of
$S$ in $\mat$, if: 
(a) $T_1,T_2,\ldots,T_\ell$ is a partition of $S$; \ \
(b) $\rk(T_j)\geq |T_j|-p+1$ for all $j \in [\ell]$; and \ \
(c) for any collection of independent sets $I_j\subseteq T_j$ for $j \in [\ell]$, 
we have that $\bigcup_{j\in[\ell]}I_j\in\inds$. 

Unless otherwise stated, we will also require that the partition $T_1,\ldots,T_\ell$ can be
{\em efficiently} computed. We will often say ``$S$ admits a $p$-flexible
decomposition in $\mat$'' to refer to the fact there is an underlying (efficiently-computable)
partition $T_1,\ldots,T_\ell$ of $S$ satisfying properties (b), (c).
\end{definition}
Note that $S\in\inds$ iff it admits a $1$-flexible decomposition: 
if $S\in\inds$, then the trivial partition comprising just the set $S$ is a $1$-flexible
decomposition; conversely, 
if $T_1,\ldots,T_\ell$ is a $1$-flexible decomposition of $S$, then properties (b),
(c) imply that $\bigcup_{j\in[\ell]} T_j\in\inds$. 
Note also that if $S$ has a $p$-flexible decomposition in $\mat$, then  
$S$ is $p$-colorable in $\mat$: 
the combination of the $\rk(T_j)$-size independent sets in the $T_j$ parts, along with any
combination of (the at most $p-1$) elements excluded from these independent sets where we
take (at most) one element from each part, yields a $p$-coloring in $\mat$. 
Thus, $p$-flexible decomposition strengthens the notion of $p$-colorability.

In Section~\ref{flexdecomp}, we show that the set $R$ output by the \losz algorithm is such that $R\cap\gset_c$ admits a $k$-flexible decomposition in $\mat_i$ for all $i\in[k]$, $c\in[q]$ (Theorem~\ref{flexdecompthm}), strengthening the guarantee in~\cite{LinharesOSZ20}. 
We call this notion of ``approximate coloring'' based on flexible decomposition a {\em pseudocoloring}.

\begin{definition}[{\bf Pseudocoloring}] \label{pseudocol}
Let $\mat_i=(\gset,\inds_i)$, $i=1,\ldots,k$ be $k$ matroids. Let $p,q\geq 1$ be
integers. A {\em $(p,q)$-pseudocoloring} of $\mat_1,\ldots,\mat_k$ is a partition
$R_1,\ldots,R_q$ of $\gset$ such that $R_c$ admits a $p$-flexible decomposition in
$\mat_i$, for all $i\in[k]$, $c\in[q]$. 
We also sometimes say that $R_1,\ldots,R_q$ is a $(p,q)$-pseudocoloring of
$\mat_1,\ldots,\mat_k$.  
\end{definition}

We next show in Section~\ref{pseudoconv} that one can exploit this pseudocoloring and turn it into a valid 
coloring incurring a $k(k-1)$ multiplicative-factor blow-up in the number of colors.
We %
solve a {\em graph coloring problem} to achieve this end.
For each color class $c\in[q]$, we create a suitable conflict graph
$G_c$ with vertex set $R\cap\gset_c$ and argue that: (a) we can efficiently compute a 
$k(k-1)$ (vertex) coloring of $G_c$; 
and (b) such a coloring translates to a cover of $R\cap\gset_c$ by $k(k-1)$ common
independent sets.  
Thus, since we start off with $q=\chimax$ color classes, we obtain a $k(k-1)\chimax$-coloring of $\mat_1,\ldots,\mat_k$.

\subsection{Flexible Decomposition via LP-Rounding and Iterative Refinement} 
\label{flexdecompalg} 

As noted earlier, $S\sse\gset$ admits a $1$-flexible decomposition in a matroid
$\mat=(\gset,\inds)$ iff it is independent in $\mat$. Thus, a $(1,q)$-coloring corresponds
precisely to a $q$-coloring of $\mat_1,\ldots,\mat_k$, and so 
a pseudocoloring is a relaxation of coloring. 

\begin{theorem} \label{flexdecompthm} \label{cor:interface} \label{pseudocolthm}
Given matroids $\mat_i=(\gset,\inds_i)$ for $i=1,\ldots,k$, we can efficiently find a
$(k,\chimax)$-pseudocoloring of $\mat_1,\ldots,\mat_k$.
\end{theorem}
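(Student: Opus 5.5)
Set $q=\chimax$ and work with the $(k+1)$-matroid-intersection instance on $\gset'=\gset_1\cup\cdots\cup\gset_q$. Here $\mat'_0$ is the partition matroid demanding one copy of each $e\in\gset$, and $\mat'_i$ is the direct sum of $q$ copies of $\mat_i$. By Claim~\ref{matpoly}, the point $x'$ with $x'_{e_c}=1/q$ on every copy lies in $\baseP(\mat'_0)\cap\bigcap_{i\in[k]}\Pc(\mat'_i)$, so the LP is feasible. We then run the \losz iterative-refinement algorithm. It repeatedly solves the LP $\{x\in\baseP(\mat'_0),\ x\in\Pc(\mat''_i)\}$, where each $\mat''_i$ is the current refined matroid, a direct sum of restrictions and contractions obtained from $\mat'_i$ along tight sets. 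It takes an extreme point $x^*$, deletes elements with $x^*_e=0$, and fixes elements with $x^*_e=1$ into $R$. When neither happens, a counting argument on the laminar tight families guarantees some refinement step, and the algorithm refines each $\mat''_i$ along a tight chain. Its output is a basis $R$ of $\mat'_0$, which gives the partition $R_c=R\cap\gset_c$ of $\gset$.

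The key structural claim, proved by induction over the iterations, concerns the final refined matroid for $\mat_i$. Write it as a direct sum $\bigoplus_j N_j$ of minors $N_j=(\mat_i|_{A_j})/A_{j-1}$ along a chain $A_0\subsetneq A_1\subsetneq\cdots$ inside each copy $\gset_c$. For each such piece, $T_j:=R\cap(A_j\setminus A_{j-1})$ is "almost independent" in $N_j$, in the sense that $|T_j|-r_{N_j}(T_j)\le k-1$. This is exactly where the LOSZ counting (each refinement loses at most $k-1$ in rank) gives the usual $k$-colorability guarantee; I will re-derive it piecewise rather than globally. Property (c) of Definition~\ref{flexdecomp} then comes from the chain structure. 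If $I_j\sse T_j$ is independent in $N_j=(\mat_i|_{A_j})/A_{j-1}$ for every $j$, then inductively $I_1\cup\cdots\cup I_j$ is independent in $\mat_i|_{A_j}$, since independence in the contraction by $A_{j-1}$ combines with any independent subset of $A_{j-1}$. So the union is independent in $\mat_i$.

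This exposes a gap, because property (c) needs $I_j$ independent in $\mat_i$, not in $N_j$. I would close it with a stronger invariant. Each time an element is fixed or a set is refined, the already-fixed elements of $R$ in $A_{j-1}$ span $A_{j-1}$, meaning $r(R\cap A_{j-1})=r(A_{j-1})$ since the chain sets are tight. Under this invariant, independence of $I_j$ in $\mat_i$ combined with a basis of $A_{j-1}$ drawn from $R$ is what we need. I expect to refine the decomposition so that each $T_j$ contains exactly the elements whose contribution is "new rank" in piece $j$. Then any independent $I_j\sse T_j$ is also independent in $N_j$ (up to relabeling the tight chain), and property (b) becomes $r_{\mat_i}(T_j)\ge r_{N_j}(T_j)\ge|T_j|-(k-1)$. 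Efficiency is immediate, since the chains are recorded by the algorithm.

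The main obstacle is the middle step: reconciling independence in $\mat_i$ with independence in the contracted pieces $N_j$, so that (c) holds for arbitrary $\mat_i$-independent $I_j\sse T_j$. My first attempt is to choose the chain so that $R\cap A_{j-1}$ spans $A_{j-1}$ and contracts nothing outside $R$. I would verify the excess bound $|T_j|-r(T_j)\le k-1$ using the LOSZ token-counting argument localized to each chain piece. If that fails, I would instead define $T_j$ via maximal-violation chains of $R\cap\gset_c$ directly, using the extreme-point rank lemma to bound the excess per piece.
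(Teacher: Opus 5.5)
Your reduction, the feasibility of $\bon/q$ via Claim~\ref{matpoly}, and the reading of the \textsc{LOSZ} restriction/contraction tree as a chain of minors $N_j=(M_i|_{A_j})/A_{j-1}$ follow the paper's route (Theorem~\ref{iterrefthm} with $p_i=k$, then Claim~\ref{decompclos}). So does your induction showing that a union of $N_j$-independent sets $I_j\subseteq T_j$ is $M_i$-independent. Two small corrections. The counting argument in \textsc{LOSZ} yields a matroid that can be \emph{dropped} in step~\ref{mdrop}, not a refinement. And $|T_j|-r_{N_j}(T_j)\le k-1$ needs no token counting: it is the drop condition $|E(N_j)|\le r_{N_j}(E(N_j))+k-1$ plus monotonicity of $|X|-r(X)$.

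The real problem is the step you flag and leave open. Definition~\ref{flexdecomp}(c) quantifies over $M_i$-independent $I_j$, and your proposed fix cannot deliver it. Asking (c) for all $M_i$-independent $I_j\subseteq T_j$ says precisely that $M_i|_{R_c}=\bigoplus_j M_i|_{T_j}$. Hence such a $k$-flexible decomposition exists iff every connected component of $M_i|_{R_c}$ has nullity at most $k-1$. No re-choice of the chain affects this, and the leaf bounds only control nullity inside each $N_j$.

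Here is a concrete instance. Let $M'_0$ be the partition matroid with parts $\{e_j,f_j\}$, $j\in[5]$. Let $M'_1$ be the cycle matroid of the theta graph $e_1=ab$, $e_2=bc$, $e_3=ca$, $e_4=bd$, $e_5=cd$. Let $M'_2$ be the partition matroid with parts $\{e_1,f_2\},\{e_2,f_3\},\{e_3,f_1\},\{e_4,f_5\},\{e_5,f_4\}$. Take $p_1=p_2=2$ and weights $w_{e_1}=w_{e_2}=w_{e_3}=4$, $w_{e_4}=w_{e_5}=3$, $w_{f_j}=0$.
\begin{itemize}
\item The unique LP optimum has $x^*_{e_j}=2/3$ for $j\le 3$ and $x^*_{e_4}=x^*_{e_5}=1/2$.
\item \textsc{LOSZ} refines $M'_1$ along the tight triangle $A_1=\{e_1,e_2,e_3\}$. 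The admissible run that drops $M'_1|_{A_1}$ and later $M'_1/A_1$ returns $R=\{e_1,\dots,e_5\}$.
\item Both leaves have nullity $1$, and $R\cap A_1=A_1$ spans $A_1$, so your spanning invariant holds.
\item Yet $\{e_1,e_2\}\subseteq T_1$ and $\{e_4,e_5\}\subseteq T_2$ are independent, while their union contains the triangle $\{e_2,e_4,e_5\}$.
\item Since $M'_1|_R$ is connected of nullity $2$, no partition of $R$ satisfies (b) and (c).
\end{itemize}
This is an instance of Theorem~\ref{iterrefthm} rather than of the coloring reduction. But nothing in your argument uses the special form of the $M'_i$. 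Your fallback (maximal-violation chains, rank lemma) is too unspecified to get around it.

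You will not find the missing step in the paper either. Its induction step for Theorem~\ref{iterrefthm}(c) only considers $I''_j$ that are independent in $M''=M/S$. That is exactly the minor-relative statement you proved, and the example shows the combined partition can violate (c) as written.

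What your chain argument does establish, and what Theorem~\ref{convthm} actually needs, is the minor-relative notion:
\begin{itemize}
\item keep the leaf minors $N_j$;
\item require $r_{N_j}(T_j)\ge|T_j|-k+1$, and (c) only for $N_j$-independent $I_j\subseteq T_j$;
\item in building $H_i$, choose $A^i_j$ independent in $N_j$ and take the exchange pairs $(e,e')$ in $N_j$. No element of $R$ is a loop of its leaf, since that would force $x^*_e=0$.
\end{itemize}
Then Lemma~\ref{gcoltomatcol} follows from your chain argument, and the degree and clique bounds of Lemma~\ref{confcol} are unchanged. So the productive fix is to weaken the target to this version rather than strengthen the decomposition. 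As a proof of the statement with Definition~\ref{flexdecomp} read literally, your proposal has a genuine gap that the \textsc{LOSZ} chain structure alone cannot fill.
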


Theorem~\ref{flexdecompthm} will follow readily from the following theorem,
which strengthens Theorem 2 in~\cite{LinharesOSZ20} and is the main technical result of
this section.

\begin{theorem}\label{thm:strong_iterative} \label{iterrefthm}
Let $\matref_0 = (\gsetref_0,\indsref_0)$, and $\matref_i = (\gsetref_i,\indsref_i)$, $i\in[k]$ be matroids,
where $\gsetref_i\subseteq\gsetref:=\gsetref_0$ for all $i\in[k]$, and $w\in\R^{\gsetref}$.
Consider the following LP. 
%
%
\begin{equation}
\max \quad w^Tx \qquad \text{s.t.} \qquad
x \in\baseP({\matref_0}), \quad\ \  x\vert_{\gsetref_i} \in \Pc({\matref_i}) \quad \forall i \in[k].
\tag{$\lpmat$} \label{matlp}
\end{equation}
where $x\vert_S$ denotes the vector $(x_e)_{e\in S}$, for $S\sse\gsetref$.
Let $p_1, p_2, \dots, p_k$ be positive integers such that 
$\sum_{i \in [k] : e \in \gsetref_i}\frac{1}{p_i} \leq 1$ for all $e \in\gsetref$.

If \eqref{matlp} is feasible, then one can efficiently compute $R \subseteq \gsetref$ such that
\begin{enumerate*}[label=(\alph*)]
    \item $R$ is a basis of $\matref_0$; {\ \ }
    \item $w(R) \geq\OPT_{\text{\ref{matlp}}}$; and {\ \ }
    \item $R\cap\gsetref_i$ has a $p_i$-flexible decomposition in $\matref_i$ for all
      $i \in [k]$. 
\end{enumerate*}
\end{theorem}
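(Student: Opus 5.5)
We follow the iterative-refinement framework of~\cite{LinharesOSZ20}, tracking along the way the partition structure that the algorithm implicitly creates. The algorithm keeps a current instance: a matroid $\mat'_0$ and, for each $i\in[k]$, a \emph{list} of matroids $\mat'_{i,1},\ldots,\mat'_{i,\ell_i}$ with pairwise-disjoint ground sets. Initially each list is the single matroid $\matref_i$. The LP constraint $x\vert_{\gsetref_i}\in\Pc(\matref_i)$ is replaced by $x\vert_{\gsetref_{i,j}}\in\Pc(\mat'_{i,j})$ for every $j$.

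One iteration does the following. First compute an extreme-point optimum $x^*$ of the current LP.
\begin{itemize}
\item Delete every element with $x^*_e=0$.
\item Contract every element with $x^*_e=1$ in $\mat'_0$ and in the part containing it, and add it to $R$.
\item Otherwise, for each $(i,j)$, take a maximal chain of tight sets of $\mat'_{i,j}$ in the support of $x^*$. For each tight set $S$ in the chain, refine $\mat'_{i,j}$ along $S$ into $\mat'_{i,j}\vert_S$ and $\mat'_{i,j}/S$. Iterating over the chain splits each part into minors whose ground sets are the chain differences.
\end{itemize}
Refinement along a tight set keeps $x^*$ feasible, since $x^*(S)=\rk(S)$ gives $x^*\in\Pc(\mat\vert_S)\times\Pc(\mat/S)$. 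The LP optimum therefore never decreases.

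When refinement is impossible, we argue as in~\cite{LinharesOSZ20} with a token/counting argument. Take a laminar (here, chain) basis of tight constraints, and use the condition $\sum_{i:e\in\gsetref_i}1/p_i\le 1$. This yields some part $\mat'_{i,j}$ whose fractional ground set $F$ has $|F|\le \rk'_{i,j}(F)+p_i-1$, where $\rk'_{i,j}$ is the rank in the current minor. We then \emph{drop} the constraint of that part: it becomes a finished part $T$, and we record $T$ for matroid $i$. Every iteration strictly decreases the number of fractional elements or the number of active constraints, so the procedure runs in polynomial time. It ends with $R$ a basis of $\mat'_0$ and $w(R)\ge\OPT$, because the constraints of $\mat'_0$ are never relaxed and the LP value is monotone.

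The flexible decomposition of $R\cap\gsetref_i$ in $\matref_i$ consists of the final ground sets of the parts $\mat'_{i,j}$ intersected with $R$. The dropped parts contribute sets with $|T|\le\rk(T)+p_i-1$. Parts that are never dropped contribute sets independent in their minor. Property (c) follows by induction on refinements from one standard fact. If $\mat$ is refined along $S$, $I_1$ is independent in $\mat\vert_S$, and $I_2$ is independent in $\mat/S$, then $I_1\cup I_2$ is independent in $\mat$. This holds because $I_2\cup J$ is independent for a maximal independent $J\subseteq S$ extending $I_1$. Contracting the integral elements is handled the same way, treating the contracted set as a forced part.

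Property (b) needs care. The rank in the minor $\mat'_{i,j}$ must be at most the rank in $\matref_i$ of the same set, so that a large minor rank gives a large original rank. For restrictions this holds with equality, and for contractions we have $r(A\cup S)-r(S)\le r(A)$, so it holds in both cases. After dropping a part, further deletions of its elements only remove elements, and $|T|-\rk(T)$ does not increase under deletion, so (b) persists.

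The main obstacle is the counting argument guaranteeing that a part with deficiency at most $p_i-1$ always exists. Here the chain structure within each part must be combined with the weighted condition $\sum 1/p_i\le1$. Our plan is to adapt the proof of Theorem 2 in~\cite{LinharesOSZ20} essentially verbatim, checking that its measure (fractional elements minus rank in the relevant block) is exactly $|T|-\rk'(T)$ for a single part of our refinement.
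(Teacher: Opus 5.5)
\textbf{Same route as the paper, but the (c) step has a gap, and the paper's own argument has it too.} Your plan matches the paper's proof. You run \losz, track the tree of restrictions and contractions, and take the leaf ground sets intersected with $R$ as the parts. You get (b) from $\rk_{\mat/S}(A)\le\rk_{\mat}(A)$ together with the fact that $|T|-\rk(T)$ cannot increase when passing to subsets. You get (c) by induction from the restriction/contraction fact, and you defer termination and the counting argument to \losz. Refining along a maximal chain is harmless, since tight sets are closed under union and intersection.

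The problem is what the induction actually proves. It shows $\bigcup_j I_j\in\indsref_i$ whenever each $I_j\subseteq T_j$ is independent in the \emph{leaf minor} $N_j$ that $T_j$ came from. Definition~\ref{flexdecomp}(c) instead requires this whenever each $I_j$ is independent in $\matref_i$ itself. Independence in $\mat$ does not imply independence in $\mat/S$; this is exactly the inequality $\rk_{\mat/S}\le\rk_{\mat}$ you use for (b). So the inductive step does not yield the literal property. The paper's induction step has the same defect: it only considers $\mat''$-independent sets $I''_j$.

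\textbf{The literal property actually fails.} Let $\mat$ have rank $2$ on $\{a,b,c,d\}$, with $a\parallel b$ and no other parallel pair. Suppose $x^*\equiv\frac12$ on these elements. Then $S=\{a,b\}$ is tight, $\mat/S$ has rank $1$ on $\{c,d\}$, and for $p_i=2$ both children satisfy the drop condition. If $a,c,d$ later enter $R$, then $\{a\}$ and $\{c,d\}$ are $\mat$-independent subsets of the two parts, yet their union is dependent. If all four elements enter $R$, no partition satisfies the literal (b) and (c) at all. Applying (c) to maximal independent subsets gives $\sum_j\rk(T_j)\le 2$. Property (b) and looplessness give $4\le\sum_j\rk(T_j)+\ell\le 2\sum_j\rk(T_j)$. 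Together these force two rank-$1$ pairs, but $\{a,b\}$ is the only one.

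This really happens. Take $\matref_0$ the partition matroid with parts $\{a,e\},\{b,f\},\{c,c'\},\{d,g\}$, $\matref_1=\mat$, and $\matref_2$ the partition matroid with parts $\{e,b\},\{g,c\}$. Set $p_1=p_2=2$, and let $w$ be $(1,4,3,1)$ on $(a,b,c,d)$ and $0$ elsewhere. The unique LP optimum is $x\equiv\frac12$. \losz may then drop $\mat\vert_S$, later drop $\mat/S$, and output $R=\{a,b,c,d\}$.

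\textbf{The fix.} Keep the minors in the invariant. Require a partition $T_1,\ldots,T_\ell$ together with minors $N_j$ of $\matref_i$ such that:
\begin{itemize}
\item $\rk_{N_j}(T_j)\ge|T_j|-p_i+1$, which the drop condition gives directly, with no rank transfer to $\matref_i$ needed;
\item $\bigcup_j I_j\in\indsref_i$ whenever each $I_j\subseteq T_j$ is $N_j$-independent.
\end{itemize}
Your induction proves exactly this. It is also enough for Section~\ref{pseudoconv}, provided $A^i_j$ and the exchange partners $e'$ are chosen inside $N^i_j$ rather than in $\mat_i$. Every element of $T^i_j$ is a non-loop of $N^i_j$, because $x^*_e>0$ when $N^i_j$ was dropped. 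As stated, however, part (c) of the theorem, read with Definition~\ref{flexdecomp}, is not what the algorithm delivers.
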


We prove Theorem~\ref{iterrefthm} shortly, but first we show how this easily yields
Theorem~\ref{flexdecompthm} as a corollary.

\begin{claim} \label{decompclos}
Let $\mat=(\gset,\inds)$ be a matroid. Suppose $S\sse\gset$ has a $p$-flexible
decomposition in $\mat$, and $A\sse S$. Then, $A$ also has a $p$-flexible decomposition in
$\mat$.
\end{claim}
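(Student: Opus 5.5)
The plan is to use the induced partition. Let $T_1,\ldots,T_\ell$ be a $p$-flexible decomposition of $S$ in $\mat$, and set $T'_j:=T_j\cap A$ for $j\in[\ell]$, discarding empty parts. I will check properties (a)--(c) of Definition~\ref{flexdecomp} for $T'_1,\ldots,T'_\ell$, and also note that this partition is efficiently computable from the original one.

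\emph{Property (a).} The sets $T'_j$ partition $A$, since the $T_j$ partition $S\supseteq A$.

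\emph{Property (c).} Take independent sets $I_j\sse T'_j$. Each $I_j$ is also an independent subset of $T_j$. Property (c) for the original decomposition then gives $\bigcup_j I_j\in\inds$.

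\emph{Property (b).} This is the only step that needs a real argument: I must show $\rk(T'_j)\ge |T'_j|-p+1$. I will use the standard fact that rank decreases by at most one when a single element is removed, i.e.\ $\rk(X)-\rk(X-e)\le 1$. This follows from submodularity, or directly from the fact that a maximal independent subset of $X$ loses at most one element. Iterating over the elements of $T_j\setminus T'_j$ gives
\[
\rk(T'_j)\ \ge\ \rk(T_j)-|T_j\setminus T'_j|.
\]
Combining this with $\rk(T_j)\ge |T_j|-p+1$ yields
\[
\rk(T'_j)\ \ge\ |T_j|-p+1-|T_j\setminus T'_j|\ =\ |T'_j|-p+1.
\]
Equivalently, $|T'_j|-\rk(T'_j)\le |T_j|-\rk(T_j)$, i.e.\ the nullity of a subset is at most that of the superset.

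All three properties hold, so $T'_1,\ldots,T'_\ell$ is a $p$-flexible decomposition of $A$ in $\mat$.
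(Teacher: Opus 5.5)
Your proposal is correct and follows the paper's proof: the paper also takes the induced partition $\{T_j\cap A\}_{j\in[\ell]}$ with empty parts discarded, and says only that it is easy to check. You fill in that check, in particular property (b) via the monotonicity of nullity $|X|-\rk(X)$ under taking subsets.
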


\begin{proof}
Let $T_1,\ldots,T_\ell$ be a $p$-flexible decomposition of $S$ in $\mat$. 
Then it is easy to see that $\{T_j\cap A\}_{j\in[\ell]}$ (after discarding any empty sets)
yields a $p$-flexible decomposition of $A$ in $\mat$.
\end{proof}

\begin{proof}[\Cref{cor:interface}]
We reduce the coloring problem to a matroid-intersection problem as discussed at the
beginning of the section, and invoke Theorem~\ref{iterrefthm} on this input.
Let $q=\chimax$.
Let $\gset_c$ be a disjoint copy of $\gset$ for all $c\in[q]$ and
$\gsetref=\gsetref_0=\bigcup_{c\in[q]}\gset_c$. Let $\matref_0=(\gsetref,\indsref_0)$ be the
partition matroid encoding that for every $e\in\gset$, at most one copy of $e$ from
$\gset_1,\ldots,\gset_{q}$ is picked. 
For each $i\in[k]$, let $\matref_i=(\gsetref,\indsref_i)$ be the disjoint union of $q$ copies of
$\mat_i$, where the $c$-th copy resides over the ground set $\gset_c$ for all $c\in[q]$.

The weight vector $w$ will be
irrelevant for our purposes, and we can take $w$ to be arbitrary. 

Observe that $x=\bon/q$ is a feasible solution to \eqref{matlp}. 
Clearly, we have $x\in\baseP({\matref_0})$ for the partition matroid $\matref_0$.
For a matroid $\matref_i$, $i\in[k]$, since $\chi(\mat_i)\leq q$, 
we also have that $\chi(\matref_i)\leq q$, due to the definition of disjoint union. It
follows from Claim~\ref{matpoly} that $x\in\Pc({\matref_i})$.

Let $R\sse\gsetref$ be the output of Theorem~\ref{iterrefthm} for the
$\matref_0,\matref_1,\ldots,\matref_k$ matroids, taking $p_i=k$ for $i\in[k]$. 
We argue that $\{R\cap\gset_c\}_{c\in[q]}$, where we interpret each $R\cap\gset_c$ set as
a subset of $\gset$, is a $(k,\chimax)$-pseudocoloring of $\mat_1,\ldots,\mat_k$. 
Since $R$ is a basis of $\matref_0$, it is immediate that $\{R\cap\gset_c\}_{c\in[q]}$ is a
partition of $\gset$.
Fix an index $i\in[k]$.
Part (c) of Theorem~\ref{iterrefthm} yields that $R \cap U_i'$ has a $k$-flexible decomposition in
$\matref_i$. By Claim~\ref{decompclos}, this implies that $R\cap\gset_c$ has a $k$-flexible
decomposition in $\matref_i$ for all $c\in[q]$. Since $\matref_i$ restricted to $\gset_c$ is
simply a copy of $\mat_i$, 
this amounts to saying that $R\cap\gset_c$ has a $k$-flexible decomposition in
$\mat_i$ for all $c\in[q]$.
\end{proof}

\subsubsection*{Proof of Theorem~\ref{iterrefthm}}
As mentioned earlier, the algorithm for producing $R$ is the same as the
iterative-refinement based LP-rounding algorithm in~\cite{LinharesOSZ20}, which we describe 
below for completeness. 

\renewcommand{\thealgorithm}{}
\begin{algorithm}
{\small
\caption{\losz \hfill \textnormal{// Iterative-refinement algorithm in~\cite{LinharesOSZ20}}
\label{loszalg}}

\begin{algorithmic}[1]
\State Initialize $\M=\{\matref_1,\ldots,\matref_k\}$,
$\mbase=(\gbase,\indbase)\assign\matref_0$, $p_{\matref_i}=p_i$ for all $i\in[k]$, and
$R\assign\es$. 

\State Compute an extreme-point optimal solution $x^*$ to \eqref{matlp} for the matroids
$\{\mbase\}\cup\M$. \label{lpsolve}

\State For every $e\in\gbase$ with $x^*_e=0$, delete $e$ from $\mbase$ and all matroids in
$\M$ containing $e$.
\State For every $e\in\gbase$ with $x^*_e=1$, contract $e$ in $\mbase$ and all matroids in
$\M$ containing $e$; also add $e$ to $R$. \label{contract}
\State If the ground-set of some matroid in $\mat\in\M$ becomes empty as a result of these
deletions and contractions, remove $\mat$ from $\M$. \label{mupdate}
\Comment{Note that deletions and contractions remove elements from $\gbase$.}

\State \IfThen{$\gbase=\es$} \Return $R$.

\While{there is a matroid $\mat=(\gset,\inds)\in\M$, set $S\subsetneq\gset$, $S\neq\es$
  with $x^*(S)=\rk_{\mat}(S)$} \label{mchoose}
\Comment{$\rk_{\mat}$ is rank f'n. of $\mat$}
\State Update $\M\assign\M\sm\{\mat\}\cup\{\mat\vert_S, \mat/S\}$ and set
$p_{\mat\vert_S}=p_{\mat/S}=p_{\mat}$. \label{refine}
\EndWhile

\State Find a matroid $\mat=(\gset,\inds)\in\M$ with
$|\gset|\leq\rk_{\mat}(\gset)+p_{\mat}-1$ and remove $\mat$ from $\M$. \textbf{goto}
step~\ref{lpsolve}. \label{mdrop}
\end{algorithmic}
}
\end{algorithm}
\renewcommand{\thealgorithm}{\arabic{algorithm}}

\cite{LinharesOSZ20} show that the algorithm is well-defined, terminates
in polynomial time, and that parts (a) and (b) of the theorem statement hold. We do not repeat
their arguments here and only note that (a) and (b) follow because we only pick elements
with $x^*_e=1$, and $x^*$ yields a feasible solution to the new LP that is solved when we
go to step~\ref{lpsolve}. 
We focus on proving part (c), which strengthens the claim in~\cite{LinharesOSZ20} that
$R\cap\gsetref_i$ is $p_i$-colorable in $\matref_i$ for all $i\in[k]$.

Note that each matroid in $\M$ arises from a specific input matroid. Fix an
input matroid $\matref_i$. Note that any matroid $M$ that arises from $\matref_i$ via
refinement in step~\ref{refine} or contraction in step~\ref{contract} has $p_M=p_i$.
We can describe the sequence of matroids that arise from $\matref_i$ via a
tree. The nodes of this tree are essentially the matroids arising from $\matref_i$ that
are present in $\M$ at some point in the algorithm.
The root node is $\matref_i$. Consider a node $\mat=(\gset,\inds)$ of the tree. If $\mat$
and $S\sse\gset$ are chosen in step~\ref{mchoose} for refinement, then 
the children of $\mat$ are the matroids $\mat\vert_S$ and $\mat/S$. It will be convenient to
also include matroids obtained via contraction as part of this tree. If we contract
$S\subsetneq\gset$ in step~\ref{contract}, 
the children of $\mat$ are the free matroid $(S,2^S)$ (which is the same as
$\mat\vert_S$, since $S\in\inds$) and $\mat/S$. 
(In essence, contracting $S$ can also be viewed as refining $\mat$ along $S$.) 
The free matroid on $S$ is not part of $\M$, so does not create any children and is a
leaf of the tree; also if $S=\gset$ is contracted, then $\mat$ gets removed from $\M$ in
step~\ref{mupdate} and is a leaf of the tree. 
Observe that every leaf of this tree is a free matroid, or a matroid removed from
$\M$ in step~\ref{mdrop}. 
Clearly, this tree can be constructed efficiently as the algorithm unfolds.

We now argue that for every node $\mat=(\gset,\inds)$ of the tree, $R\cap\gset$ admits a
$p_i$-flexible decomposition in $\mat$, by induction starting from the leaves of the tree.  
Applying this to the root node shows that $R\cap\gsetref_i$ has a $p_i$-flexible
decomposition in $\matref_i$.
For the base case, when $\mat$ is a leaf, this is certainly true if $\mat$ is a free
matroid since then $R\cap\gset\in\inds$. 
Otherwise, $\mat$ must have been dropped from $\M$ in step~\ref{mdrop} and we
have $p_{\mat}=p_i$, in which case the trivial partition comprising the set
$R\cap\gset$ yields a $p_i$-flexible decomposition.

Next, suppose that we have a non-leaf node $\mat=(\gset,\inds)$ with children
$\bmat=\mat\vert_S$ (with ground-set $S$), $\mat''=\mat/S$ (with ground-set $\gset-S$) for
some $S\subsetneq\gset$, $S\neq\es$. Inductively, we have that $R\cap S$ admits a
$p_i$-flexible decomposition $A_1,\ldots,A_\ell$ in $\bmat$, and $R\cap(\gset-S)$ admits a
$p_i$-flexible decomposition $B_1,\ldots,B_m$ in $\mat''$. We claim that
$A_1,\ldots,A_\ell,B_1,\ldots,B_m$ is a $p_i$-flexible decomposition of $R\cap\gset$ in
$\mat$.  

It is clear that $A_1,\ldots,A_\ell,B_1,\ldots,B_m$ partition $R\cap\gset$. 
We have $\rk_{\bmat}(T)=\rk_{\mat}(T)$ for all $T\sse S$ and
$\rk_{\mat''}(T)=\rk_{\mat}(T\cup S)-\rk_{\mat}(S)\leq\rk_{\mat}(T)$ for all $T\sse\gset-S$.
By property (b) of a $p_i$-flexible decomposition, 
we have $|A_j|\leq\rk_{\bmat}(A_j)+p_i-1=\rk_{\mat}(A_j)+p_i-1$ for all $j\in[\ell]$. Similarly,
we have $|B_j|\leq\rk_{\mat''}(B_j)+p_i-1\leq\rk_{\mat}(B_j)+p_i-1$ for all $j\in[m]$.
So $A_1,\ldots,A_\ell,B_1,\ldots,B_m$ satisfy property (b) as well.
Finally, let $I_1,\ldots,I_\ell$ be $\bmat$-independent sets such that $I_j\sse A_j$ for
all $j\in[\ell]$, and $I''_1,\ldots,I''_m$ be $\mat''$-independent sets such that
$I''_j\sse B_j$ for all $j\in[m]$. Then, by property (c) of $p_i$-flexible decomposition,
$I=\bigcup_{j\in[\ell]}I_j$ is independent in $\bmat$, and hence $I\in\inds$, and
$I''=\bigcup_{j\in[m]}I''_j$ is independent in $\mat''$. Since $\mat''=\mat/S$, it follows
that $I\cup I''\in\inds$. Thus, $A_1,\ldots,A_\ell,B_1,\ldots,B_m$ satisfy
property (c), completing the induction step.

The proof above also shows that the flexible decomposition can be efficiently
constructed. Indeed, the $p_i$-flexible decomposition of $R\cap\gsetref_i$ is simply
given by the $R\cap\gset$ sets corresponding to the leaf matroids $\mat=(\gset,\inds)$ of
the tree constructed for $\matref_i$.
\hfill \qed

\subsection{Converting a Pseudocoloring into a Valid Coloring}
\label{pseudoconv} 
We now show how to convert the pseudocoloring given by Theorem~\ref{flexdecompthm} to a
valid coloring.


\begin{theorem}\label{thm:main_approx} \label{convthm}
Given a $(k,q)$-pseudocoloring $R_1,\ldots,R_q$ of
$\mat_1=(\gset,\inds_1),\ldots,\mat_k=(\gset,\inds_k)$, one can efficiently obtain a
$k(k-1)q$-coloring of $\mat_1,\ldots,\mat_k$.
\end{theorem}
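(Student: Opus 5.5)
The plan is to handle each color class $R_c$ separately. For each one I would build a conflict graph $G_c$ on vertex set $R_c$ with two properties: every stable set of $G_c$ is independent in every $\mat_i$, and $G_c$ can be properly colored with $k(k-1)$ colors in polynomial time. Each color class of such a coloring is then a common independent set, so the $q$ classes together give a $k(k-1)q$-coloring. The case $k=1$ is trivial, since a $1$-flexible decomposition means $R_c\in\inds_1$, so assume $k\ge 2$.

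\textbf{Building the conflict graph.} Fix $i\in[k]$ and let $T_1,\ldots,T_\ell$ be the $k$-flexible decomposition of $R_c$ in $\mat_i$. In each part $T_j$ I compute a maximal independent set $B_j\subseteq T_j$ and let $X_j=T_j\setminus B_j$. By property (b), $|X_j|\le k-1$. For each $x\in X_j$, the set $B_j+x$ contains a unique circuit $C(x)$. Because matroids are assumed loopless, $C(x)$ contains some element $b(x)\in B_j$, and $B_j-b(x)+x$ is again independent.

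The graph $H_i$ has edges of two kinds: all pairs inside $X_j$, and the pair $\{x,b(x)\}$ for each $x\in X_j$, for every $j$. Now take a stable set $I$ of $H_i$. Inside each $T_j$ it contains at most one element of $X_j$. If it contains such an $x$, it omits $b(x)$, so $I\cap T_j\subseteq B_j-b(x)+x$ and $I\cap T_j$ is independent. Property (c) then gives that $I$ is $\mat_i$-independent.

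For degrees: each $x\in X_j$ has degree at most $(|X_j|-1)+1\le k-1$, and each $b\in B_j$ has degree at most $|X_j|\le k-1$. Set $G_c=\bigcup_{i\in[k]}H_i$. A stable set of $G_c$ is stable in every $H_i$, hence common independent, and $G_c$ has maximum degree at most $D:=k(k-1)$.

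\textbf{Coloring $G_c$ with $D$ colors.} When $k=2$, each $|X_j|\le 1$, so $H_i$ is a matching. A union of two matchings has only even cycles and paths as components, so $G_c$ is bipartite and $2$-colorable. When $k\ge3$ we have $D\ge 6$, and I would invoke the algorithmic Brooks theorem (Theorem~\ref{thm:brooks-thm}). This requires showing that $G_c$ contains no clique $K$ on $D+1$ vertices.

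Suppose such a $K$ exists. Every vertex of $K$ has degree at least $D$ in $G_c$, while its degree is at most $\sum_i \deg_{H_i}\le k(k-1)$. So each vertex of $K$ has degree exactly $k-1$ in every $H_i$, the $H_i$ are edge-disjoint, and all neighbours of a vertex of $K$ lie in $K$.

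Next I determine the structure of $H_i$ near $K$. A vertex of degree exactly $k-1$ in $H_i$ forces $|X_j|=k-1$ for its part. If the vertex is some $b\in B_j$, it also forces $b(x)=b$ for all $x\in X_j$. So every vertex of $K$ lies in a component of $H_i$ equal to $X_j\cup\{b\}$, which is a $K_k$. Such a component is entirely inside $K$, because neighbours of $K$-vertices are in $K$. Hence $K$ is partitioned into copies of $K_k$, which means $k\mid k(k-1)+1$. This is impossible for $k\ge2$.

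I expect this clique-exclusion step to be the main obstacle. In particular, I need to check carefully that equality of degrees forces the $K_k$-component structure even when several elements $x$ share the same $b(x)$, and that components cannot stick out of $K$. The other steps are routine.
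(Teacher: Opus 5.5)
Your proposal is correct and follows the paper's proof essentially step for step: the same graphs $H_i$ (a clique on the excluded elements plus one exchange edge each), property (c) to pass from stable sets to common independent sets, the union-of-two-matchings argument for $k=2$, and Brooks' theorem with the divisibility contradiction $k\nmid k(k-1)+1$ for $k\ge 3$. Your observation that $b(x)\in K$ must itself have $H_i$-degree $k-1$, which forces $b(y)=b(x)$ for all $y\in X_j$, is exactly what justifies the paper's terse claim that each part's subgraph must be a $k$-clique; the one quibble is $k=1$, where $k(k-1)q=0$, so the statement tacitly assumes $k\ge 2$ and your ``trivial'' case gives a $q$-coloring rather than the stated bound.
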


Combining this with Theorem~\ref{flexdecompthm}, we immediately obtain a
$k(k-1)\chimax$-coloring of $\mat_1,\ldots,\mat_k$ in polynomial time, thereby proving
Theorems~\ref{twomatcol} and~\ref{kmatcol}. 
The remainder of this section is devoted to the proof of Theorem~\ref{convthm}.

Recall that $R_1,\ldots,R_q$ being a $(k,q)$-pseudocoloring means that $R_c$ has a
polynomial time-computable $k$-flexible decomposition 
in $\mat_i$ for all $i\in[k]$, $c\in[q]$. 
For each $c\in[q]$, 
we will define a conflict graph $G_c$ with vertex-set $R_c$ such that any vertex coloring
of $G_c$ using some $s$ colors will translate to a cover of $R_c$ by $s$ common
independent sets of $\mat_1,\ldots,\mat_k$. 
Moreover, we will argue that $G_c$ can be efficiently colored using at most $k(k-1)$
colors. This yields a $k(k-1)q$-coloring of $\mat_1,\ldots,\mat_k$, proving
Theorem~\ref{convthm}. 
This translation from graph coloring to matroid-intersection coloring is {\em enabled}
by the structure imposed by flexible decompositions, which we crucially exploit. 

In the sequel, we fix a color class $R_c$, where $c\in[q]$, and define $G_c$ and prove the
above two properties for $G_c$. 
Let $T^i_1,\ldots,T^i_{\ell(i)}$ be a polynomial time-computable $k$-flexible decomposition of $R_c$ in
$\mat_i$, for $i\in[k]$. For each matroid $i\in[k]$, define the following graph 
$H_i=(R_c,E_i)$ with vertex-set $R_c$. For each $j\in[\ell(i)]$, let 
$A^i_j\sse T^i_j$ be an independent set of $\mat_i$ with $|A^i_j|\geq|T^i_j|-k+1$. Such a set is
guaranteed by property (b) of flexible decomposition (see Definition~\ref{flexdecomp}),
and can be easily found in polynomial time. By the matroid-exchange property, for every 
$e\in T^i_j-A^i_j$, we can identify some $e'\in A^i_j$ such that
$A^i_j\cup\{e\}-\{e'\}\in\inds_i$; we include $(e,e')$ as an edge in $E_i$. We also
include edges between all pairs of elements in $T^i_j-A^i_j$ in $E_i$. 
We do this for all $j\in[\ell(i)]$ to obtain the graph $H_i$. Thus, $H_i$ is the union of
vertex-disjoint subgraphs, one for each part $T^i_j$ of the $k$-flexible decomposition in
$\mat_i$. 

The conflict graph $G_c$ for color class $R_c$ is the union
$\bigl(R_c,\bigcup_{i\in[k]}E_i\bigr)$ of all $H_i$'s. 
(Note that we include only one copy of an edge even if the edge is present in multiple
$H_i$ graphs.) 

Lemma~\ref{gcoltomatcol} argues that a vertex coloring of $G_c$ yields a cover of $R_c$ by common
independent sets, and Lemma~\ref{confcol} shows that $G_c$ can be colored using at most
$k(k-1)$ colors.







\begin{lemma}\label{lem:feasible} \label{gcoltomatcol}
Let $C_1, \ldots, C_s$ be a vertex coloring of $G_c$ using $s$ colors. Then, 
$C_1,\ldots,C_s$ is an $s$-coloring of $\mat_1,\ldots,\mat_k$.
\end{lemma}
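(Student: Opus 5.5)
Since the color classes $C_1,\ldots,C_s$ of a vertex coloring cover $R_c$, it suffices to show that each $C_t$ is independent in every $\mat_i$. Fix $i\in[k]$ and a color class $C=C_t$. By property (c) of the $k$-flexible decomposition $T^i_1,\ldots,T^i_{\ell(i)}$ of $R_c$ in $\mat_i$, it is enough to show that $C\cap T^i_j\in\inds_i$ for every $j\in[\ell(i)]$. Property (c) then gives $C=\bigcup_j (C\cap T^i_j)\in\inds_i$. Thus the whole argument reduces to a single part $T:=T^i_j$ with its chosen independent set $A:=A^i_j$.

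Since $C$ is an independent set in $G_c$, it is in particular independent in $H_i$. The clique on $T-A$ in $E_i$ means $C$ contains at most one element of $T-A$. If $C\cap T\sse A$, then $C\cap T$ is a subset of the independent set $A$, and we are done by downward closure. Otherwise, $C\cap T=(C\cap A)\cup\{e\}$ for a unique $e\in T-A$. Let $e'\in A$ be the exchange partner used to define the edge $(e,e')\in E_i$, so that $A\cup\{e\}-\{e'\}\in\inds_i$. Because $(e,e')$ is an edge and $C$ is a color class, $e'\notin C$. Hence $C\cap T\sse A\cup\{e\}-\{e'\}$, which is independent, and downward closure again gives $C\cap T\in\inds_i$.

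The only point that needs care is the existence and use of the exchange partner $e'$. If $A\cup\{e\}$ is already independent, any $e'\in A$ works, or one can simply note that $C\cap T\sse A\cup\{e\}$. Otherwise $A\cup\{e\}$ contains a unique circuit through $e$, and removing any element $e'$ of that circuit other than $e$ restores independence. The construction of $H_i$ guarantees that such an $e'$ was fixed, and the argument above uses nothing beyond this. Note also that $e'\in A\sse T$, so the edge lies inside the part $T$ and the per-part reduction is consistent. Applying the argument for every $i\in[k]$ and every $t\in[s]$ shows that $C_1,\ldots,C_s$ is an $s$-coloring of $\mat_1,\ldots,\mat_k$.
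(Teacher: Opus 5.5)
Your proposal is correct and follows the paper's proof essentially step for step. Property (c) reduces the claim to a single part $T^i_j$. The clique on $T^i_j-A^i_j$ allows at most one element outside $A^i_j$ in a color class. The exchange edge $(e,e')$ then places $C\cap T^i_j$ inside the independent set $A^i_j\cup\{e\}-\{e'\}$. Your extra remark on how $e'$ is obtained (via the circuit in $A\cup\{e\}$) is a harmless addition that justifies the construction of $H_i$ rather than being needed for the lemma itself.
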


\begin{proof}
Consider any $h\in[s]$.
We argue that $C_h$ is a common independent set of $\mat_1,\ldots,\mat_k$, which will prove the lemma.
We have that $C_h$ is a stable set of $G_c$, i.e., there are no edges in
$G_c$ between any two nodes in $C_h$.
We claim that $C_h\cap T^i_j$ is independent in $\mat_i$ for all $i\in[k]$ and $j \in [\ell(i)]$. 
By property (c) of flexible decomposition, this implies that
$C_h=C_h\cap R_c=\bigcup_{j\in[\ell(i)]}(C_h\cap T^i_j)$ is independent in $\mat_i$, so
it suffices to prove the claim.

To see this, recall that $A^i_j\sse T^i_j$ is the independent set in $\mat_i$ that
was used to define the graph $H_i$ (and hence, $G_c$). If $C_h\cap T^i_j\sse A^i_j$, then
we are done. Otherwise, $C_h\cap T^i_j$ contains exactly one element $e\in T^i_j-A^i_j$,
since $H_i$ contains a clique on the elements in $T^i_j-A^i_j$. In this case, if
$(e,e')$ is an edge in $H_i$, where $e'\in A^i_j$, then we know that $e'\notin C_h$ and
$A^i_j\cup\{e\}-\{e'\}\in\inds_i$. It follows that 
$C_h\cap T^i_j\sse A^i_j\cup\{e\}-\{e'\}$ and so is independent in $\mat_i$.
\end{proof}

To show that $G_c$ can be efficiently colored using $k(k-1)$ colors, we utilize Brooks'
theorem from graph theory, which has a simple proof via a clever greedy
algorithm due to Lovasz~\cite{lovasz1975three} that we describe in Appendix~\ref{append-iterrnd}
for completeness.

\begin{theorem}[Brooks, 1941]\label{thm:brooks-thm}
Let $G$ be a connected graph with maximum degree $\Delta(G) > 2$. Then
$\chi(G) \leq \Delta(G)$ unless $G = K_{\Delta+1}$,
and a coloring of $G$ using at most $\Dt(G)$ colors can be computed in polynomial time. 
\end{theorem}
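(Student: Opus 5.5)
We follow Lov\'asz's greedy proof of Brooks' theorem, which is constructive and runs in polynomial time. Write $\Dt=\Dt(G)\geq 3$, and assume $G$ is connected and not $K_{\Dt+1}$. The workhorse is an ordering principle: if the vertices are ordered $v_1,\ldots,v_n$ so that every $v_t$ with $t<n$ has a neighbor later in the order, then greedy coloring along this order uses at most $\Dt$ colors on $v_1,\ldots,v_{n-1}$. Indeed, when $v_t$ is colored it has at most $\Dt-1$ already-colored neighbors. The only problem is the last vertex $v_n$, which may see $\Dt$ colored neighbors. We therefore arrange that two neighbors of $v_n$ receive the same color.

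\textbf{Case 1: $G$ is not $\Dt$-regular.} Choose $v_n$ to be a vertex of degree less than $\Dt$. Run BFS from $v_n$ and order the vertices by decreasing BFS distance. Every vertex other than $v_n$ then has its BFS parent later in the order, and $v_n$ itself has at most $\Dt-1$ neighbors. Greedy coloring therefore uses at most $\Dt$ colors.

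\textbf{Case 2: $G$ is $\Dt$-regular.} The key step is to find three vertices $x,y,z$ with $xz,yz\in E$, $xy\notin E$, and $G-\{x,y\}$ connected. Given such a triple, we set $v_1=x$, $v_2=y$, and $v_n=z$. We order $G-\{x,y\}$ by decreasing BFS distance from $z$; since this graph is connected, every vertex has a later neighbor. Greedy coloring gives $x$ and $y$ the same color (color 1, since they are nonadjacent and colored first). Hence $z$ sees at most $\Dt-1$ distinct colors among its $\Dt$ neighbors, and at most $\Dt$ colors are used overall.

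The main obstacle is finding the triple, which we handle by connectivity cases.

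\textbf{Case 2a: $G$ is 3-connected.} Since $G$ is regular, connected, and not complete, some vertex $z$ has two nonadjacent neighbors $x,y$. (If every neighborhood were a clique, connectivity would force $G$ to be complete.) Removing two vertices from a 3-connected graph leaves it connected, so the triple works.

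\textbf{Case 2b: $G$ has connectivity at most 2.}
\begin{itemize}
\item If $G$ is 2-connected but not 3-connected, pick $z$ such that $G-z$ is connected but not 2-connected. Then $G-z$ has at least two leaf blocks. Because $G$ is 2-connected, $z$ has a neighbor in the interior of each leaf block, that is, not at a cut vertex of $G-z$. Take $x$ and $y$ to be such neighbors in two different leaf blocks; they are nonadjacent. Removing them leaves each leaf block minus one vertex connected, and $z$ has degree $\Dt\geq 3$, so $z$ keeps a neighbor and the remaining graph stays connected.
\item If $G$ has a cut vertex, we do not need the triple. We color each block separately; every block other than the whole graph falls into Case 1, because its cut vertex has degree less than $\Dt$ within the block. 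We then permute colors within blocks so that the colorings agree at shared cut vertices, gluing along the block tree.
\end{itemize}

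Every step (BFS, block decomposition, testing vertex removals for connectivity, greedy coloring) runs in polynomial time.
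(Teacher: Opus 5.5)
Your proposal is correct and follows essentially the same route as the paper's appendix: Lov\'asz's greedy ordering built around a triple $x,y,z$ with $xz,yz\in E$, $xy\notin E$ and $G-\{x,y\}$ connected, found through the same case analysis (3-connected; 2-connected, using neighbors of $z$ in two leaf blocks of $G-z$; and a cut vertex, where blocks are colored separately and glued). Your explicit non-regular case (BFS toward a vertex of degree less than $\Dt$) is a useful addition: it is what justifies coloring each block with $\Dt$ colors, and it avoids the paper's assumption that some vertex of degree at least $3$ is non-adjacent to another vertex, which fails for non-regular graphs such as $K_4$ minus an edge.
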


\begin{lemma}\label{lem:apply-brooks} \label{confcol}
The conflict graph $G_c$ can be efficiently colored using at most $k(k-1)$ colors.
\end{lemma}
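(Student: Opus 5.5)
The plan is a degree bound on $G_c$ followed by Brooks' theorem (Theorem~\ref{thm:brooks-thm}) applied to each connected component, with a separate argument for $k=2$, where Brooks' theorem does not apply.

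\textbf{Step 1: every $H_i$ has maximum degree at most $k-1$.} Fix $i\in[k]$ and a part $T^i_j$, and write $D=T^i_j-A^i_j$. Property (b) of flexible decomposition gives $|D|\leq k-1$.
\begin{itemize}
\item A vertex $e\in D$ is adjacent to the other at most $k-2$ elements of $D$ and to its single exchange partner $e'\in A^i_j$. So its degree is at most $k-1$.
\item A vertex $a\in A^i_j$ is adjacent only to those elements of $D$ that chose $a$ as their partner. So its degree is at most $|D|\leq k-1$.
\end{itemize}
The subgraphs for different parts are vertex-disjoint. Hence $\Dt(H_i)\leq k-1$, and therefore $\Dt(G_c)\leq k(k-1)$.

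\textbf{Step 2: the case $k=2$.} Here $|D|\leq 1$ in every part, so $H_i$ is a matching for each $i$. The union of two matchings consists of paths and even cycles, so $G_c$ is bipartite. A proper $2$-coloring of $G_c$ is then found by BFS.

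\textbf{Step 3: the case $k\geq 3$.} Now $k(k-1)\geq 6>2$. Work on each connected component $K$ of $G_c$ separately.
\begin{itemize}
\item If $\Dt(K)<k(k-1)$, greedy coloring uses at most $k(k-1)$ colors.
\item If $\Dt(K)=k(k-1)$ and $K$ is not $K_{k(k-1)+1}$, Theorem~\ref{thm:brooks-thm} gives an efficient $k(k-1)$-coloring.
\end{itemize}
So the remaining task is to rule out a component that is a clique on $k(k-1)+1$ vertices.

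\textbf{Step 4 (main obstacle): no component equals $K_{k(k-1)+1}$.} Suppose $K$ is such a component.
\begin{enumerate}
\item Every vertex of $K$ has $G_c$-degree exactly $k(k-1)$. Each $H_i$ contributes at most $k-1$ to this degree. So each vertex has degree exactly $k-1$ in every $H_i$, the $k$ edge sets are pairwise disjoint on $K$, and all $H_i$-neighbors of a vertex of $K$ lie in $K$.
\item Fix $i$ and a vertex $v\in K$. I claim $v$ lies in an $H_i$-component that is a copy of $K_k$, namely $D\cup\{a\}$ for the relevant part.
\begin{itemize}
\item If $v=a\in A^i_j$, then $a$ having degree $k-1$ forces $|D|=k-1$ and every element of $D$ to have chosen $a$ as its partner.
\item If $v\in D$, then $v$ having degree $k-1$ forces $|D|=k-1$. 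Its partner $a$ must also have degree $k-1$, which by the previous case forces all of $D$ to point to $a$.
\end{itemize}
In either case, $D\cup\{a\}$ is a clique on $k$ vertices, and it is a full component of $H_i$.
\item By (1), all $H_i$-neighbors of vertices in $K$ stay in $K$. So $K$ is partitioned into such $k$-cliques, which requires $k$ to divide $|K|=k(k-1)+1$.
\item But $k(k-1)+1\equiv 1\pmod k$, a contradiction.
\end{enumerate}
Hence Brooks' theorem applies to every component of maximum degree $k(k-1)$.

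The delicate part is the exact structural characterization in Step 4(2) of when a vertex has degree exactly $k-1$ in $H_i$. The rest is routine bookkeeping.
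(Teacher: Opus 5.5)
Your proof is correct and follows essentially the same route as the paper. For $k=2$ you use that $G_c$ is a union of two matchings and hence bipartite; for $k\ge 3$ you use $\Delta(G_c)\le k(k-1)$ plus Brooks' theorem, ruling out a $K_{k(k-1)+1}$ component because each $H_i$ would have to split it into vertex-disjoint $k$-cliques, while $k\nmid k(k-1)+1$; your component-by-component use of Brooks and the case analysis in Step 4(2) just spell out details the paper leaves implicit.
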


\begin{proof}
When $k=2$, there is a particularly simple and crisp argument that is worth noting
separately. In this case, note that each $H_i$ graph is a {\em matching} (as
$|T^i_j-A^i_j|\leq 1$ for all $j\in[\ell(i)]$) and $G_c$ is a union
of two matchings. So every path or cycle in $G_c$ is an alternating path or alternating
cycle with respect to these matchings. Therefore, $G_c$ is bipartite, and hence
$2$-colorable. 

Now consider general $k \geq 3$. Note that each $H_i$ graph has maximum degree at most $k-1$, and
so $\Dt=\Dt(G_c)\leq k(k-1)$. If $\Dt<k(k-1)$, then a straightforward greedy algorithm
yields a $(\Dt+1)$-coloring. 
So suppose $\Dt=k(k-1)$. We argue that $G_c$ does not contain a clique $K$ on $\Dt+1$ nodes.  
Then by Brooks' Theorem (\Cref{thm:brooks-thm}), $G_c$ can be colored efficiently using
$\Delta=k(k-1)$ colors. 

For a vertex $v$ and graph (or edge-set) $F$, we use $\dt_F(v)$ to denote the edges of $F$
incident to $v$.
Suppose, for a contradiction, that $G_c$ contains a clique $K$ on $\Dt+1$ nodes. Observe that the clique $K$ is a connected component of $G_c$, because the maximum degree of $G_c$ is $\Dt$. Further, for any node $v\in K$, we have
$|\dt_G(v)|=k(k-1)\leq\sum_{i\in[k]}|\dt_{H_i}(v)|$, which implies that
$|\dt_{H_i}(v)|=k-1$ for all $i\in[k]$. This has to hold for all 
$v\in K$. Thus, $H_i$ is a $(k-1)$-regular graph on $K$ for all $i \in [k]$.

Further, recall that each $H_i$ is the union of vertex-disjoint subgraphs, one subgraph for each part $T_j^i$ of the $k$-flexible decomposition of $R_c$ in $M_i$. Each of these subgraphs contains a clique on $|T_j^i-A_j^i| \leq k-1$ nodes along with one extra edge $(e, e')$ for each $e \in T_j^i - A_j^i$ to some $e' \in A_j^i$. In order for this subgraph to be $(k-1)$-regular on $K$, it must be a $k$-clique. Thus each $H_i$ is the union of vertex-disjoint $k$-cliques on $K$.

However, $K$ is a clique on $\Delta+1 = k(k-1)+1$ nodes. Because $k \nmid k(k-1)+1$, $H_i$ cannot be $(k-1)$-regular on $K$ for any $i \in [k]$, a contradiction.
\end{proof}



Note that if the input is a $(p,q)$-pseudocoloring (where $p$ need not be $k$), then we can still create the
conflict graphs as above, and Lemma~\ref{gcoltomatcol} continues to hold. We have
$\Dt(G_c)\leq k(p-1)$, so a trivial greedy algorithm shows that 
$G_c$ can be colored with at most $k(p-1)+1$ colors, and we obtain a
$\bigl(k(p-1)+1\bigr)q$ coloring of $\mat_1,\ldots,\mat_k$.  

Further, when $k=3$, our algorithm produces a $6\chimax$ matroid-intersection coloring. In this
case, each $H_i$ graph (constructed for a given $c\in[q]$) consists of a collection of
vertex-disjoint triangles or paths of length at most $3$, and so $G_c$ is a union of three
such graphs. The hardest case here seems to be when each $H_i$ consists of vertex-disjoint
triangles. It is conjectured that in this case that a graph such as $G_c$ can be
$5$-colored (see Question 1.8 in~\cite{AharoniAAHHJKN18}). If 
this holds and a $5$-coloring can be found efficiently, then this would likely yield an
improved efficiently-computable $5\chimax$-coloring for $3$ matroids. This is notable as
it would match the $5\chimax$ bound for $3$-matroid-interection coloring obtained
by~\cite{AharoniBGK25} via nonconstructive means.

\subsection{Integrality-Gap Bounds} \label{intgaps}
We show that our $k(k-1)\chimax$-coloring result for $k$-matroid-intersection coloring
also implies essentially a constructive $k(k-1)$ upper bound on the integrality gap of the natural covering-LP formulation of the problem.

\begin{theorem}
Given matroids $\mat_1=(\gset,\inds_1), \ldots, \mat_k=(\gset,\inds_k)$, consider the
following LP-relaxation for $k$-matroid-intersection coloring. Recall that
$\inds_{\mint}:=\bigcap_{i\in[k]}\inds_i$. 
\begin{equation}
\min \quad \sum_{I\in\inds_{\mint}}x_I \qquad \text{s.t.} \qquad
\sum_{I\in\inds_{\mint}:e\in I}x_I \geq 1 \quad \forall e\in\gset. 
\tag{CovLP} \label{covlp} 
\end{equation}
The algorithm described in Sections~\ref{flexdecompalg} and~\ref{pseudoconv} returns an
integer solution to \eqref{covlp} of value at most 
$k(k-1)\cdot\ceil{\OPT_{\text{\ref{covlp}}}}$.
\end{theorem}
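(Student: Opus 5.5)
The plan is to show that $\chimax\le\ceil{\OPT_{\text{\ref{covlp}}}}$. Once this holds, the theorem follows at once from the main result: the algorithm of Sections~\ref{flexdecompalg} and~\ref{pseudoconv} uses $q=\chimax$ in Theorem~\ref{flexdecompthm}, and Theorem~\ref{convthm} then yields $k(k-1)\chimax$ common independent sets covering $\gset$. Setting $x_I=1$ for each of these sets gives an integer solution to \eqref{covlp} of value at most $k(k-1)\chimax\le k(k-1)\ceil{\OPT_{\text{\ref{covlp}}}}$. (If some color sets coincide or are empty, the value only decreases.)

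To prove $\chimax\le\ceil{\OPT_{\text{\ref{covlp}}}}$, fix $i\in[k]$ and an optimal, or any feasible, solution $x$ to \eqref{covlp}. By Fact~\ref{onematcol}, it suffices to show $|S|\le \OPT_{\text{\ref{covlp}}}\cdot\rk_i(S)$ for every $S\sse\gset$. Taking ceilings, using that the ceiling is monotone and that $\ceil{|S|/\rk_i(S)}\le\ceil{\OPT}$ when $|S|/\rk_i(S)\le\OPT$, this gives $\chi(\mat_i)\le\ceil{\OPT}$ for each $i$, and hence the bound on $\chimax$. Recall that loopless matroids give $\rk_i(S)\ge1$ for nonempty $S$.

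The key inequality comes from double counting. Summing the covering constraints over $e\in S$ gives
\[
|S|\le\sum_{e\in S}\sum_{I\ni e}x_I=\sum_{I\in\inds_{\mint}}x_I\,|I\cap S|.
\]
Each $I\in\inds_{\mint}$ is independent in $\mat_i$, so $|I\cap S|\le\rk_i(S)$. Therefore $|S|\le\rk_i(S)\sum_I x_I$, and taking $x$ optimal gives $|S|\le \rk_i(S)\cdot\OPT_{\text{\ref{covlp}}}$. In other words, $\OPT_{\text{\ref{covlp}}}$ is at least the fractional chromatic number of each $\mat_i$, which equals $\max_S |S|/\rk_i(S)$.

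There is no serious obstacle here. The only point to watch is the rounding: $\chimax$ is an integer ceiling while $\OPT_{\text{\ref{covlp}}}$ may be fractional, which is exactly why the statement has $\ceil{\OPT_{\text{\ref{covlp}}}}$ rather than $\OPT_{\text{\ref{covlp}}}$. The argument is also constructive, since it only analyzes the integer solution the algorithm already outputs; solving \eqref{covlp} itself is never needed.
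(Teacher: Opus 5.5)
Your proof is correct and follows the paper's argument essentially step for step. You sum the covering constraints over $e\in S$ and use $|I\cap S|\le \rk_i(S)$ for every common independent set $I$ to get $\OPT\ge |S|/\rk_i(S)$ for all $S$ and $i$; Edmonds' condition then gives $\chimax\le\ceil{\OPT}$, and the algorithm's $k(k-1)\chimax$ guarantee finishes the proof.
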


\begin{proof}
Recall that $\chimax:=\max_{i\in[k]}\chi(\mat_i)$ and
$\chi(\mat_i)=\max_{S\sse\gset}\ceil{\frac{|S|}{\rk_i(S)}}$.  
It suffices to show that $\chimax\leq\ceil{\OPT_{\text{\ref{covlp}}}}$. 
Let $x^*$ be an optimal solution to \eqref{covlp}.
Consider any
$S\sse\gset$. Adding the inequalities of \eqref{covlp} for all $e\in S$ gives
$\sum_{I\in\inds_{\mint}}x^*_I|I\cap S|\geq |S|$. Since $I\cap S$ is a common independent set
contained in $S$, we have $|I\cap S|\leq\rk_i(S)$ for all $i\in[k]$. It follows that
$\bigl(\sum_{I\in\inds_{\mint}}x^*_I\bigr)\cdot\bigl(\min_{i\in[k]}\rk_i(S)\bigr)\geq|S|$,
or equivalently, $\OPT_{\text{\ref{covlp}}}\geq\max_{i\in[k]}\frac{|S|}{\rk_i(S)}$. This
lower bound holds for any set $S\sse\gset$, so we have 
\[
\OPT_{\text{\ref{covlp}}}\geq\max_{S\sse\gset}\max_{i\in[k]}\frac{|S|}{\rk_i(S)}
=\max_{i\in[k]}\max_{S\sse\gset}\frac{|S|}{\rk_i(S)} \quad \implies
\quad \ceil{\OPT_{\text{\ref{covlp}}}}\geq\max_{i\in[k]}\chi(\mat_i)=\chimax.
\qed
\]
\end{proof}

\section{Two-Matroid-Intersection Coloring: Proof of Theorem \ref{thm:largec_clean}} \label{sec:FPRAS}


In this section, we present a fully polynomial randomized approximation scheme (FPRAS) for coloring the intersection of two matroids when the maximum of their chromatic numbers is large. In particular, we prove Theorem~\ref{thm:largec_clean}. For simplicity of analysis, we actually prove the following modified version of Theorem~\ref{thm:largec_clean}, which states a seemingly weaker guarantee without the use of a constant $C > 0$ in the bound. We show in Appendix \ref{app:largec_techclean} that Theorem \ref{thm:largec_technical} readily implies Theorem \ref{thm:largec_clean}.

\begin{theorem}\label{thm:largec_technical}
For any $\ve \in (0, 0.001)$ and any instance $\mat_1,\mat_2$ of
two-matroid-intersection coloring with $\chimax\geq\frac{\log n}{\ve^5}$,
there is a randomized algorithm with running time $\poly\bigl(n,\frac{1}{\ve}\bigr)$ 
that returns a $(1+400\ve)\chimax$-coloring with probability at least $1-1/n^2$.
\end{theorem}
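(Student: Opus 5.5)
We follow the peeling strategy sketched in Section~\ref{techoverview-rota}. Let $\chi_0=\chimax$. The algorithm runs in phases. At the start of a phase with uncolored set $U^{(t)}$, it computes $q_t=\max_{i\in\{1,2\}}\chi(\mat_i|_{U^{(t)}})$ via Fact~\ref{onematcol}; this is polytime through matroid intersection/partition. Set $N_t=\lceil\ve q_t\rceil$. By Claim~\ref{matpoly}, the point $x=\bon/q_t$ lies in $\Pc(\mat_1|_{U^{(t)}})\cap\Pc(\mat_2|_{U^{(t)}})$. We apply swap rounding~\cite{ChekuriVZ11} to $x$ independently $N_t$ times. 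This yields common independent sets $R_1,\dots,R_{N_t}$, each with marginals at most $1/q_t$ and with lower-tail concentration for linear functions. We color each $R_j$ with a fresh color and delete $\bigcup_j R_j$. The phases continue while $q_t\ge \ve\chi_0$, or some threshold $\gg\log n/\ve^4$. Once the max density is below $\ve\chi_0$, we finish with the $2\chimax$-coloring of Theorem~\ref{twomatcol}, which adds at most $2\ve\chi_0$ colors.

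The key per-phase lemma is that $q_{t+1}\le q_t-(1-O(\ve))N_t$ with probability $1-1/\poly(n)$, i.e.\ $q_{t+1}\le(1-\ve+O(\ve^2))q_t$. Given this, the total number of colors used in the phases is $\sum_t N_t\le\sum_t(q_t-q_{t+1})/(1-O(\ve))+\#\text{phases}$. This is at most $(1+O(\ve))\chi_0$ plus $O(\log(1/\ve)/\ve)$ phase rounding terms, and the latter are $O(\ve\chi_0)$ because $\chi_0\ge\log n/\ve^5$. Adding the final $2\ve\chi_0$ gives $(1+400\ve)\chimax$ with a generous constant. We take a union bound over the $O(\log(1/\ve)/\ve)$ phases.

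For the per-phase lemma, first note that swap rounding gives exactly $\Pr[e\in R_j]=1/q_t$ when rounding from a convex decomposition into common independent sets. We cannot apply it to the intersection polytope directly, so we must get it approximately. The plan is to write $x$ as a convex combination in one matroid's polytope and round while preserving feasibility in the other, following~\cite{ChekuriVZ11}'s matroid-intersection variant. That variant loses a $(1-\ve)$ factor in marginals after scaling $x$ by $(1-\ve)$, which is acceptable. Fix a matroid $\mat_i$ and a set $S\sse U^{(t)}$. Define $D_S=|S\cap\bigcup_j R_j|$. Each $\mathbf{1}[e\in R_j]$ is negatively correlated within a rounding and independent across roundings. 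Hence $D_S\ge(1-\ve)N_t|S|/q_t\cdot(1-O(\ve))$ with failure probability $\exp(-\Omega(\ve^2N_t|S|/q_t))$. We need the density of $S\setminus\bigcup R_j$ to be at most $q_t-(1-O(\ve))N_t$ in $\mat_i$. The post-deletion rank is at least $r_i(S\setminus\bigcup R_j)\ge r_i(S)-\sum_j|R_j\cap S|$, which is the wrong direction. So instead we bound $|S\setminus\bigcup R_j|\le|S|-D_S$ and argue that $r_i$ of the residual is at least $r_i(S)-$(something small). It is cleaner to test the density only on sets $S'$ of the residual directly. $S'$ is a subset of $U^{(t)}$ with $|S'|\le q_t r_i(S')$ initially. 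For $S'$ to have residual density above $q_t-(1-O(\ve))N_t$, we need the span $\mathrm{cl}_i(S')$ in $U^{(t)}$ to have been hit at fewer than expected elements. So we apply the concentration to the closed sets $F$ with $|F|\ge(q_t-N_t)r_i(F)$: $|F\cap\bigcup R_j|$ must be at least about $N_t|F|/q_t\ge N_t r_i(F)(1-\ve)$.

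The main obstacle is the union bound, since there are exponentially many sets $F$. We plan to handle it by the standard counting of flats/sets indexed by rank. For a dense flat $F$ of rank $\rho$, the failure probability is $\exp(-\Omega(\ve^2 N_t\rho))=\exp(-\Omega(\ve^3 q_t\rho))$. Each such $F$ is determined by a spanning independent set of size $\rho$, giving at most $n^\rho$ choices. So we need $\ve^3 q_t\ge C'\log n$, i.e.\ $q_t\ge C'\log n/\ve^3$. This holds throughout the phases since $q_t\ge\ve\chi_0\ge\log n/\ve^4$; this is where the hypothesis $\chimax\ge\log n/\ve^5$ is used. Verifying that swap rounding's Chernoff-type lower-tail bound for linear functions, from~\cite{ChekuriVZ11}, applies with the constant losses claimed, and summed over $N_t$ independent roundings, is the step requiring the most care.
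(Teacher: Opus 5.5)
Your plan is the paper's: $\lceil\ve q_t\rceil$ SwapRound calls at $\bon/q_t$ per round, closure of a dense residual set, a union bound over flats counted by rank, and the $2\chimax$ clean-up. The gap is the concentration step, which is wrong as stated.

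SwapRound for matroid \emph{intersection} is not negatively correlated. With exact marginals this is already impossible for the bipartite-matching polytope of a $4$-cycle at $x\equiv\frac12$. What is actually available is Theorem~\ref{thm:swap_round}: marginals $(1-\gamma)x$ and lower tail $\exp(-\gamma t^2/(20\mu))$, with an extra factor $\gamma$ in the exponent. The $(1-\gamma)$ loss in coverage must be absorbed into a $(1+O(\ve))$ guarantee, so you are forced to take $\gamma=\Theta(\ve)$; the paper uses $\gamma=\ve$. Then a single rounding's hit count on a flat $F$ is subgaussian with variance proxy about $|F|/(\ve q_t)$. Over $N_t\approx\ve q_t$ rounds the proxy is about $|F|$, and the deviation you must rule out is $\Theta(\ve^2|F|)$. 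So the per-flat failure probability is $\exp(-\Omega(\ve^4|F|))=\exp(-\Omega(\ve^4 q_t\rho))$, not your $\exp(-\Omega(\ve^3 q_t\rho))$.

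There is a second issue: $D_S$ is the size of a \emph{union}, so ``independent across roundings'' does not by itself give a tail for it. The paper (Lemma~\ref{lemma:asymp_subset}) handles this as follows:
\begin{itemize}
\item let $\Delta_j$ count the \emph{newly} covered elements of $S$;
\item apply the SwapRound lower tail conditionally on the history, to the still-uncovered part of $S$, whose conditional mean stays at least $(1-\ve)^2|S|/q_t$ until $\ve|S|$ elements are covered;
\item sum these conditionally subgaussian increments.
\end{itemize}

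Consequently your account of where the hypothesis is used is off: there is no spare factor $1/\ve$. The union bound over at most $n^{\rho}$ flats of rank $\rho$ needs $\ve^4 q_t\rho$ to dominate $\rho\log n$. The hypothesis gives $q_t\ge\ve\chimax\ge\log n/\ve^4$, which supplies exactly this scale; that is the origin of the $\ve^5$. It also means the constant in your $O(\ve)$ slack must be chosen large enough that the exponent's constant beats the counting term. The paper takes slack $100\ve^2$, i.e.\ a deviation of $100\ve^2|F|$, and gets exponent $100\ve^4|F|\ge 50\ve^4 q_t\rho\ge 50\rho\log n$.

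With the tail corrected this way, the rest of your argument goes through and is essentially the paper's proof: the closure step, the telescoping count, $O(\log(1/\ve)/\ve)$ rounds, and the $2\ve\chimax$ clean-up. You should also fix the number of rounds in advance instead of looping while $q_t\ge\ve\chi_0$, so the running time stays polynomial even on the failure event.
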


Our algorithm for Theorem \ref{thm:largec_technical} will in fact produce a \textit{covering} of $\Mint = M_1 \cap M_2$, which is synonymous with coloring, because a covering can be trivially converted into a coloring by removing duplicate elements.

In Section \ref{subsec:asymp_background}, we give background on a central component of our algorithm, the swap-rounding algorithm for matroid intersection \cite{ChekuriVZ11}. In Section \ref{subsec:asymp_alg}, we state our algorithm. In Section \ref{subsec:asymp_analysis}, we analyze our algorithm and prove Theorem \ref{thm:largec_technical}.

\subsection{Background on Swap-Rounding Algorithm \cite{ChekuriVZ11}}\label{subsec:asymp_background}
A central component to our FPRAS for coloring the intersection of two matroids when $\chimax$ is large is the swap-rounding algorithm for matroid intersection \cite{ChekuriVZ11}. This allows us to round arbitrary fractional points in the matroid intersection polytope to common independent sets with strong expectation and concentration properties. In particular, the main theorem of \cite{ChekuriVZ11} we leverage is the following:


\begin{theorem}[\cite{ChekuriVZ11}]\label{thm:swap_round}
Let $M_1$ and $M_2$ be matroids on ground set $U$ and $P = \mathcal{P}(M_1) \cap \mathcal{P}(M_2)$. Fix a constant $0 < \gamma \leq 1/2$. Then there is an efficient randomized rounding procedure SwapRound which, given a point $\mathbf{x} \in P$, outputs a common independent set $R \subseteq U$ 
such that 
$\mathbb{E}\left[\bon_R\right] = (1 - \gamma)\mathbf{x}$. In addition, for any linear function $a(R) = \sum_{i \in R} a_i$ with $a_i \in [0, 1]$, and for any $\eps \in [0, 1]$, we have:

\begin{itemize}
    \item If $\mu \leq \mathbb{E}[a(R)]$, then $\Pr[a(R) \leq (1 - \eps)\mu] \leq e^{-\mu \gamma \eps^2/20}$.
    \item If $\mu \geq \mathbb{E}[a(R)]$, then $\Pr[a(R) \geq (1 + \eps)\mu] \leq e^{-\mu \gamma \eps^2/20}$. 
\end{itemize}
\end{theorem}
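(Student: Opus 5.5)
Since the intersection polytope $P=\Pc(M_1)\cap\Pc(M_2)$ is integral (Edmonds), I would start by writing $\mathbf{x}=\sum_{t}\lambda_t\bon_{I_t}$ as a convex combination of polynomially many common independent sets $I_t$. This decomposition can be computed efficiently via Carathéodory's theorem together with a separation oracle for $P$.

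\textbf{Merging phase.} I would then round by a sequence of merges, in the spirit of swap rounding for a single matroid. Two weighted sets $(I,\lambda)$, $(J,\mu)$ are replaced by one random set $I'$ of weight $\lambda+\mu$. The requirement is that each element of $I\cup J$ keeps its marginal, up to a controlled loss, and that the merge is a martingale step. For one matroid this uses the strong basis exchange property: swap a pair $e\in I\setminus J$, $f\in J\setminus I$ one way or the other with probabilities $\mu/(\lambda+\mu)$ and $\lambda/(\lambda+\mu)$.

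\textbf{Handling two matroids.} With two matroids, a single exchange pair that is valid in $M_1$ need not be valid in $M_2$. So I would build the exchange graph of $I$ and $J$ with respect to both matroids. By the standard matroid-intersection exchange lemmas, the union of a perfect exchange matching in $M_1$ and one in $M_2$ on $I\triangle J$ decomposes into alternating paths and cycles. Swapping an entire component keeps the set common-independent only when that component has no ``shortcuts''. This is exactly where the factor $(1-\gamma)$ enters. I would first scale to $(1-\gamma)\mathbf{x}$, interpreting the missing $\gamma$ mass as the empty set, or alternatively shortcut long components by dropping elements. Either way, every component that must be swapped atomically has length $O(1/\gamma)$, and the result is a common independent set whose marginals are exactly $(1-\gamma)\mathbf{x}$. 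Then $\mathbb{E}[\bon_R]=(1-\gamma)\mathbf{x}$ follows because each merge preserves expectations coordinatewise.

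\textbf{Concentration.} For a linear function $a(R)$ with $a_i\in[0,1]$, I would track the martingale $a$ applied to the current fractional point across the merge steps. Each step changes $a$ by at most the total $a$-weight of one component, which is $O(1/\gamma)$. The predictable quadratic variation is at most $O(1/\gamma)$ times the expected total. Freedman's (or Bernstein's) martingale inequality then gives tails of the form $e^{-\Omega(\gamma\eps^2\mu)}$, matching the stated $e^{-\mu\gamma\eps^2/20}$ after tracking constants. The cases $\mu\le\mathbb{E}$ and $\mu\ge\mathbb{E}$ follow by monotonicity of the tail in the mean.

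\textbf{Main obstacle.} I expect the hardest step to be the structural lemma bounding component sizes. One must show that after scaling by $(1-\gamma)$, one can always find a merge that swaps only $O(1/\gamma)$ elements at a time while remaining independent in both matroids. If this fails, my first fallback would be to discard elements from long alternating components, with probability proportional to $\gamma$. Discarding reduces the marginals by at most a $\gamma$ factor, and the expectation can then be restored to exactly $(1-\gamma)\mathbf{x}$ by independently subsampling each element.
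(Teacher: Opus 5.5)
Your architecture (convex decomposition of $\mathbf{x}$, sequential merging through the exchange graph, and a martingale argument in which each step moves only $O(1/\gamma)$ elements, which is where the $\gamma$ in the exponent comes from) is consistent with how the paper sketches this cited result. The gap is that the step carrying all the weight is assumed, and neither device you offer for it works.

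\textbf{The structural exchange step.} You note that a component is swappable only if it has no shortcuts. But nothing in your procedure produces shortcut-free components. A component of the union of an $M_1$-exchange matching and an $M_2$-exchange matching on $I\triangle J$ is in general not a feasible swap, however short it is. For example, take the graphic matroid on vertices $u,v,w,z$ with $I=\{uv,vw,wz\}$ and $J=\{uw,vz,uz\}$. The pairs $uv\leftrightarrow uw$, $vw\leftrightarrow vz$, $wz\leftrightarrow uz$ are each valid single exchanges for $I$. Yet $I-\{vw,wz\}+\{vz,uz\}=\{uv,vz,uz\}$ is a triangle. Taking $M_2$ free and pairing $vw\leftrightarrow uz$, $wz\leftrightarrow vz$ makes $\{vw,wz,vz,uz\}$ a $4$-element component whose swap destroys $M_1$-independence.

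Preserving marginals also needs more than one-sided feasibility. The exchanges applied to $I$ and to $J$ must balance element by element. For one matroid this is the symmetric exchange property, which fails for two matroids (already for bipartite matchings). You only discuss feasibility on one side.

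Your two devices do not supply this:
\begin{itemize}
\item Putting mass $\gamma$ on $\emptyset$ leaves $I\triangle J$ unchanged for every pair of nonempty sets in the decomposition. So it shortens nothing, and the ``either way'' is unjustified.
\item Cutting long components by dropping elements gives feasible pieces only if the component was shortcut-free to begin with.
\end{itemize}
What is needed is a collection of $O(1/\gamma)$-size exchanges between $I$ and $J$ that are feasible in both matroids, balanced between the two sides, and cost only a $\gamma$-fraction of marginal mass. That is exactly the nontrivial content of the cited theorem (the paper's sketch calls it ``a collection of paths and cycles representing feasible swaps''), and your ``main obstacle'' paragraph leaves it open.

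\textbf{Loss accounting and concentration.} The fallback's accounting does not close either. Under sequential merging an element can sit in the symmetric difference in many merges. If each merge may drop it with probability about $\gamma$, the losses compound over all those merges rather than totaling one factor $1-\gamma$. You need a per-element loss budget of $\gamma x_e$ over the whole process. Independent subsampling at the end can only lower marginals. Hitting exactly $(1-\gamma)x_e$ would also require knowing $\Pr[e\in R]$ for your process, which you do not show how to compute.

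Finally, with dropping, the tracked quantity is a supermartingale with random drift, which works against the lower tail. Freedman's inequality needs an almost-sure bound on the predictable quadratic variation, but yours scales with the random current value of $a$ rather than with $\mathbb{E}[a(R)]$. That part is repairable with a self-bounding exponential-moment argument; the structural exchange lemma cannot be skipped.
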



The algorithm for Theorem \ref{thm:swap_round} starts by decomposing $\mathbf{x}$ into a convex combination of common independent sets in $\mathcal{I}_1 \cap \mathcal{I}_2$, which can be computed in polynomial time (see \cite{schrijver_book}). The common independent sets are sequentially merged until only one remains, which is the output set $R$. Given two common independent sets $I, J \in \mathcal{I}_1 \cap \mathcal{I}_2$, the merge operation involves constructing an exchange graph between $I, J$ in $M_1 \cap M_2$, finding a collection of paths and cycles representing feasible swaps between the two sets, and picking one at random to perform, which decreases the size of the symmetric difference $I \Delta J$. This is repeated until the two sets are the same, giving the merged set of the original $I, J$.


\paragraph{Runtime:} 
A key property of SwapRound, which is not explicitly stated in \cite{ChekuriVZ11}, is that its runtime is polynomial in $n = |U|$ \textit{and} $1/\gamma$ \cite{chekuri_personal_communication}. The fact that the runtime is also polynomial in $1/\gamma$ allows us to show our algorithm for two-matroid intersection coloring when $\chimax$ is large is an FPRAS and not just a PRAS. This distinction allows us to show our $(1+\eps)$-approximation algorithm is efficient even for $\eps$ as small as $\text{poly}(1/n)$, instead of only for constant $\eps > 0$.

\paragraph{Utility of SwapRound:} Theorem \ref{thm:swap_round} implies powerful concentration properties and has found a wide variety of applications. In our setting, we apply Theorem \ref{thm:swap_round} in a few specific ways, namely we always use a uniform point $\alpha \bon \in P$; we only care about linear functions of the form $a_i \in \{0, 1\}$ for all $i$, which correspond to $a(R) = |R \cap S|$ for some subset of elements $S$; and we only care about the lower tail bound. We now state a corollary of Theorem \ref{thm:swap_round} which will be all that we need for our setting.   

\begin{corollary}\label{cor:swap_round}
Let $M_1$ and $M_2$ be matroids on ground set $U$ and $P = \mathcal{P}(M_1) \cap \mathcal{P}(M_2)$. Fix a constant $0 < \gamma \leq 1/2$. Then there is an efficient randomized rounding procedure, SwapRound($M_1,M_2,\alpha,\gamma$), which, given a point $\alpha \bon \in P$ outputs a common independent set $R \subseteq U$ satisfying:
\begin{enumerate}
    \item For all elements $e \in U$, we have $\Pr[e \in R] = (1-\gamma)\alpha$.
    \item For any subset $S \subseteq U$, if $\mu \leq \mathbb{E}[|R \cap S|]$, then for any $t > 0$ we have: 

\[ \Pr\left[|R \cap S| \leq \mu - t\right] \leq \exp\left(-\frac{\gamma t^2}{20\mu}\right)\]
\end{enumerate} 
\end{corollary}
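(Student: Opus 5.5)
The corollary follows by specializing Theorem~\ref{thm:swap_round} to the point $\mathbf{x}=\alpha\bon\in P$ and to $0/1$ linear functions. We define SwapRound$(M_1,M_2,\alpha,\gamma)$ as the procedure of Theorem~\ref{thm:swap_round} run on $\mathbf{x}=\alpha\bon$ with the given $\gamma$. Its output $R$ is a common independent set, and it runs in time polynomial in $n$ (and $1/\gamma$).

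\textbf{Part 1.} We use the expectation guarantee $\mathbb{E}[\bon_R]=(1-\gamma)\mathbf{x}$ coordinatewise. For each $e\in U$ this reads $\Pr[e\in R]=\mathbb{E}[(\bon_R)_e]=(1-\gamma)\alpha$.

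\textbf{Part 2.} Fix $S\sse U$ and set $a_i=1$ for $i\in S$ and $a_i=0$ otherwise. Then $a_i\in[0,1]$ and $a(R)=|R\cap S|$. Let $\mu\le\mathbb{E}[|R\cap S|]$ and $t>0$; we may assume $\mu>0$, since otherwise the bound is vacuous.
\begin{itemize}
\item If $t\le\mu$, put $\eps=t/\mu\in(0,1]$. The lower-tail bound of Theorem~\ref{thm:swap_round} gives
\[
\Pr\bigl[|R\cap S|\le\mu-t\bigr]=\Pr\bigl[a(R)\le(1-\eps)\mu\bigr]\le\exp\!\left(-\frac{\mu\gamma\eps^2}{20}\right)=\exp\!\left(-\frac{\gamma t^2}{20\mu}\right).
\]
\item If $t>\mu$, then $\mu-t<0$. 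Since $|R\cap S|\ge 0$, the event $|R\cap S|\le\mu-t$ is impossible, so its probability is $0$ and the bound holds trivially.
\end{itemize}

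The only points needing care are the range restriction $\eps\in[0,1]$, handled by the case split on $t$ versus $\mu$, and the degenerate case $\mu=0$. The result is a direct rewriting of the lower-tail bound.
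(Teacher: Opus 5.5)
Your proposal is correct and matches the paper's implicit derivation: specialize Theorem~\ref{thm:swap_round} to $\mathbf{x}=\alpha\bon$ and to the $0/1$ vector $a=\bon_S$, read Part~1 off the expectation guarantee coordinatewise, and get Part~2 from the lower-tail bound by substituting $\eps=t/\mu$, with $t>\mu$ trivial because $|R\cap S|\ge 0$. One small point: for $\mu=0$ the right-hand side is undefined rather than vacuous, so justify the case $\mu\le 0$ by noting that the event $|R\cap S|\le\mu-t<0$ is impossible, which your second bullet already covers.
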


Intuitively, \Cref{cor:swap_round} says that in the SwapRound procedure, the number of elements in the common independent set $R$ which intersect any given set $S$  has an exponential lower tail bound concentrating around the expectation.

Let SwapRound$(M_1, M_2, \alpha, \gamma)$ be the algorithm given by Corollary \ref{cor:swap_round}. When the matroids $M_1, M_2$ are implicit, we refer to the algorithm as SwapRound$(\alpha, \gamma)$.
Notice that~\Cref{matpoly} shows that, for any matroid $M$, the vector $\bon/q\in\Pc(M)$ for any $q\geq\chi(M)$. Hence, 
if $\chimax$ is the maximum chromatic number of $M_1$ and $M_2$, then $\alpha = 1/\chimax \in \mathcal{P}(M_1) \cap \mathcal{P}(M_2)$ and can be used as a parameter in SwapRound$(\alpha, \gamma)$.


\subsection{Algorithm}\label{subsec:asymp_alg}
In this section, we present our FPRAS for two-matroid intersection coloring when $\chimax$ is large. The algorithm proceeds in $\ell$ rounds. In each round, the algorithm samples several common independent sets using SwapRound. Each sampled common independent set will get its own color class in the final coloring. The elements in the sampled independent sets are removed from the ground set and the algorithm proceeds to the next round. Once the chromatic number of remaining elements becomes small enough (we choose $\ell$ so that this happens after $\ell$ rounds with high probability), the remaining elements are colored by applying the 2-approximation given by Theorem~\ref{thm:constructive2}. See~\Cref{alg:peeling}.

\begin{algorithm}[H]
{\small

\caption{FPRAS for Two-Matroid Intersection Coloring for Large $\chimax$}
\label{alg:peeling}

\begin{algorithmic}[1]
\Require Matroids $M_1 = (U, \mathcal{I}_1)$ and $M_2 = (U, \mathcal{I}_2)$, $\eps \in (0, 0.001)$
\State Initialize $\ell \gets \left\lceil\frac{\log \eps}{\log\left(1-\eps+100\eps^2\right)}\right\rceil$, $U^{(0)} \gets U$

\Statex
\For{$i = 0$ to $\ell - 1$} \Comment{Phase 1: Iterative Peeling}
    \State Let $M_1^{(i)}, M_2^{(i)}$ be $M_1, M_2$ restricted to $U^{(i)}$
    \State $\chimax^{(i)} \gets \max\left\{\chi\left(M_1^{(i)}\right), \chi\left(M_2^{(i)}\right)\right\}$
    \State $c_i \gets \lceil \eps \chimax^{(i)} \rceil$
    \State Sample $c_i$ common independent sets $\{I_1, \dots, I_{c_i}\}$ using $\text{SwapRound}\left(M_1^{(i)}, M_2^{(i)}, 1/\chimax^{(i)}, \eps\right)$
    \State $U^{(i+1)} \gets U^{(i)} \setminus \bigcup_j I_j$ \Comment{Remove all sampled elements}
\EndFor

\Statex
\State Let $\mathcal{C}_{final}$ be a 2-approximate covering of $U^{(\ell)}$ (Theorem~\ref{thm:constructive2}) \Comment{Phase 2: Final Clean-up}

\Statex
\State \textbf{Output:} All sampled sets from Phase 1 and the sets in $\mathcal{C}_{final}$
\end{algorithmic}
}
\end{algorithm}






\subsection{Analysis (Proof of Theorem \ref{thm:largec_technical})}\label{subsec:asymp_analysis}

In this section, we prove Theorem \ref{thm:largec_technical}. Notice that the overall cost of our algorithm (i.e. the number of color classes it produces) is the sum of the costs in Phase 1 and Phase 2. In Phase 1, we pay $c_i = \lceil \varepsilon \chimax^{(i)} \rceil$ in each iteration $i$. The cost of Phase 2 depends on the chromatic number of the leftover elements after $\ell$ such Phase 1 iterations. Our key claim is that, with high probability, the maximum chromatic number of the matroids drops by at least a roughly $1-\varepsilon$ factor in each iteration of Phase 1. This is formalized in \Cref{lemma:asymp_key}, and is enough to prove \Cref{thm:largec_technical}, as we show below. 

\begin{lemma}\label{lemma:asymp_key}
If $\chimax \geq \log n/\eps^5$ and the chromatic number $\chi$ of a matroid at the start of a round is at least $\eps \chimax$, then the chromatic number of the restricted matroid at the end of the round is at most $\left(1-\eps+100\eps^2\right)\chi$ with probability at least $1-1/n^{10}$. 
\end{lemma}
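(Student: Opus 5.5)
Fix a round, let $\chi=\chimax^{(i)}$ be the current maximum chromatic number and $c=\lceil\eps\chi\rceil$, and consider one of the two matroids $M$ with rank function $\rk$. By Fact~\ref{onematcol}, it suffices to show that with probability at least $1-1/n^{10}$, every $S\sse U^{(i)}$ satisfies
\[
|S\setminus \textstyle\bigcup_j I_j|\le (1-\eps+100\eps^2)\chi\cdot \rk(S).
\]
This is enough because the chromatic number is an integer and the bound is a ceiling. Sets with $|S|\le(1-\eps+100\eps^2)\chi\rk(S)$ satisfy this trivially. So the plan is to handle only the \emph{dense} sets $S$, those with $|S|>(1-\eps+100\eps^2)\chi\,\rk(S)$, and to show that the sampled sets hit each of them in many elements.

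Fix such an $S$. Let $X_j=|I_j\cap S|$. By Corollary~\ref{cor:swap_round} with $\alpha=1/\chi$ and $\gamma=\eps$, we have $\mathbb{E}[X_j]=(1-\eps)|S|/\chi\ge(1-\eps)(1-\eps)\rk(S)\approx(1-2\eps)\rk(S)$. The elements removed from $S$ number $|S\cap\bigcup_j I_j|\ge\sum_j X_j-(\text{overlaps})$. I would handle overlaps by running the $c$ SwapRounds sequentially on the \emph{already-reduced} ground set. Alternatively, and more cleanly, I would bound the union from below directly: each $X_j\le \rk(S)$, and I would show the union loses at most $O(\eps)\cdot c\,\rk(S)$ to overlaps, because $c=\eps\chi$ sets each of density $1/\chi$ overlap in expectation only an $O(\eps^2)$ fraction. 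The target is
\[
|S\cap\textstyle\bigcup_j I_j|\ge c(1-O(\eps))\rk(S)\ge \eps\chi(1-O(\eps))\rk(S).
\]
Then
\[
|S\setminus\textstyle\bigcup_j I_j|\le |S|-\eps\chi(1-O(\eps))\rk(S)\le \chi\rk(S)(1-\eps+O(\eps^2)),
\]
using $|S|\le\chi\rk(S)$. The constant $100$ absorbs the $O(\eps^2)$ terms.

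For concentration, I apply the tail bound of Corollary~\ref{cor:swap_round} to each $X_j$ with $\mu=(1-2\eps)\rk(S)$ and $t=\eps\rk(S)$. This gives failure probability $\exp(-\eps^3\rk(S)/20)$. The main obstacle is the union bound over exponentially many sets $S$. I would restrict attention to \emph{closed} sets (flats), since for a fixed rank, replacing $S$ by its closure only increases $|S|$ without changing $\rk(S)$, making the inequality harder to satisfy. Flats of rank $r$ are determined by a basis, so there are at most $n^{r}$ of them. The failure probability per flat must therefore beat $n^{-r-12}$, which requires $\eps^3\chi$-type exponents.

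A per-$X_j$ bound with exponent $\eps^3 r$ is too weak against $n^r$. So I would instead concentrate the \emph{sum} $\sum_j X_j$ over the $c=\eps\chi$ independent rounds. Its mean is about $\eps\chi\rk(S)$, and deviation $t=\eps^2\chi\rk(S)$ gives an exponent of order $\eps\cdot\eps^4\chi^2 r^2/(\eps\chi r)\cdot\eps/20=\Theta(\eps^5\chi r)$. (Here a Chernoff-type bound for independent sums of bounded variables, each obeying the sub-Gaussian lower tail of Corollary~\ref{cor:swap_round}, is used via the mgf.) With $\chi\ge\eps\chimax\ge\log n/\eps^4$, this exponent beats $r\log n\cdot O(1)$ after adjusting constants. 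The overlap term is treated the same way: the overlap count for a fixed flat is an upper-tail quantity, controlled by Theorem~\ref{thm:swap_round}'s upper bound conditioned on earlier rounds. Finally, a union bound over both matroids and all flats yields probability $1-1/n^{10}$. The hardest step is this overlap/union-bound accounting; if the overlap bound proves awkward, I would fall back to arguing conditionally round by round on the remaining ground set.
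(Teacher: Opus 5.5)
\paragraph{Where the proposal breaks.} Your architecture is the paper's. Only dense sets matter, closures reduce everything to flats, there are at most $n^{r}$ flats of rank $r$, and you concentrate the number of covered elements of a fixed flat over the $c=\lceil\eps\chi\rceil$ independent SwapRound calls, then union bound. The gap is in the one quantitative step the union bound depends on.

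Your exponent $\Theta(\eps^5\chi r)$ has a spurious factor of $\eps$, because your expression multiplies by $\gamma=\eps$ twice. Taken at face value it breaks the argument. The hypothesis only gives $\chi\ge\eps\chimax\ge\log n/\eps^4$, so $\eps^5\chi r\ge\eps\, r\log n$. That cannot beat the $n^r$ flats, and adjusting absolute constants cannot recover a missing factor of $\eps$.

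The correct computation goes as follows. Each $X_j$ has a sub-Gaussian lower tail with variance proxy $10\mu_1/\eps$, where $\mu_1=(1-\eps)|S|/\chi$. Variance proxies add over the $c\approx\eps\chi$ independent calls to about $10|S|$. Hence $\Pr[\sum_j X_j\le c\mu_1-t]\le\exp(-t^2/(20|S|))$, which has exponent of order $\eps^4\chi r$ for $t=\Theta(\eps^2\chi r)$.

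Even then the constant is decisive. With your $t=\eps^2\chi r$ and $|S|\le\chi r$, the per-flat bound is only about $n^{-r/20}$. You need $t=K\eps^2\chi r$ for a sizable constant $K$. That, more than absorbing the $O(\eps^2)$ losses in the mean, is what the $100\eps^2$ slack is for. The paper takes $q=100$ in Lemma~\ref{lemma:asymp_subset} to get $\exp(-100\eps^4|S|)\le n^{-50r}$ per flat.

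\paragraph{Overlaps: a genuinely different route.} The paper never bounds overlaps. It tracks $\Delta_j=|I_j\cap(S\setminus\bigcup_{h<j}I_h)|$ and splits on whether $\sum_{h<j}\Delta_h\ge\eps|S|$.
\begin{itemize}
\item If so, the target is already met.
\item If not, then given the history the uncovered part of $S$ is a fixed set of size at least $(1-\eps)|S|$. Corollary~\ref{cor:swap_round} then gives conditional mean at least $(1-2\eps)|S|/\chi$ with a sub-Gaussian lower tail, and the paper sums these conditionally sub-Gaussian increments.
\end{itemize}
This is essentially your fallback, and it needs only lower tails.

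Your primary route also works: independent $X_j$ plus a separate upper-tail bound on the overlap $\sum_j|I_j\cap S\cap\bigcup_{h<j}I_h|$.
\begin{itemize}
\item Since $X_h\le\rk(S)$ deterministically, each conditional mean is at most about $\eps\rk(S)$.
\item The expected total overlap is therefore about $\eps^2\chi\rk(S)/2$.
\item The upper tail of Theorem~\ref{thm:swap_round} gives failure probability $\exp(-\Omega(\eps^3|S|))$, far more than needed. Because the per-call means are small, you must use that tail in its sub-exponential regime.
\end{itemize}
This buys genuine independence for the main sum, at the price of a second, martingale-type concentration argument that the paper's decomposition avoids.

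Your other option, running the SwapRounds on the already-reduced ground set, changes Algorithm~\ref{alg:peeling}. That algorithm draws all $c_i$ sets independently on $U^{(i)}$, so this option would not prove the lemma as stated.

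\paragraph{A small point.} Your integrality remark points the wrong way. Let $y=(1-\eps+100\eps^2)\chi$. From $|S'|\le y\,\rk(S')$ for all $S'$, Fact~\ref{onematcol} gives only $\chi'\le\lceil y\rceil$, not $\chi'\le y$. The paper's proof makes the same slip. It costs at most an additive $1$ per round and is harmless for Theorem~\ref{thm:largec_technical}.
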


We prove \Cref{lemma:asymp_key} shortly, but first we show how this leads to \Cref{thm:largec_technical}.

\begin{proof}[Theorem \ref{thm:largec_technical}]
The chromatic number of each matroid after $i \leq \ell$ rounds is at most $\left(1-\eps+100\eps^2\right)^i \chimax$ with probability at least $1-2/n^{9} \geq 1-1/n^2$ by union bounding over Lemma \ref{lemma:asymp_key}. Thus the total number of sampled sets is at most
\[ \sum_{i=0}^{\ell-1} c_i \leq \ell+\sum_{i=0}^{\ell-1} \eps \chimax^{(i)} \leq \ell + \eps \cdot \sum_{i=0}^{\ell-1} \left(1-\eps+100\eps^2\right)^i \chimax \leq \eps \cdot \sum_{i=0}^{\infty} \left(1-\eps+100\eps^2\right)^i \chimax \]

\[\leq \frac{\eps \chimax}{1-(1-\eps+100\eps^2)} = \frac{\eps \chimax}{\eps-100\eps^2} < (1+200\eps)\chimax\]

where we use $\eps < 0.001$ in the final inequality. After $\ell$ rounds, the chromatic number of the restricted matroids is at most
\[ \left(1-\eps+100\eps^2\right)^\ell \chimax = \exp\left(\ell\log\left(1-\eps+100\eps^2\right)\right)\chimax \leq \exp(\log \eps)\chimax = \eps\chimax\]

Thus the $2$-approximate covering will use $2\eps\chimax$ sets, and so the algorithm will return a covering of $M_1 \cap M_2$ using at most $(1+200\eps)\chimax + 2\eps\chimax < (1+400\eps)\chimax$ sets with probability at least $1-1/n^2$.
\end{proof}


We now turn to proving \Cref{lemma:asymp_key}. Recall that for a single matroid $\mat=(\gset,\inds)$, we have the explicit formula $\chi(\mat)=\max_{S\sse\gset}\ceil{\frac{|S|}{\rk_{\mat}(S)}}$ (\Cref{onematcol}). 
It is easy to see that in the  maximization expression, we can restrict to ``flats" of the matroid, where a flat is a maximal subset of a given rank. Thus, to argue that the chromatic number of each matroid drops after an iteration of sampling, we need only prove that a large fraction of \textit{all} high-density flats are covered by the sampled sets. There may be exponentially many such subsets, so care is required in order to make a union bound work. 

Hence, our next step is to show that in each iteration of Phase 1, and for every subset $S$ of the current ground set, the set of sampled elements $T$ covers (approximately) an $\eps$ fraction of $S$. Importantly, we establish an exponential tail bound on the probability that this does not occur. This is formalized in \Cref{lemma:asymp_subset}. 


\begin{lemma}\label{lemma:asymp_subset}
Let $i \in [0, \ell-1]$ be a given round. Let $T_i$ be the union of the sampled common independent sets in the $i$'th round. If $\chimax^{(i)} \geq \eps\chimax$, then for all subsets $S \subseteq U^{(i)}$,

\[ \Pr\left[|T_i \cap S| \leq \left(\eps-q\eps^2\right)|S|\right] \leq \exp\left(- q^2\eps^4|S|/100\right)\]

for all $q \geq 4$.
\end{lemma}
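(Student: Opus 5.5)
The plan is to analyze the $c_i=\lceil\eps\chimax^{(i)}\rceil$ calls to SwapRound sequentially, as a martingale, rather than through inclusion--exclusion. The trouble with inclusion--exclusion is that the pairwise overlaps $|I_j\cap I_{j'}\cap S|$ would need an \emph{upper} tail bound, while \Cref{cor:swap_round} only supplies lower tails.

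Write $\chi=\chimax^{(i)}$, $c=c_i$, and $S_j=S\setminus(I_1\cup\dots\cup I_{j-1})$. Then $|T_i\cap S|=\sum_{j=1}^{c}X_j$ with $X_j=|I_j\cap S_j|$. The calls are independent and each is made on the whole ground set $U^{(i)}$. So, conditioned on $I_1,\dots,I_{j-1}$, the set $S_j$ is fixed, and \Cref{cor:swap_round} applies to $I_j$ and $S_j$. It gives $\mathbb{E}[X_j\mid\text{past}]=(1-\eps)|S_j|/\chi$, together with the lower tail $\Pr[X_j\le\mu_j-t\mid\text{past}]\le\exp(-\eps t^2/(20\mu_j))$.

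First I reduce to the event that few elements have been covered. If at some step $|T_i\cap S|$ already exceeds $(\eps-q\eps^2)|S|$, we are done, since $|T_i\cap S|$ only grows. Otherwise $|S_j|\ge(1-\eps)|S|$ for every $j$. Formally, I stop the process at the first time the threshold is crossed, and replace $S_j$ by a fixed-size subset of size $\lceil(1-\eps)|S|\rceil$ so that each conditional mean is deterministic. Then each conditional mean is at least $\mu:=(1-\eps)^2|S|/\chi$. Using $c\ge\eps\chi$, the total mean satisfies $c\mu\ge\eps(1-2\eps)|S|$. The target $(\eps-q\eps^2)|S|$ therefore sits below $c\mu$ by at least $t_{\rm tot}:=(q-2)\eps^2|S|-O(\eps^3|S|)\ge (q/3)\eps^2|S|$ for $q\ge4$.

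Next I convert the per-step sub-Gaussian lower tail into a bound on the conditional moment generating function. The tail bound says that $\mu_j-X_j$ has a lower tail like a Gaussian with variance proxy $\sigma^2=10\mu_j/\eps$. Integrating the tail, using that $X_j\ge0$ is bounded, gives $\mathbb{E}[e^{-\lambda(X_j-\mu_j)}\mid\text{past}]\le\exp(C\lambda^2\sigma^2)$ for every $\lambda\ge0$ and an absolute constant $C$. Multiplying these bounds along the martingale and applying Markov's inequality to $e^{-\lambda\sum_j(X_j-\mu_j)}$ gives
\[
\Pr\Bigl[\textstyle\sum_j X_j\le\sum_j\mu_j-t_{\rm tot}\Bigr]\le\exp\Bigl(-\frac{t_{\rm tot}^2}{4C\sum_j\sigma_j^2}\Bigr).
\]
Here $\sum_j\sigma_j^2\le c\cdot10|S|/(\eps\chi)\approx10|S|$. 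The $1/\chi$ factors cancel precisely because $c\approx\eps\chi$, which is what makes the bound independent of $\chi$. With $t_{\rm tot}\ge(q/3)\eps^2|S|$, the exponent becomes $\Omega(q^2\eps^4|S|)$, which is the claimed form.

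The main obstacle is constant bookkeeping: extracting a clean MGF bound from the tail bound of \Cref{cor:swap_round}, and then fitting the exponent under $q^2\eps^4|S|/100$. If the generic conversion constant turns out too lossy, I would instead go back into the proof of Theorem~\ref{thm:swap_round} in~\cite{ChekuriVZ11}, which derives its tails from an exponential-moment bound, and use that bound directly. The hypothesis $\chimax^{(i)}\ge\eps\chimax$, together with the rounding in $c=\lceil\eps\chi\rceil$, is used to guarantee $c\ge\eps\chi$ and $c\le2\eps\chi$, which keeps $\sum_j\sigma_j^2=O(|S|)$.
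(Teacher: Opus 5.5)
Your route is the paper's. You write $|T_i\cap S|$ as the sum of the new-coverage increments $X_j$ of the $c_i$ SwapRound calls. You lower-bound each conditional mean by $(1-\eps)^2|S|/\chimax^{(i)}$ while fewer than $\eps|S|$ elements of $S$ are covered. You then sum the increments as conditionally sub-Gaussian variables with proxy $10|S|/(\eps\chimax^{(i)})$. Your stopping-time device with fixed-size subsets of $S_j$ handles the random conditional means more cleanly than the paper's two-case argument.

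The step that fails is the claim that integrating the lower tail of \Cref{cor:swap_round}, using only $X_j\ge0$ and boundedness, gives $\mathbb{E}[e^{-\lambda(X_j-\mu_j)}\mid\text{past}]\le e^{C\lambda^2\sigma^2}$ for all $\lambda\ge0$. This is not a matter of constants. For small $\lambda$ the left side is $1+\tfrac{\lambda^2}{2}\mathrm{Var}(X_j)+O(\lambda^3)$, and $\mathrm{Var}(X_j)$ is governed by the \emph{upper} tail of $X_j$, about which a lower-tail bound says nothing.

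For example, take $\sigma^2=10\mu/\eps$ and $\mu\ge\sigma$. Let $X=\mu-\sigma$ with probability $e^{-1/2}$, $X=\mu+M$ with probability $p=e^{-1/2}\sigma/M$, and $X=\mu$ otherwise. Then $X\ge0$, $\mathbb{E}X=\mu$, and $\Pr[X\le\mu'-t]\le e^{-\eps t^2/(20\mu')}$ for all $\mu'\le\mu$ and $t>0$, so $X$ has every property your argument uses. Yet for $c$ independent copies with $pc\to0$, with probability tending to $1$ no copy equals $\mu+M$. The sum then lies about $e^{-1/2}c\sigma$ below $c\mu$, which contradicts any bound $\exp(-t^2/(4Cc\sigma^2))$ for large $c$. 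Take $\mu\approx|S|/\chimax^{(i)}$, $c\approx\eps\chimax^{(i)}$ and $M$ of order $|S|$. Then this deficit exceeds $q\eps^2|S|$ whenever $11/\eps\le|S|/\chimax^{(i)}\le 1/(q^2\eps^3)$, so in that range the lemma cannot be obtained from \Cref{cor:swap_round} alone.

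You need upper-tail information on a single SwapRound output. The most natural source is the exponential-moment bound from which \cite{ChekuriVZ11} derive their tails, so your stated fallback is required rather than optional. The paper's proof makes the same leap when it declares $\Delta_j$ sub-Gaussian on the strength of the lower tail alone.

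Finally, even with a Gaussian-type bound ($C=1/2$), reaching the constant $1/100$ needs the full deficit $(q-2)\eps^2|S|\ge(q/2)\eps^2|S|$, which yields $1/80$. Your $(q/3)\eps^2|S|$ yields only $1/180$.
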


\begin{proof}
Let $S \subseteq U^{(i)}$ be arbitrary. Let $I_1, I_2, \dots, I_{c_i}$ be the sampled independent sets in the $i$'th round. Let $\Delta_j = \left|I_j \cap \left(S \setminus \cup_{h < j} I_h\right)\right|$ be the number of new elements covered by $I_j$ in $S$ for $j = 1, 2, \dots, c_i$. Note that $T_i = \cup I_j$ and $|T_i \cap S| = \sum \Delta_j$.

Before the $j$'th set is sampled, there are two possibilities:

\begin{enumerate}
    \item \textbf{Many Elements Sampled:} $\sum_{h < j} \Delta_h \geq \eps|S|$. In this case, we have already guaranteed $|T_i \cap S| \geq \left(\eps-q\eps^2\right)|S|$ for all $q \geq 4$.
    \item \textbf{Few Elements Sampled:} $\sum_{h < j} \Delta_h < \eps|S|$. In this case, at least $(1-\eps)|S|$ elements of $S$ are still unsampled. Thus
    \[ \mathbb{E}[\Delta_j] \geq (1-\eps)(1-\eps)\frac{|S|}{\chimax^{(i)}} > (1-2\eps)\frac{|S|}{\chimax^{(i)}} \]

    and
    \[  \Pr[\Delta_j \leq \mathbb{E}[\Delta_j] - t] \leq \exp\left(-\frac{\eps t^2}{20\mathbb{E}[\Delta_j]}\right) < \exp\left(-\frac{\eps \chimax^{(i)} \cdot t^2}{20|S|}\right) \]

    via Corollary \ref{cor:swap_round} and using $\mathbb{E}[\Delta_j] < \frac{|S|}{\chimax^{(i)}}$. Thus $\Delta_j$ is a subgaussian random variable with mean $\mu \geq (1-2\eps)\frac{|S|}{\chimax^{(i)}}$ and variance proxy $\sigma^2 = \frac{10|S|}{\eps \chimax^{(i)}}$.
\end{enumerate}

The probability that $|T_i \cap S| \leq \left(\eps-q\eps^2\right)|S|$ is equal to the probability that $\sum \Delta_j \leq \left(\eps-q\eps^2\right)|S|$ where each $\Delta_j$ is a conditionally subgaussian random variable with mean $\mu \geq (1-2\eps)\frac{|S|}{\chimax^{(i)}}$ and variance proxy $\sigma^2 = \frac{10|S|}{\eps \chimax^{(i)}}$. Thus $\sum \Delta_j$ is a subgaussian random variable with mean $\bar{\mu} = \lceil \eps \chimax^{(i)} \rceil \cdot \mu \geq (\eps-2\eps^2)|S|$ and variance proxy $\bar{\sigma}^2 = \lceil \eps \chimax^{(i)} \rceil \cdot \sigma^2 \geq 10|S|$. Thus

\[ \Pr\left[\sum_j \Delta_j \leq \left(\eps - 2\eps^2\right)|S| - t\right] \leq \exp\left(-\frac{t^2}{20|S|}\right)\]

Plugging in $t = q/2 \cdot \eps^2|S|$ and using $q/2+2 \leq q$ gives the desired result.
\end{proof}

With the concentration bound for individual subsets established, we now provide the global argument for the reduction of the chromatic number on each round. The strategy is to show that if the chromatic number does not drop sufficiently, there must exist a specific structural witness, a ``high-density flat", that was not well-covered by our sampled sets. By applying a union bound over the set of all such flats and utilizing the local guarantee from Lemma~\ref{lemma:asymp_subset}, we conclude that the chromatic number drops sufficiently with high probability.

\begin{proof}[Lemma \ref{lemma:asymp_key}] Fix one of the matroids at the start of a round let $\chi$  denote its chromatic number. Define a \emph{high-density flat} as a flat $S$ in the matroid s.t. $|S| > \left(1-\eps+100\eps^2\right)\chi r(S)$ where $r$ is the rank function of the matroid. We first show that if the chromatic number of the restricted matroid at the end of the round is greater than $\left(1-\eps+100\eps^2\right)\chi$, then there exists a high-density flat $S$ in the start matroid s.t. $|T \cap S| < \left(\eps-100\eps^2\right) \chi r(S)$ where $T$ is the union of the sampled common independent sets during the round. We then union bound this event over all high-density flats $S$ and apply Lemma \ref{lemma:asymp_subset}.

First, suppose the chromatic number of the restricted matroid at the end of the round is greater than $\left(1-\eps+100\eps^2\right)\chi$. Then the restricted matroid contains a subset $S'$ s.t. $|S'| > \left(1-\eps+100\eps^2\right)\chi r(S')$ via Fact~\ref{onematcol}. Let $S$ be the span of $S'$ in the start matroid. Then $S$ is a high-density flat in the start matroid, because $S$ is a flat (since it is defined as the span of a collection of elements) and $|S| > \left(1-\eps+100\eps^2\right)\chi r(S)$ (since $|S| \geq |S'|$ and $r(S) = r(S')$). Further, $S' \subseteq S \setminus T$, giving $|T \cap S| = |S| - |S \setminus T| \leq |S| - |S'|$. Because $|S| \leq \chi r(S)$ via Fact~\ref{onematcol}, we have

\[ |T \cap S| \leq |S| - |S'| < \chi r(S) - \left(1-\eps+100\eps^2\right)\chi r(S') = \left(\eps-100\eps^2\right)\chi r(S)\]

Next, let $\mathcal{S}$ be the collection of all high-density flats in the start matroid, and let $\chi'$ be the chromatic number of the restricted matroid. Then

\begin{align}
\Pr\left[\chi' > \left(1-\eps+100\eps^2\right)\chi\right] &\leq \Pr\left[\exists S \in \mathcal{S} : |T \cap S| \leq \left(\eps-100\eps^2\right)|S|\right] \label{eq:asymp_1} \\
&\leq \sum_{S \in \mathcal{S}} \Pr\left[|T \cap S| \leq \left(\eps-100\eps^2\right)|S|\right] \label{eq:asymp_2} \\
&\leq \sum_{S \in \mathcal{S}} \exp\left(-100\eps^4|S|\right) \label{eq:asymp_3} \\
&= \sum_{m=1}^{n} \sum_{S \in \mathcal{S} : r(S) = m} \exp\left(-100\eps^4|S|\right) \label{eq:asymp_4} \\
&\leq \sum_{m=1}^{n} \sum_{S \in \mathcal{S} : r(S) = m} \exp\left(-50\eps^4 \chi m\right) \label{eq:asymp_5} \\
&\leq \sum_{m=1}^n n^{m} \exp\left(-50\eps^4 \chi m\right)  \label{eq:asymp_6} \\
&\leq \sum_{m=1}^n \exp(m \log n - 50 m \log n) \label{eq:asymp_7} \\
&\leq \sum_{m=1}^n n^{-20m} \label{eq:asymp_8} \\
&\leq n^{-10} \label{eq:asymp_9}
\end{align}

Line (\ref{eq:asymp_1}) follows from the previous paragraph. Line (\ref{eq:asymp_2}) follows by a union bound. Line (\ref{eq:asymp_3}) follows from Lemma \ref{lemma:asymp_subset}. Line (\ref{eq:asymp_4}) follows by partitioning the high-density flats by rank. Line (\ref{eq:asymp_5}) follows by $|S| > \left(1-\eps+100\eps^2\right) \chi m > \chi m/2$. Line (\ref{eq:asymp_6}) follows from the fact that a flat of rank $m$ is defined as the span of an independent set of size $m$, and so the number of flats of rank $m$ in a matroid with $n$ elements is at most $\binom{n}{m} \leq n^{m}$. Line (\ref{eq:asymp_7}) follows by $\chi \geq \eps \chimax \geq \log n/\eps^4$. Line (\ref{eq:asymp_8}) follows by computation. Line (\ref{eq:asymp_9}) follows from the fact that the sum contains $n$ terms of size at most $n^{-20}$.
\end{proof}

\paragraph{Runtime of Algorithm \ref{alg:peeling}.} Lastly, we remark on the runtime of Algorithm \ref{alg:peeling}. Phase 1 involves $\ell = O\left(\frac{\log(1/\eps)}{\eps}\right)$ rounds. In each round $i$, we call SwapRound $c_i = \lceil \eps \chimax^{(i)}\rceil = O(\eps n)$ times, and run for poly$(n)$ time to compute the chromatic numbers of $M_1^{(i)}, M_2^{(i)}$. The runtime of $\text{SwapRound}\left(M_1^{(i)}, M_2^{(i)}, 1/\chimax^{(i)}, \eps\right)$ is polynomial in $n, 1/\eps$ as discussed in the background section. Thus the runtime of Phase 1 is polynomial in $n, 1/\eps$. The runtime of Phase $2$ is polynomial in $n$, so the runtime of Algorithm \ref{alg:peeling} is polynomial in $n, 1/\eps$.

\appendix

\section{Algorithmic Proof of Brooks' Theorem
  (Theorem~\ref{thm:brooks-thm})} \label{append-iterrnd} 

For completeness, we describe an algorithmic proof of Brooks' theorem due to Lovasz~\cite{lovasz1975three}.

The key to the proof is finding vertices $a$, $b$, $v \in V(G)$ such that
$v$ is adjacent to both $a$ and $b$, the pair $a$ and $b$ are non-adjacent, and $G \setminus \{a, b\}$ is connected. Having found $a$, $b$, and $v$, we can construct an ordering of $V(G)$ such that the greedy coloring uses at most $\Delta$ colors.

Since $G \setminus \{a, b\}$ is connected, we may list its vertices as
$x_1 = v, x_2, \ldots, x_{n-2}$ so that each $x_i$ with $i \geq 2$ is
adjacent to some $x_j$ with $j < i$. We now define a $\Delta$-coloring of $G$. Assign $a$ and $b$ color $1$. Then color
$x_{n-2}, x_{n-3}, \ldots, x_2$ greedily in reverse order with colors from
$\{1, \ldots, \Delta\}$. When coloring $x_i$ for $i \geq 2$, at least
one of its neighbors remains uncolored, hence $x_i$ has at most $\Delta - 1$
previously colored neighbors, and a free color is always available.
To color the final vertex $x_1 = v$, we notice that  although $v$ may have $\Delta$ neighbors, $a$ and $b$ share a color and hence $v$ can be colored in $\{1,\ldots,\Delta\}$ as well.

We turn to showing that there exists vertices $a$, $b$, $v \in V(G)$ such that
$v$ is adjacent to $a$ and $b$, and $a$ and $b$ are non-adjacent, and $G \setminus \{a, b\}$ is connected. These vertices can be found in polynomial time by enumeration. 

We may assume $G$ is 2-vertex-connected. Otherwise, we color each 2-vertex-connected block separately and reconcile colors at cut vertices.
If $G$ is 3-vertex-connected, then
since $G \neq K_{\Delta+1}$ some vertex $v$ has two non-adjacent neighbors
$a$ and $b$, and 3-vertex-connectivity ensures $G \setminus \{a, b\}$ is connected.
If $G$ is 2-vertex-connected but not 3-vertex-connected, let $x$ be a vertex that is
not adjacent to every other vertex and has degree at least $3$. This can be assumed to exist, see~\cite{baetz2014brooks}. If $G \setminus \{x\}$ is 2-vertex-connected, set $a = x$, let $b$ be any vertex at
distance $2$ from $x$, and let $v$ be a common neighbor. Then
$G \setminus\{a,b\}$ is connected since $G\setminus \{x\}$ is 2-vertex-connected.
Finally, if $G \setminus \{x\}$ has a cut vertex, take two leaves of the block-cut tree $B_1$, $B_2$ with attachment vertices $z_1$, $z_2$. Since $G$ is 2-vertex-connected, $x$
has a neighbor $a \in B_1 \setminus \{z_1\}$ and a neighbor $b \in B_2 \setminus
\{z_2\}$. Now setting $v = x$, and taking $a$ and $b$ satisfy the properties as desired.
\hfill \qed



\section{Proof of Theorem \ref{thm:largec_clean}}
\label{app:largec_techclean} 

If $\ve\geq 1$, then the $(1+\ve)\chimax$ bound follows from Theorem~\ref{twomatcol}, for any $\chimax$. So suppose $\ve\in(0,1)$.
Let $A_\eps$ be the randomized algorithm given by Theorem \ref{thm:largec_technical}. 
We take $C = 1000^5$, and the randomized algorithm in the  statement of Theorem~\ref{thm:largec_clean} to be $A_{\eps/1000}$. 
(More precisely, the randomized algorithm is $A_{\ve/1000}$, for $\ve\in(0,1)$, and the algorithm in Theorem~\ref{twomatcol} if $\ve\geq 1$).

We now show that this randomized algorithm satisfies the guarantee of Theorem \ref{thm:largec_clean} for all $\eps > 0$. We can focus on $\ve\in(0,1)$.
When $\chimax \geq C \log n / \eps^5 = \log n / (\eps / 1000)^5$, using Theorem~\ref{thm:largec_technical}, $A_{\eps/1000}$ produces a feasible coloring of $\Mint = M_1 \cap M_2$ using at most $(1+400(\eps/1000))\chimax \leq (1+\eps)\chimax$ colors with probability at least $1-1/n^2$.

Lastly, we can convert the probability of $1-1/n^2$ into a high probability statement by running the algorithm $O(\log n)$ times and outputting the minimum coloring. 
\hfill\qed

\bibliography{bib}
\bibliographystyle{alpha}

\end{document}